\documentclass[11pt]{article}

\usepackage{fullpage}



\usepackage{amsfonts}
\usepackage{amssymb}
\usepackage{amstext}
\usepackage{amsmath}
\usepackage{mathtools}

\usepackage{tikz}
\usetikzlibrary{calc}
\usetikzlibrary{shapes,arrows,positioning,shadows,snakes}

\usepackage{amsthm} 

\usepackage{nameref}
\usepackage[linktocpage=true,pagebackref=true]{hyperref}
\usepackage{cleveref}

\usepackage{thmtools,thm-restate} 


\usepackage{graphicx}
\usepackage{graphics}
\usepackage{colordvi}
\usepackage{xspace}
\usepackage{algorithm}
\usepackage{algorithmicx}
\usepackage{algpseudocode}
\usepackage{url}
\usepackage{enumitem}





%
        {\hspace*{\fill}$\Box$\par\vspace{4mm}}

\makeatletter
\def\thmt@refnamewithcomma #1#2#3,#4,#5\@nil{%
  \@xa\def\csname\thmt@envname #1utorefname\endcsname{#3}%
  \ifcsname #2refname\endcsname
    \csname #2refname\expandafter\endcsname\expandafter{\thmt@envname}{#3}{#4}%
  \fi
}
\makeatother

\declaretheorem[numberwithin=section]{theorem}
\declaretheorem[numberlike=theorem]{lemma}
\declaretheorem[numberlike=theorem]{corollary}
\declaretheorem[numberlike=theorem]{proposition}

\declaretheorem[numberlike=theorem,refname={assumption,assumptions},Refname={Assumption,Assumptions}]{assumption}
\declaretheorem[numberlike=theorem,refname={Claim, Claims},Refname={Claim, Claims}]{claim}
\declaretheorem[numberlike=theorem,refname={hypothesis,hypothesis},Refname={Hypothesis,Hypothesis}]{hypothesis}
\declaretheorem[numberlike=theorem]{definition}

%
%

\newcommand{\BBIS}{{\sf BBIS}\xspace}

\newcommand{\sat}{{\sf SAT}\xspace}

\newcommand{\yi}{{\sc Yes-Instance}\xspace}
\renewcommand{\ni}{{\sc No-Instance}\xspace}



\newcommand{\ceil}[1]{\ensuremath{\left\lceil#1\right\rceil}}

\newcommand{\prof}{\ensuremath{\operatorname{profit}}}
\newcommand{\pay}{\ensuremath{\operatorname{pay}}}


\renewcommand{\P}{\mbox{\sf P}}
\newcommand{\NP}{\mbox{\sf NP}}

\newcommand{\ZPP}{\mbox{\sf ZPP}}
\newcommand{\polylog}[1]{\mathrm{polylog}(#1)}
\newcommand{\DTIME}{\mbox{\sf DTIME}}

\newcommand{\ZPTIME}{\mbox{\sf ZPTIME}}

\newcommand{\opt}{\mbox{\sf OPT}}

\newcommand{\set}[1]{\left\{ #1 \right\}}

\newcommand{\iset}{{\mathcal{I}}}
\newcommand{\pset}{{\mathcal{P}}}

\newcommand{\aset}{{\mathcal{A}}}
\newcommand{\cset}{{\mathcal{C}}}

\newcommand{\mset}{{\mathcal M}}


\newcommand{\be}{\begin{enumerate}}
\newcommand{\ee}{\end{enumerate}}
\newcommand{\bd}{\begin{description}}
\newcommand{\ed}{\end{description}}
\newcommand{\bi}{\begin{itemize}}
\newcommand{\ei}{\end{itemize}}
\newcommand{\bip}{\mbox{B}\xspace}
\newcommand{\bipm}{\ensuremath{\bip}}
\newcommand{\bipp}{\ensuremath{\bip}}




\renewcommand{\phi}{\varphi}

\newcommand{\poly}{\operatorname{poly}}

\newcommand{\reals}{{\mathbb R}}

\newcommand{\R}{\ensuremath{\mathbb R}}

\setlength{\parskip}{2mm} \setlength{\parindent}{0mm}

\newcommand{\induce}[1]{\ensuremath{{\sf im}(#1)}\xspace}
\newcommand{\sinduce}[1]{\ensuremath{{\sf sim}(#1)}\xspace}
\newcommand{\sinducesigma}[2]{\ensuremath{{\sf sim}_{#1}(#2)}\xspace}

\newcommand{\SMP}{{\sf SMP}\xspace}
\newcommand{\UDP}{{\sf UDP}\xspace}
\newcommand{\semiind}{{\sf Semi-Induced Matching}\xspace}

\newcommand{\p}{{\bf p}}



\ifdefined\ShowComment

\def\danupon#1{\marginpar{$\leftarrow$\fbox{D}}\footnote{$\Rightarrow$~{\sf #1 --Danupon}}}
\def\danuponb#1{{\bf Danupon:} #1}
\def\parinya#1{\marginpar{$\leftarrow$\fbox{P}}\footnote{$\Rightarrow$~{\sf #1 --Parinya}}}
\def\bundit#1{\marginpar{$\leftarrow$\fbox{B}}\footnote{$\Rightarrow$~{\sf #1 --Bundit}}}

\else

\def\danupon#1{}
\def\danuponb#1{}
\def\parinya#1{}
\def\bundit#1{}

\fi



\title{Independent Set, Induced Matching, and Pricing: Connections and Tight (Subexponential Time) Approximation Hardnesses}

\date{}

\author{ 
	      Parinya Chalermsook\thanks{Max-Planck-Institut f\"ur Informatik, Germany. 
          Work partially done while at IDSIA, Lugano, Switzerland.
          Supported by the Swiss National Science Foundation project 200020\_144491/1}
 \and
        Bundit Laekhanukit\thanks{McGill University, Canada. 
          Supported by the Natural Sciences and
          Engineering Research Council of Canada (NSERC) grant
          no.~429598 and by Dr\&Mrs M.Leong fellowship.} 
 \and
        Danupon Nanongkai\thanks{Nanyang Technological University (NTU), Singapore. Supported in part by the following research grants: Nanyang Technological University grant M58110000, Singapore Ministry of Education (MOE) Academic Research Fund (AcRF) Tier 2 grant MOE2010-T2-2-082, and Singapore MOE  AcRF Tier 1 grant MOE2012-T1-001-094.}
}

\begin{document}


\begin{titlepage}
\maketitle
\pagenumbering{roman}
\vspace{-.5cm}

\begin{abstract}
We present a series of almost settled inapproximability results for three fundamental problems. The first in our series is the {\em subexponential-time} inapproximability of the {\em maximum independent set} problem, a question studied in the area of {\em parameterized complexity}. The second is the hardness of approximating the {\em maximum induced matching} problem on bounded-degree bipartite graphs. The last in our series is the tight hardness of approximating the {\em $k$-hypergraph pricing} problem, a fundamental problem arising from the area of {\em algorithmic game theory}. In particular, assuming the Exponential Time Hypothesis, our two main results are:

\begin{itemize}
\item For any $r$ larger than some constant, any $r$-approximation algorithm for the maximum independent set problem must run in at least $2^{n^{1-\epsilon}/r^{1+\epsilon}}$ time.
This nearly matches the upper bound of $2^{n/r}$ ~\cite{CyganKPW08}. It also improves some hardness results in the domain of parameterized complexity (e.g., \cite{EscoffierKP12-FPT-approx,ChitnisTK13})\danupon{Should we mention \cite{ChitnisTK13}. Should we let them know first?}.

\item For any $k$ larger than some constant, there is no polynomial time $\min \set{k^{1-\epsilon}, n^{1/2-\epsilon}}$-approximation algorithm for the $k$-hypergraph pricing problem , where $n$ is the number of vertices in an input graph. This almost matches the upper bound of  $\min \set{O(k), \tilde O(\sqrt{n})}$  (by Balcan and Blum \cite{BB07} \danupon{in \cite{BB07} $\rightarrow$ \cite{BB07}} and an algorithm in this paper).
\end{itemize}

We note an interesting fact that, in contrast to $n^{1/2-\epsilon}$ hardness for polynomial-time algorithms\danupon{``for polynomial-time algorithms'' is added}, the $k$-hypergraph pricing problem admits $n^{\delta}$ approximation for any $\delta >0$ in quasi-polynomial time. This puts this problem\danupon{pricing problem $\rightarrow$ this problem} in a rare approximability class in which approximability thresholds can be improved significantly by allowing algorithms to run in quasi-polynomial time.

The proofs of our hardness results rely on unexpectedly\danupon{several unexpected $\rightarrow$ remove several} tight connections between the three problems. First, we establish a connection between the first and second problems by proving a new graph-theoretic property related to an {\em induced matching number of dispersers}\danupon{made ``disperser'' italic}. Then, we show that the $n^{1/2-\epsilon}$ hardness of the last problem follows from nearly tight {\em subexponential time} inapproximability of the first problem, illustrating a rare application of the second type of inapproximability result to the first one.  Finally, to prove the subexponential-time inapproximability of the first problem, we construct a new PCP with several properties; it is sparse and has nearly-linear size, large degree, and small free-bit complexity. Our PCP requires no ground-breaking ideas but rather a very careful assembly of the existing ingredients in the PCP literature.  
\end{abstract}

\newpage
\setcounter{tocdepth}{2}
\tableofcontents
\newpage
\listoftheorems
\end{titlepage}

\newpage
\pagenumbering{arabic}

\section{Introduction}
\label{sec:intro}

\danupon{To discuss: Where should we define the semi-induced matching problem?}

This paper presents results of two kinds, lying in the intersections between approximation algorithms and other subareas of theoretical computer science. 
The first kind of our results is a tight hardness of approximating the $k$-hypergraph pricing problem in polynomial time. 
This problem arose from the area of algorithmic game theory, and its several variants have recently received attentions from many researchers (see, e.g., \cite{Rusmevichientong2003,RusmevichientongRG06,GuruswamiHKKKM05,BriestK06,BB07,Briest08,PopatW12,ChalermsookCKK12})\danupon{Should we make this sentence a bit less agressive?}.  
It has, however, resisted previous attempts to improve approximation ratio given by simple algorithms\danupon{algorithm (singular)? ``the'' approximation ratio? Maybe we should be more explicit on what algorithms we're talking about. I guess they're from \cite{BriestK06,BB07}, right?}. 
Indeed, no sophisticated algorithmic techniques have been useful in attacking the problem in its general form. 
The original motivation of this paper is to show that those simple algorithms are, in fact, the best one can do under a reasonable complexity theoretic assumption. 
In showing this, we devise a new reduction from another problem studied in discrete mathematics and networking called the {\em maximum bipartite induced matching} problem. 
Our reduction, unfortunately, blows up the instance size {\em exponentially}, and apparently this blowup is unavoidable (this claim will be discussed precisely later). 
Due to the exponential blowup of our reduction, showing a {\em tight} polynomial-time hardness of approximating the maximum bipartite induced matching problem is not enough for settling the complexity of the pricing problem. 
What we need is, roughly speaking, the hardness of approximation result that is tight even for subexponential time approximation algorithms, i.e., proving the lower bound on the approximation ratio that any subexponential time algorithms can achieve.

This motivates us to prove the second type of results: {\em hardness of subexponential-time approximation}. 
The subject of subexponential time approximation and the closely related subject of {\em fixed-parameter tractable (FPT) approximation} have been recently studied in the area of parameterized complexity (e.g., \cite{CyganKPW08,FellowGMS12,EscoffierKP12-FPT-approx,ChitnisTK13}). 
Our main result of this type is a {\em sharp trade-off} between the running time and approximation ratio for the bipartite induced matching problem, and since our proof crucially relies on the hardness construction for the maximum independent set problem\danupon{Should we use abbreviation?}, we obtain a sharp trade-off for approximating the maximum independent set problem as a by-product. 
The maximum independent set problem is among fundamental problems studied in both approximation algorithms and FPT literature (since it is W[1]-hard), and it is of interest to figure out its subexponential-time approximability. 
Our trade-off result immediately answers this question, improves previous results in \cite{EscoffierKP12-FPT-approx,ChitnisTK13} and nearly matches the upper bound in \cite{CyganKPW08}.



The main contributions of this paper are the {\em nearly tight connections} among the aforementioned problems (they are tight in the sense that any further improvements would immediately refute the Exponential Time Hypothesis (ETH)), which essentially imply the nearly tight (subexponential time) hardness of approximation for all of them.
Interestingly, our results also illustrate a rare application of the subexponential-time inapproximability to the inapproximability of polynomial-time algorithms. 
The key ideas of our hardness proofs are simple and algorithmic even though it requires a non-trivial amount of work to actually implement them\danupon{Should we change this sentence to make it less aggressive?}.
%
%

Finally, we found a rather bizarre phenomenon of the $k$-hypergraph pricing problem (when $k$ is large) in the quasi-polynomial time regime.  
While both induced matching, independent set and many other natural combinatorial optimization problems\danupon{Is it correct to use ``both''?} do not admit much better approximation ratios in quasi-polynomial time (e.g., $n^{1-\epsilon}$ hardness of approximating the independent set and bipartite induced matching problem still hold against quasi-polynomial time algorithms), the story is completely different for the pricing problem: 
That is,\danupon{I don't think we need ``that is'' since the sentence is after ``:''} the pricing problem admits $n^{\delta}$ approximation in quasi-polynomial time for any $\delta >0$, even though it is $n^{1/2-\epsilon}$ hard against polynomial-time approximation algorithms.  
This contrast puts the pricing problem in a rare approximability class in which polynomial time and quasi-polynomial time algorithms' performances are significantly different.\danupon{I removed the next paragraph. See the commented paragraph.}\danupon{For later: Possible new organization: Create three subsections with names like ``Problems \& Results 1: $k$-hypergraph pricing"}


\subsection{Problems}

\paragraph{$k$-Hypergraph Pricing} In the {\em unlimited supply $k$-hypergraph vertex pricing} problem \cite{BB07,BriestK06}, we are given a weighted $n$-vertex $m$-edge $k$-hypergraph (each hyperedge contains at most $k$ vertices) modeling the situation where consumers (represented by hyperedges) with budgets (represented by weights of hyperedges) have their eyes on at most $k$ products (represented by vertices). The goal is to find a price assignment that maximizes the revenue. In particular, there are two variants of this problem with different consumers' buying rules. 
In the {\em unit-demand pricing problem} (\UDP), we assume that each consumer (represented by a hyperedge $e$) will buy the cheapest vertex of her interest if she can afford it. In particular, for a given hypergraph $H$ with edge weight $w:E(H)\rightarrow \reals_{\geq 0}$ (where $\mathbb{R}_{\geq 0}$ is the set of non-negative reals), our goal is to find a price function $p: V(H) \rightarrow \mathbb{R}_{\geq 0}$  to maximize 
\[
\prof_{H, w}(p) = \sum_{e\in E(H)} \pay_e(p) ~~~\mbox{where } \pay_e(p)=
\begin{cases}
\min_{v\in e} p(v) & \mbox{if } \min_{v\in e} p(v)\leq w(e),\\
0 & \mbox{otherwise.} 
\end{cases}
\]
The other variation is the {\em single-minded pricing problem} (\SMP), where we assume that each consumer will buy {\em all} vertices if she can afford to; otherwise, she will buy nothing. Thus, the goal is to maximize
\[
\prof_{H, w}(p) = \sum_{e\in E(H)} \pay_e(p) ~~~\mbox{where } \pay_e(p)=
\begin{cases}
\sum_{v\in e} p(v) & \mbox{if } \sum_{v\in e} p(v)\leq w(e),\\
0 & \mbox{otherwise.} 
\end{cases}
\]
The pricing problem naturally arose in the area of algorithmic game theory 
and has important connections to algorithmic mechanism design (e.g., \cite{BalcanBHM05,ChawlaHK07}). 
%
Its general version (where $k$ could be anything) was introduced by Rusmevichientong et al. \cite{Rusmevichientong2003,RusmevichientongRG06}, and the $k$-hypergraph version (where $k$ is thought of as some constant) was first considered by Balcan and Blum \cite{BB07}. (The special case of $k=2$ has also received a lot of attention \cite{BB07,ChalermsookKLN12,KhandekarKMS09,PopatW12}.)
%
%
%
There will be two parameters of interest to us, i.e., $n$ and $k$. 
Its current approximation upper bound is $O(k)$ \cite{BriestK06,BB07} while its lower bound is $\Omega(\min(k^{1/2-\epsilon}, n^\epsilon))$ \cite{Briest08,ChalermsookCKK12,ChalermsookLN-SODA13}.
%
%
%
%
\danupon{I commented out the last few sentences of this paragraph since it repeats what we have in the result section.}

\paragraph{Bipartite Induced Matching} 
Informally, an induced matching of an undirected unweighted graph $G$ is a matching $M$ of G such that no two edges in $M$ are joined by an edge in $G$.
%
To be precise, let $G=(V, E)$ be any undirected unweighted graph. 
An {\em induced matching} of $G$ is the set of edges $\mset \subseteq E(G)$ such that $\mset$ is a matching and 
for any distinct edges $uu', vv' \in \mset$, $G$ has none of the edges in $\set{uv, uv', u'v, u'v'}$\danupon{I removed ``no two edges in $\mset$ are joined by an edge in $G$, i.e.,'' to avoid repetition. I also added ``distinct''}.
The {\em induced matching number} of $G$, denoted by $\induce{G}$, is the cardinality of the maximum-cardinality induced matching of $G$.
%
Our goal is to compute $\induce{G}$ of a bipartite graph $G$.
%
%


The notion of induced matching has naturally arisen in discrete mathematics and computer science. It is, for example, studied as the ``risk-free'' marriage problem in \cite{StockmeyerV82} and is a subtask of finding a {\em strong edge coloring}. This problem and its variations also have connections to various problems such as maximum feasible subsystem \cite{ElbassioniRRS09,ChalermsookLN-SODA13},  maximum expanding sequence \cite{BriestK11}, storylines extraction \cite{KumarMS04} and network scheduling, gathering and testing (e.g., \cite{EvenGMT84,StockmeyerV82,JooSSM10,Milosavljevic11,BonifaciKMS11}).
%
In general graphs, the problem was shown to be \NP-complete in \cite{StockmeyerV82,Cameron89} and was later shown in \cite{ChlebikC06} 
to be hard to approximate to within a factor of $n^{1-\epsilon}$ and $d^{1-\epsilon}$ unless $\P = \NP$, where $n$ is the number of vertices and $d$ is the maximum degree of a graph.
In bipartite graphs, assuming $\P \neq \NP$, the maximum induced matching problem was shown to be $n^{1/3-\epsilon}$-hard to approximate in \cite{ElbassioniRRS09}. Recently, we \cite{ChalermsookLN-SODA13} showed its tight hardness of $n^{1-\epsilon}$ (assuming $\P \neq \NP$) and a hardness of $d^{1/2-\epsilon}$ on $d$-degree-bounded bipartite graphs. This hardness leads to tight hardness of several other problems.
\danupon{I shortened the rest of this paragraph (commented) since its right place is the result section}
In this paper, we improve the previous hardness to a tight $d^{1-\epsilon}$ hardness, as well as extending it to a tight approximability/running time trade-off for subexponential time algorithms. 


\paragraph{Independent Set} 


Given a graph $G=(V,E)$, a set of vertices $S\subseteq V$ is {\em independent} (or {\em stable}) in $G$ if and only if $G$ has no edge joining any pair of vertices $x,y\in S$. 
In the maximum independent set problem, we are given an undirected graph $G=(V,E)$, and the goal is to find an independent set $S$ of $G$ with maximum size.
%
Hardness results for the maximum independent set problem heavily rely on developments in the PCP literature. 
The connection between the maximum independent set problem and the {\em probabilistic checkable proof system} (PCP) was first discovered by Feige~et~al.~\cite{FGLSS96} who showed that the maximum independent set problem is hard to approximate to within a factor of $2^{\log^{1-\epsilon}n}$, for any $\epsilon>0$, unless  $\NP\subseteq\DTIME(n^{\polylog{n}})$.
The inapproximability result has been improved by Arora and Safra~\cite{AroraS98} and Arora~et~al.~\cite{AroraLMSS98}, leading to a polynomial hardness of the problem~\cite{AroraLMSS98}. 
Later, Bellare and Sudan~\cite{BellareS94} introduced the notion of the {\em free-bit complexity} of a PCP and showed that, given a PCP with logarithmic randomness and free-bit complexity $f$, the maximum independent set problem is hard to approximate to within a factor of $n^{1/(1+f)-\epsilon}$, for all $\epsilon>0$, unless $\NP=\ZPP$. 
%
There, Bellare and Sudan~\cite{BellareS94} constructed a PCP with free-bit complexity $f=3+\delta$, for all $\delta>0$, thus proving the hardness of $n^{1/4-\epsilon}$ for the maximum independent set problem.
The result has been subsequently improved by Bellare~et~al. in \cite{BellareGS98} who gave a construction of a PCP with free-bit complexity $f=2+\delta$.
Finally, H{\aa}stad~\cite{Hastad96} constructed a PCP with arbitrary small free-bit complexity $f>0$, thus showing 
the tight hardness (up to the lower order term) of $n^{1-\epsilon}$, for all $\epsilon>0$, for the maximum independent set problem.
A PCP with optimal free-bit complexity was first constructed by Samorodnitsky and Trevisan~\cite{ST00}.
The PCP of Samorodnitsky and Trevisan in \cite{ST00} has imperfect completeness, and this has been improved by H{\aa}stad and Khot~\cite{HastadK05} to a PCP that has both perfect completeness and optimal free-bit complexity.
Recently, Moshkovitz and Raz~\cite{MR10} gave a construction of a projective 2-query PCP with nearly-linear size, which can be combined with the result of Samorodnitsky and Trevisan~\cite{ST00} to obtain a PCP with nearly-linear size and optimal free-bit complexity.
The soundness of a PCP with optimal free-bit complexity was improved in a very recent  breakthrough result of Chan~\cite{Chan12}.\bundit{{\bf Possibly Footnote:}
In fact, the nearly-linear size PCP of Moshkovitz and Raz~\cite{MR10} can also be combined with the PCP constructed by Chan, which thus gives a PCP with nearly-linear size, optimal free-bit complexity and strong soundness. But, we want to attribute the result to Samorodnitsky and Trevisan since they are the first who obtained such a result.
}
\bundit{I added a reference to Zuckerman's result}
The complexity assumption of early tight hardness results of the maximum independent set problem (e.g., \cite{Hastad96}) is $\NP\neq \ZPP$ because of a random process in the PCP constructions. This process was derandomized in \cite{Zuckerman07} by Zuckerman, thus proving tight hardness under the $\P\neq\NP$ assumption.

\subsection{Our Results}

\begin{table} 
\centering
{\footnotesize
\begin{tabular}{l|l|c|c}
\hline
{\bf Problem} &  & {\bf Upper}& {\bf Lower} \\
\hline
$k$-hypergraph pricing & Previous & $O(k)$ \cite{BB07,BriestK06} & $\Omega(k^{1/2-\epsilon})$ and $\Omega(n^{\delta})$ \cite{Briest08,ChalermsookCKK12,ChalermsookLN-SODA13}\\
(polynomial time) & This paper & $O(\min(k, (n\log n)^{1/2}))$ & $\Omega(\min(k^{1-\epsilon}, n^{1/2-\epsilon}))$\\
\hline
$k$-hypergraph pricing & Previous & - & -\\
(quasi-polynomial time) & This paper &  $n^{\delta}$-approx. algo in & $n^{\delta}$-approx. algo requires \\ && $O(2^{(\log{m})^{\frac{1-\delta}{\delta}}\log\log{m}}\poly(n,m))$-time & $\Omega(2^{(\log m)^{\frac{1-\delta-\epsilon}{\delta}}})$  time \\
\hline
Independent set  & Previous & $O(2^{n/r}\poly(n))$ time \cite{CyganKPW08}& $\Omega(2^{n^\delta/r})$ time \cite{ChitnisTK13}\\
(subexpo.-time $r$-approx. algo) & This paper & - & $\Omega(2^{n^{1-\epsilon}/r^{1+\epsilon}})$ time\\
\hline
Induced matching on  & Previous & $O(d)$ (trivial) & $\Omega(d^{1/2-\epsilon})$ \cite{ChalermsookLN-SODA13}\\
$d$-deg.-bounded bip. graphs & This paper & - & $\Omega(d^{1-\epsilon})$ \\
(polynomial time) & & &\\
\hline
Induced matching on bip. graphs & Previous & - & -\\
(subexpo.-time $r$-approx. algo) & This paper & $O(2^{n/r}\poly(n))$ time & $\Omega(2^{n^{1-\epsilon}/r^{1+\epsilon}})$ time \\
\hline
\end{tabular}
}
\caption{Summary of results.} 
\label{table:results}
\end{table}

We present several tight hardness results both for polynomial and subexponential-time algorithms, as summarized in \Cref{table:results}. Most our results rely on a plausible complexity theoretic assumption stronger than $\P \neq \NP$, namely, {\em Exponential Time Hypothesis} (ETH), which, roughly speaking, states that SAT cannot be decided by any subexponential time algorithm (see \Cref{sec:prelim} for detail).

Our first result, which is our original motivation, is the tight hardness of approximating the $k$-hypergraph pricing problems in polynomial time. These problems (both \UDP and \SMP) are known to be $O(k)$-approximable \cite{BB07,BriestK06} and the hardness of $\Omega(k^{1/2-\epsilon})$\danupon{I changed this from $\Omega(k^\delta)$.} and $\Omega(n^\delta)$, for some constant $\delta>0$, are known based on the assumption about hardness of refuting a random 3SAT formula~\cite{Briest08}. 
%
%
%
\danupon{I changed the next two sentence a bit to incorporate the commented part from previous subsection.}
A series of recent results leads to a disagreement on the right approximability thresholds of the problem. On one hand, the current best approximation algorithm is so simple that one is tempted to believe that a more sophisticated idea would immediately give an improvement on the approximation ratio. On the other hand, no algorithmic approach could go beyond the barrier of $O(k)$ so far, thus leading to a belief that $\Omega(k^{1-\epsilon})$ and $\Omega(n^{1-\epsilon})$ hardness should hold.
%
%
In this paper, we settle the approximability status of this problem. 
Somewhat surprisingly, the right hardness threshold of this problem turns out to lie somewhere between the two previously believed numbers: the believed hardness of $\Omega(k^{1-\epsilon})$ was correct but {\em only for} $k=O(n^{1/2})$.
%
\begin{theorem}[Polynomial-time hardness of $k$-hypergraph pricing; proof in \Cref{sec:hardness_pricing}]\label{thm:hardness pricing}
The $k$-hypergraph pricing problems (both \UDP and \SMP) are $\Omega(\min (k^{1-\epsilon}, n^{1/2-\epsilon}))$ hard to approximate in polynomial time unless the ETH is false\footnote{The $k^{1-\epsilon}$ hardness only requires $\NP \neq \ZPP$ when $k$ is constant.}.  
Moreover, they are $\tilde O(\min(k, (n\log n)^{1/2}))$-approximable in polynomial time. 
\end{theorem}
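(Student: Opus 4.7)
The plan is to split the theorem into its two halves (the algorithmic upper bound and the hardness lower bound) and handle each separately.

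For the upper bound of $\tilde O(\min\{k,\sqrt{n\log n}\})$, the $O(k)$-approximation is essentially that of Balcan--Blum: for each consumer $e$ with budget $w(e)$, at least one vertex of $e$ can extract $w(e)/k$ revenue on its own, so iterating over all log-scale single-vertex prices recovers $\Omega(\opt/k)$. For the new $\tilde O(\sqrt{n\log n})$ bound I would run and take the maximum over two algorithms: (i) a \emph{one-vertex} algorithm that for each pair $(v,2^i)$ sets $p(v)=2^i$ and all other prices to $\infty$; (ii) a \emph{uniform price} algorithm that sets every vertex to a common price $2^i$. Letting $R(v)$ denote the revenue contribution of $v$ in some fixed optimum, I would case-split: if some $R(v)\ge \opt/\sqrt{n\log n}$ then algorithm (i) catches it, and if every $R(v)$ is small, a standard bucketing of prices plus averaging over vertices shows that one of the $O(\log n)$ uniform prices tried by algorithm (ii) recovers $\Omega(\opt/\sqrt{n\log n})$. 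The analysis for \UDP and \SMP is analogous because the ``per-consumer, per-vertex'' revenue decomposition works in both models.

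For the hardness, I would reduce from the maximum bipartite induced matching problem (\semiind), whose tight polynomial- and subexponential-time hardness is established earlier in this paper. The reduction replaces each vertex $v$ of a bipartite \semiind instance $G=(A\cup B,E)$ by a collection of products, one for each ``rounded'' price assignment to $v$, and replaces each edge $(a,b)\in E$ by a consumer whose hyperedge contains the matched price-products of $a$ and $b$. This encoding blows up the instance exponentially in $|A\cup B|$, but is gap-preserving: a large induced matching in $G$ yields a non-conflicting price pattern that each matching edge can independently cash in, while in a no-instance every price pattern creates conflicts that force the revenue down by the induced-matching gap.

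To obtain the $\Omega(k^{1-\epsilon})$ bound I would instantiate the reduction with $G$ being a $d$-degree-bounded bipartite graph, for which this paper already proves $d^{1-\epsilon}$ hardness; the pricing hyperedge size $k$ then corresponds to $d$ and the reduction runs in polynomial time. To obtain the $\Omega(n^{1/2-\epsilon})$ bound in terms of $n_\text{pricing}$, I would instead use the sharp subexponential-time hardness of \semiind: an $r$-approximation requires time $2^{\Omega(n_\text{IM}^{1-\epsilon}/r^{1+\epsilon})}$. Choosing the reduction's blow-up parameter so that $\poly(n_\text{pricing})$ on the pricing side matches this subexponential barrier precisely at the point $r=n_\text{pricing}^{1/2-\epsilon}$ is what yields the polynomial-time hardness. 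The main obstacle is exactly this calibration: the exponential blow-up, the subexponential-time exponent $1-\epsilon$, and the target $1/2-\epsilon$ exponent on the pricing side must all line up, which forces a delicate choice of reduction parameters and careful bookkeeping of $\epsilon$ terms (so that the dependence remains a clean $n^{1/2-\epsilon}$ rather than, say, $n^{1/2}/\polylog n$). A secondary subtlety is designing the gadget so that it simultaneously realizes the small-$k$ hardness (via low-degree \semiind) and the $n$-dependent hardness (via the exponential blow-up from subexponential \semiind) without separate constructions.
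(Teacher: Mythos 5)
Your high-level plan---reduce from bipartite semi-induced matching using both its $d$-bounded and its subexponential-time hardness, and pair this with a polynomial-time algorithm---matches the paper, but both halves as written have concrete gaps. On the algorithmic side, the one-vertex plus uniform-price hybrid does not give $\tilde O(\sqrt{n\log n})$: in the regime where every $R(v)<\opt/\sqrt{n\log n}$, the standard bucketing argument for uniform pricing only yields an $O(\log m)$-approximation, which exceeds $\sqrt{n\log n}$ once $m$ is super-polynomial in $n$. Concretely, if vertex $v_i$ has roughly $2^i/n$ consumers each wanting only $v_i$ with budget about $2^{-i}$, then $\opt=1$ and every $R(v_i)=1/n$ is ``small,'' yet both your one-vertex and uniform-price algorithms recover only $O(1/n)$, a $\Theta(n)$ gap. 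The paper's algorithm is structurally different: when $n^\delta>\log m$ it just runs the known $O(\log m)$-approximation; otherwise it partitions the $n$ items into $n^\delta$ groups and on each group of $n^{1-\delta}$ items runs a constant-factor approximation with running time $(\log nm)^{n^{1-\delta}}\poly(n,m)$, taking the best. Choosing $n^\delta=\sqrt{n\log n}$ keeps the whole thing polynomial and yields the $\tilde O(\sqrt{n\log n})$ bound.

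On the hardness side, your gadget and its blowup are off in a way that breaks both hardness bounds. The paper's reduction from a $d$-degree-bounded bipartite graph $G=(U,V,E)$ first randomly $d$-colors $U$ and discards highly-congested right vertices; then each surviving $v\in V$ becomes one item, and each $u$ of color $i$ becomes $d^{3i}$ consumers with budget $d^{-3i}$ and desired set $S_c=\{I(v):v\in N(u)\}$. This gives hyperedges of size at most $d$ (hence $k=d$) and a total blowup of $|U|\cdot d^{O(d)}$, i.e.\ exponential in the \emph{degree} $d$, not in $|V(G)|$. That distinction is essential: for constant $k=d$ the blowup is constant, the reduction is polynomial-time, and you get $k^{1-\epsilon}$-hardness under $\NP\neq\ZPP$; for the $n^{1/2-\epsilon}$ bound you take $d\approx N^{1-\epsilon}$ so that $n_{\mathrm{pricing}}\approx N d\approx d^{2}$ and $d^{O(d)}=2^{\tilde O(N^{1-\epsilon})}$ is still subexponential---exactly what the ETH forbids. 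Your proposed ``one product per rounded price per vertex, one consumer per edge'' gadget produces hyperedges of size $2$ rather than $d$, and a blowup exponential in $|V(G)|$ that is never polynomial even for constant $k$, so it cannot deliver either part of the theorem.
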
 

The main ingredient in proving \Cref{thm:hardness pricing} is proving tight hardness thresholds of {\em subexponential-time} algorithms for the maximum independent set and the maximum induced matching problems in $d$-degree-bounded bipartite graphs\footnote{We note that, indeed, the connection to the pricing problem is via a closely related problem, called the {\em semi-induced matching} problem, whose hardness follows from the same construction as that of the maximum induced matching problem; see \Cref{sec:prelim}.}\danupon{This footnote was originally a sentence in the previous section.}. Besides playing a crucial role in proving \Cref{thm:hardness pricing}, these results are also of an independent interest. 
Our first subexponential-time hardness result is for the maximum independent set problem. 
While the polynomial-time hardness of this problem has been almost settled, the question whether one can do better in subexponential time has only been recently raised in the parameterized complexity community. Cygan et al.~\cite{CyganKPW08} and Bourgeois et al.~\cite{BourgeoisEP11-IndependentSet} independently observed that better approximation ratios could be achieved if subexponential running time is allowed\danupon{Are these the correct citations? If so, we should change the abstract accordingly.}. 
In particular, they showed that an $r$ approximation factor can be obtained in  $O(2^{n/r}\poly(n))$ time. Recently, Chitnis et al. \cite{ChitnisTK13} showed that, assuming the ETH, an $r$-approximation algorithm requires $\Omega(2^{n^\delta/r})$ time, for some constant $\delta>0$\footnote{We note that their real statement is that for any constant $\epsilon>0$, there is a constant $F > 0$ depending on $\epsilon$ such that CLIQUE (or equivalently the independent set problem) does not have an $n^{\epsilon}$-approximation in  $O(2^{\opt^{F/\epsilon}}\cdot \poly(n))$ time. Their result can be translated into the result we state here.}\danupon{Is the footnote correct?}. Our hardness of the maximum independent set problem improves upon the lower bound of Chitnis et al. and essentially matches the upper bounds of Cygan et al. and Bourgeois et al.
\begin{theorem}[Subexponential-time hardness of independent set; proof in \Cref{sec: hardness of induced matching}]\label{thm:hardness independent set}
Any $r$ approximation algorithm for the maximum independent set problem must run in time 
$2^{n^{1-\epsilon}/r^{1+\epsilon}}$ unless the ETH is false.  
\end{theorem}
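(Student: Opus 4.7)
The plan is to combine a tailored PCP construction with the FGLSS reduction and track parameters carefully enough to convert the ETH lower bound on SAT into a subexponential-time lower bound on independent-set approximation. Concretely, starting from a 3SAT instance on $N$ variables (which under ETH cannot be solved in $2^{o(N)}$ time), I would build a PCP verifier for 3SAT with perfect completeness, soundness $s$, logarithmic randomness, and free-bit complexity $f$, in which the overall proof length and verifier randomness both blow up by at most a factor of $\operatorname{polylog}(N)\cdot 2^{O(f)}$, yielding a PCP of total size $N^{1+o(1)}\cdot 2^{O(f)}$. I would then feed this into the standard FGLSS graph, choose $f$ so that the gap matches the desired ratio $r$, and invoke ETH on the underlying SAT instance.

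For the PCP itself, the plan is to assemble existing ingredients rather than invent a new proof system. First I would apply the Moshkovitz--Raz nearly-linear-size projective 2-query PCP to obtain a 2-prover 1-round game of size $N^{1+o(1)}$ with arbitrarily small soundness over a large alphabet. I would then compose it with an inner verifier in the spirit of Samorodnitsky--Trevisan (or H\aa stad--Khot / Chan, with the Zuckerman derandomization to stay under $\P\ne\NP$), which reads $f+o(f)$ free bits and achieves soundness $s\le 2^{-f(1-\delta)}$ for any constant $\delta>0$. The critical bookkeeping step is to verify that this composition multiplies proof length and randomness by only $\operatorname{poly}(2^f)=2^{O(f)}$, so that the size stays $N^{1+o(1)}\cdot 2^{O(f)}$ even as $f$ is taken to grow with $N$.

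From the PCP I apply FGLSS in the usual way: vertices correspond to (random string, locally consistent accepting configuration) pairs, and edges to inconsistencies on overlapping positions. The number of vertices is
\[
n \;\le\; 2^{r_{\mathrm{random}}}\cdot 2^f \;=\; N^{1+o(1)}\cdot 2^{O(f)},
\]
with independence number $2^{r_{\mathrm{random}}}$ in the YES case and at most $s\cdot 2^{r_{\mathrm{random}}}$ in the NO case, giving inapproximability factor $1/s\approx 2^{f(1-\delta)}$. To rule out an $r$-approximation, set $f$ so that $2^{f(1-\delta)}=r$, hence $2^f=r^{1+O(\delta)}$. Then
\[
n \;=\; N^{1+o(1)}\cdot r^{1+O(\delta)}, \qquad \text{so}\qquad N \;\ge\; \frac{n^{1-o(1)}}{r^{1+O(\delta)}}.
\]
An $r$-approximation in time $T$ on the graph would decide the original SAT instance in time $T+\operatorname{poly}(n)$, so ETH forces $T\ge 2^{\Omega(N)}\ge 2^{n^{1-\epsilon}/r^{1+\epsilon}}$ after choosing $\delta$ and the $o(1)$ slacks small enough in terms of $\epsilon$.

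The hard part is not the FGLSS step or the ETH bookkeeping, which are essentially routine once the right PCP is in hand, but rather the PCP construction in the second paragraph: simultaneously achieving nearly-linear size and free-bit complexity that can be driven up with $N$. Moshkovitz--Raz gives near-linearity but a fixed number of queries in a large alphabet, while the Samorodnitsky--Trevisan / Chan line gives optimal free-bit complexity but is usually presented with polynomial (not near-linear) size overheads, and the standard long-code-style inner verifiers blow up the proof superpolynomially in $2^f$. The technical crux of the theorem is to thread these two constructions together so that every composition step costs only $2^{O(f)}$ in size, which is precisely the ``careful assembly'' of PCP ingredients that the introduction refers to.
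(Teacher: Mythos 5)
Your high-level scaffolding matches the paper's: Moshkovitz--Raz near-linear outer PCP, Samorodnitsky--Trevisan-style low-free-bit inner verifier, FGLSS reduction, and ETH bookkeeping to translate $N\geq n^{1-\epsilon}/r^{1+\epsilon}$ into a time lower bound. The final bookkeeping paragraph is essentially correct once you have the PCP you describe.

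The genuine gap is in how the free-bit parameter $f$ is supposed to scale with $N$ (equivalently with the target ratio $r$). Your plan is to take $f$ as a tunable parameter of the MR$\,+\,$ST \emph{composition itself} and drive it up with $N$, hoping the composition costs only $2^{O(f)}$. But the MR$\,+\,$ST statement (\Cref{thm: MR} in the paper) is proved for a fixed sufficiently large \emph{constant} $k$, i.e.\ constant $f=2k$; growing $k$ with $N$ is not licensed by those theorems and would require reopening the compositions themselves. You correctly flag this as ``the technical crux,'' but you do not supply the idea that resolves it. The paper sidesteps the obstruction entirely: it keeps $k$ (hence the base free-bit complexity) a fixed constant and instead applies a \emph{gap-amplification plus sparsification} step (the $t$-fold AND of random clauses, \Cref{thm: CSP}). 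The parameter $t$, not $f$, is what scales with $N$ (up to $t\le 5\epsilon^2\log N$). This multiplies the number of \emph{clauses} by $2^{t(k^2+1)}$ while leaving the number of \emph{variables} essentially $N^{1+\epsilon}$, so the effective number of accepting configurations per clause becomes $2^{2kt}$ and the gap becomes $2^{\Theta(t k^2)}$. After FGLSS (with disperser replacement to additionally bound the degree for the induced-matching application, though that is not strictly needed for the independent-set statement), one gets $n\approx N^{1+O(\epsilon)} r^{1+O(1/k)}$ with hardness gap $r$, exactly the relationship you want. Without that intermediate amplification/sparsification step, your construction either freezes $r$ at a constant (if you keep $k$ constant) or relies on an unproved near-linear-size PCP with super-constant $k$ (if you grow $k$). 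So the proposal is missing the mechanism that makes the gap scale, not merely details of it.

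One secondary point: the inner-verifier soundness you quote, $s\le 2^{-f(1-\delta)}$, corresponds to amortized free-bit complexity about $1$, whereas ST actually gives soundness $2^{-\Theta(f^2/4)}$ (amortized free-bit complexity $\to 0$ as $k$ grows). Your weaker bound is still sufficient for a $2^{n^{1-\epsilon}/r^{1+\epsilon}}$-type conclusion, but it is worth being precise since the same PCP is reused elsewhere in the paper where the stronger bound matters.
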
 
An important immediate step in using \Cref{thm:hardness independent set} to prove \Cref{thm:hardness pricing} is proving the subexponential-time hardness of the induced matching problem on $d$-degree-bounded bipartite graphs. The polynomial-time hardness of $n^{1-\epsilon}$ for this problem has only been resolved recently by the authors of this paper in \cite{ChalermsookLN-SODA13}, where we also showed a hardness of $d^{1/2-\epsilon}$ when the input graph is bipartite of degree at most $d$. In this paper, we improve this bound to $d^{1-\epsilon}$ and extend the validity scope of the results to subexponential-time algorithms. 
The latter result is crucial for proving \Cref{thm:hardness pricing}.\danupon{So, the problem on $d$-regular graph is still open, right?}

\begin{theorem}[Hardness of induced matching on $d$-degree-bounded bipartite graphs; proof in \Cref{sec: hardness of induced matching}]\label{thm:hardness induced matching}
Let $\epsilon >0$ be any constant. 
For any $d\geq c$, for some constant $c$ (depending on $\epsilon$), there is no $d^{1-\epsilon}$ approximation algorithm for the maximum induced matching problem in $d$-degree-bounded bipartite graphs unless $\NP = \ZPP$. Moreover, any $r$ approximation algorithm for the maximum induced matching problem in bipartite graphs must run in time $2^{n^{1-\epsilon}/r^{1+\epsilon}}$ unless the ETH is false.\danupon{on graph or in graphs?}  
\end{theorem}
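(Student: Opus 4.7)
The plan is to reduce from the hard instances of maximum independent set produced by \Cref{thm:hardness independent set}, using a carefully chosen bipartite disperser to simultaneously (i) convert an independent set instance on $n$ vertices into a bipartite induced matching instance whose maximum degree is bounded by a prescribed parameter $d$, and (ii) preserve the approximation gap up to lower-order factors so that the $n^{1-\epsilon}$ hardness for independent set translates into $d^{1-\epsilon}$ hardness for induced matching. The same reduction will automatically lift the running-time versus ratio trade-off of \Cref{thm:hardness independent set} to the bipartite induced matching problem.

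First I would make precise the disperser gadget. Concretely, fix a bipartite graph $D = (L \cup R, E_D)$ with $|L|=n$, $|R|=d$, left-regular of some degree $\rho$, such that every subset $S \subseteq L$ with $|S| \ge n^{\epsilon/2}$ satisfies $|N_D(S)| \ge (1-\epsilon)d$. Such dispersers exist (even explicitly) with $\rho = \operatorname{polylog}(n)$ and can be assumed $d$-regular on the right by standard tricks. Given an IS instance $G=(V,E)$ with $|V|=n$, I would construct a bipartite graph $H$ on vertex set $V \cup R$ (or, more symmetrically, on two copies $V \times \{0\}$ and $R \times \{1\}$), placing an edge $(v,r)$ whenever $(v,r) \in E_D$, and then superposing the "conflict" edges so that for every edge $uv \in E(G)$ the corresponding stars of $u$ and $v$ in $H$ interact (so that matchings using $u$ and $v$ are not induced). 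The maximum degree of $H$ is $\max(\rho, d)$, and with $\rho \ll d$ this gives a $d$-degree-bounded bipartite graph.

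The main combinatorial content is the following claim, which is what I think of as the "induced matching number of a disperser" lemma: the maximum induced matching in $H$ has size roughly $\alpha(G) \cdot d/n$ up to $(1 \pm \epsilon)$ factors. The completeness direction is straightforward: given an independent set $I \subseteq V(G)$, for each $v \in I$ we can pick one edge $(v, r_v)$ with $r_v \in N_D(v)$, and since $I$ is independent there are no conflict edges between the endpoints, so the only obstruction is right-side collisions, handled by a simple fractional/greedy argument using the $\rho$-regularity of $D$. The soundness direction is the main obstacle. Given an induced matching $M$ in $H$, let $S \subseteq V$ be the set of left endpoints used by $M$; I need to argue $|S|$ is essentially the size of an independent set in $G$. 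If $S$ were not (nearly) independent, then some edge $uv \in E(G)$ has both endpoints in $S$, and the conflict edges around $u,v$ would force the two matched right-endpoints to be non-neighbors in $D$ — but the disperser property forbids large such "avoidance" sets, giving a contradiction once $|M|$ is large enough. Making this quantitative, with the right disperser parameters, yields $|M| \le (1+\epsilon)\alpha(G) d/n$.

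Finally I would track parameters. Setting $d$ such that $n^{1-\epsilon}$ hardness for IS on $n$ vertices translates through the ratio-preserving reduction above yields a $d^{1-\epsilon'}$ hardness for induced matching on $d$-degree-bounded bipartite graphs, where $\epsilon'$ depends linearly on $\epsilon$ and on the disperser loss. For the second assertion, observe that the reduction is size-preserving up to a $\operatorname{poly}(n)$ factor (since $|V(H)| = n + d \le O(n)$ when $d \le n$), so any $r$-approximation algorithm for bipartite induced matching running in time $T(N)$ on $N$-vertex graphs yields an $r$-approximation for independent set in time $T(O(n))$; plugging in the lower bound of \Cref{thm:hardness independent set} delivers the claimed $2^{n^{1-\epsilon}/r^{1+\epsilon}}$ bound. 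The hardest piece will be extracting tight soundness from the disperser property without losing more than $(1+\epsilon)$ factors, since any polynomial slack there would spoil the $d^{1-\epsilon}$ ratio.
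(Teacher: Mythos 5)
Your construction is genuinely different from the paper's, and unfortunately it has a gap in the soundness direction that I do not think can be repaired with the claimed tools. In your setup the right side $R$ of the new bipartite graph $H$ has only $d$ vertices, chosen via an \emph{external} disperser $D$ that compresses $V(G)$ down to $R$. The claim you need for soundness is $\induce{H} \le (1+\epsilon)\alpha(G)\, d/n$, but your argument only shows that the left endpoints $S$ of any induced matching form (roughly) an independent set of $G$, which gives $|M|=|S|\le\min(\alpha(G),\,d)$. That bound is far weaker than $\alpha(G)d/n$: in the \ni case $\alpha(G)\approx n/d^{1-\epsilon}$ is much larger than $d^{\epsilon}$, and the disperser property (which is triggered only once $|S|\ge n^{\epsilon/2}$) says nothing that prevents a medium-size independent set $S$ (of size, say, $d^{0.9}$) from yielding an induced matching of size $|S|$. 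Indeed for a typical random $\rho$-left-regular $D$ and any fixed independent $S$ with $|S|\rho/d$ bounded away from $1$, one can match each $v\in S$ to a private right vertex $r_v$ with $N_D(r_v)\cap S=\{v\}$, so $\induce{H}$ behaves like $\min(\alpha(G),\, \Theta(d\log\rho/\rho))$ --- the same order of magnitude in both \yi and \ni --- and the gap disappears. Two smaller issues: the ``fractional/greedy'' completeness argument already needs care because avoiding collisions and non-matching $D$-edges simultaneously is what drives the $d\log\rho/\rho$ ceiling; and the claim $\Delta(H)=\max(\rho,d)$ does not hold under the natural implementation of conflict edges, since for $uv\in E(G)$ connecting $u$ to all of $N_D(v)$ multiplies the left degree by $\deg_G(u)\le d$.

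The paper avoids all of this by keeping \emph{both} sides of the bipartite graph large. It starts from an IS instance $\widehat{G}$ that is already $\Delta$-degree-bounded (via FGLSS plus Trevisan's \emph{internal} disperser replacement, where each variable's bipartite clique $\widetilde{G}_i=(O_i,Z_i)$ is replaced by a disperser $H_i$ on the \emph{same} vertex set), and then takes the bipartite double cover $B[\widehat{G}]$: two copies $V',V''$ of $V(\widehat{G})$ with $(u,1)(v,2)\in E$ iff $uv\in E(\widehat{G})$ or $u=v$. The key ``Disperser Lemma'' (\Cref{lem: properties of dispersers}) then shows $\sinducesigma{\sigma}{B[H_i]}\le 4\gamma n_i$ for each disperser block, and summing over blocks bounds the non-diagonal part of any semi-induced matching in $B[\widehat{G}]$ by $4\gamma qt|V(\widehat{G})|$, while the diagonal matching edges $(u,1)(u,2)$ correspond to an independent set of $\widehat{G}$. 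This is how the paper gets $\alpha(\widehat{G})\le\sinducesigma{\sigma}{G}\le\alpha(\widehat{G})+4\gamma|V(\widehat{G})|$, which preserves the gap. The running-time claim is then immediate because $|V(G)|=2|V(\widehat{G})|$. Your instinct that dispersers are the right tool and that the trade-off of \Cref{thm:hardness independent set} should transfer essentially for free is correct; the step you are missing is the bipartite double cover and the corresponding Disperser Lemma, which give an honest gap-preserving reduction without compressing the vertex set.
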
  

%
\danupon{Is Chapter 4 in \url{http://www.dur.ac.uk/konrad.dabrowski/dabrowskithesis.pdf} useful to us?}

Finally, we note an interesting fact that, while the polynomial-time hardness of the $k$-hypergraph pricing problem follows from the hardness of the independent set and the bipartite induced matching problems, its subexponential time approximability is quite different from those of the other two problems. 
In particular, if we want to get an approximation ratio of $n^{\epsilon}$ for some constant $\epsilon>0$. \Cref{thm:hardness independent set,thm:hardness induced matching} imply that we still need subexponential time to achieve such an approximation ratio. In contrast, we show that, for the case of the $k$-hypergraph pricing problem, such an approximation ratio can be achieved in quasi-polynomial time. 

\begin{theorem}[Quasi-polynomial time $n^{\delta}$-approximation scheme; proof in \Cref{sec:algo}]
For the $k$-hypergraph pricing problem and any constant $\delta>0$, there is an algorithm that gives an approximation ratio of $O(n^{\delta})$ and runs in time $O(2^{(\log{m})^{\frac{1-\delta}{\delta}}\log\log{m}}\poly(n,m))$.
\end{theorem}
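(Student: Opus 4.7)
The algorithm is based on two reductions followed by a recursive enumeration whose depth is the parameter that trades running time against approximation quality.

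\textbf{Step 1: Preprocessing by bucketing.} First I would reduce to the case where all hyperedge budgets lie in an interval $[W, 2W]$ by losing only an $O(\log m)$ factor. This is standard: partition the $m$ hyperedges into $O(\log m)$ classes according to $\lfloor \log_2 w(e) \rfloor$ (truncating negligible budgets), run the approximation routine on each class independently, and return the most profitable solution. Similarly, I would restrict each vertex price to the discrete grid $\{W/2^i : 0 \le i \le O(\log m)\}$, losing another constant factor; this follows because any pricing can be rounded down to the nearest grid point while retaining at least half the revenue.

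\textbf{Step 2: Recursive enumeration.} Set the recursion depth $t = \lceil (1-\delta)/\delta \rceil$. I would design a procedure $\textsc{Solve}(H,t)$ whose output is an $O((\log m)^{O(1)} \cdot n^{1/(t+1)})$-approximation. In the base case $t=0$, I apply uniform pricing: try each of the $O(\log m)$ grid prices as the common price of all vertices, and return the best; a standard averaging argument shows this yields an $O(\log m)$-approximation with respect to the intra-bucket optimum. In the recursive case $t \ge 1$, I enumerate over a list of $(\log m)^{O((\log m)^{(1-\delta)/\delta - 1})}$ ``pivot profiles''—each profile prescribes a small set of prices/thresholds with which to partition the vertex set (resp.\ hyperedge set) into two sub-instances $H_1, H_2$—and recursively invoke $\textsc{Solve}(H_1, t{-}1)$ and $\textsc{Solve}(H_2, t{-}1)$, combining the returned pricings. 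Taking the maximum over profiles gives the final answer.

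\textbf{Step 3: Complexity and ratio bookkeeping.} At every level of the recursion I enumerate a list whose size is bounded by $(\log m)^{(\log m)^{(1-\delta)/\delta}}$, and at the bottom I pay only polynomial time per leaf. The total running time multiplies out to $2^{(\log m)^{(1-\delta)/\delta}\log\log m}\cdot \poly(n,m)$ as claimed. For the approximation ratio, each level of recursion contributes a multiplicative factor of $n^{1/(t+1)}$ in the loss (because the guessed pivot isolates a sub-instance whose optimum is at least a $n^{1/(t+1)}$-fraction of the parent's), and the bucketing preprocessing contributes an $O(\log m)$ factor; combined across $t+1$ levels this gives $O((\log m)^{O(t)}\cdot n^{(t+1)/(t+1)\cdot \delta}) = O(n^{\delta})$ after absorbing the $\poly\log$ terms into the constant.

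\textbf{Main obstacle.} The crux—and the step that requires real content rather than routine calculation—is the \emph{decomposition lemma} that justifies the recursive guess: for any optimal pricing $p^*$ achieving revenue $\opt$, there exists a pivot profile in our enumerated list such that $p^*$ restricted to the two induced sub-instances still collects revenue $\Omega(\opt/(\log m)^{O(1)})$ in aggregate. The subtle point is that, unlike in independent set or induced matching, the revenue of a sub-instance under a restricted pricing is not monotone in the obvious way (removing a vertex can cause a hyperedge to be no longer ``covered''); one must choose the pivots so that each sub-instance corresponds to an honest sub-hypergraph whose own optimum upper-bounds the revenue the algorithm wants to extract. I would handle this by letting the pivots be price thresholds that split the vertex set into a ``head'' carrying a constant fraction of the revenue and a ``tail'' to be recursed on, with the splitting chosen so that the head's contribution can be realized by a pricing on $H_1$ alone.
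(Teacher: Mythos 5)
Your proposal takes a genuinely different (and more elaborate) route than the paper, but it has real gaps that would prevent it from working as written.

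The paper's algorithm (Section~\ref{sec:algo}) is much simpler and has no recursion. If $n^\delta > \log m$ it just applies the known $O(\log m)$-approximation of Guruswami~et~al.\ (\Cref{lmm:log-m-algo}). Otherwise it partitions the \emph{item} set $\iset$ into $n^\delta$ parts of size $n^{1-\delta}$ each, runs a constant-factor approximation algorithm that enumerates over an $O(\log nm)$-point price grid on each sub-instance $(\cset,\iset_j)$ in time $(\log nm)^{n^{1-\delta}}\poly(n,m)$, and returns the best. The single decomposition fact needed is \Cref{lemma: decomposition of instance}: $\opt(\cset,\iset)\le\sum_j\opt(\cset,\iset_j)$, which follows immediately by restricting an optimal price function to each $\iset_j$, together with \Cref{lemma: extension of prices} (set removed items to $\infty$ for \UDP and $0$ for \SMP). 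This ``item partition'' trick is exactly what dodges the non-monotonicity issue you flag as the ``main obstacle'': revenue is additive over a \emph{partition of items under a fixed pricing}, so no cleverness is needed. Your bucketing-by-budget preprocessing is simply unnecessary.

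Beyond the contrast in route, there are concrete errors in your bookkeeping. For the ratio, you say each of $t+1$ levels loses a multiplicative $n^{1/(t+1)}$ factor; that compounds to $n^{(t+1)\cdot 1/(t+1)} = n$, not $n^\delta$, and the expression $n^{(t+1)/(t+1)\cdot\delta}$ is not what the product of those per-level factors equals. For the running time, if \emph{every} level of a depth-$t$ recursion enumerates $(\log m)^{(\log m)^{(1-\delta)/\delta}}$ profiles and fans out to two children, the number of leaves is $2^t$ and the total work is roughly the $t$-th power of the per-level enumeration, not a single factor of $2^{(\log m)^{(1-\delta)/\delta}\log\log m}$; you would need a much tighter accounting of how the list sizes shrink with depth to recover the claimed bound. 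Finally, the heart of your argument---the decomposition lemma that justifies the ``pivot profiles''---is exactly the piece you declare unresolved, so the proposal as a whole does not constitute a proof.
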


We also prove (in \Cref{sec:hardness_pricing}) that the above upper bound is tight. 

\subsection{Techniques}\label{sec:techniques}

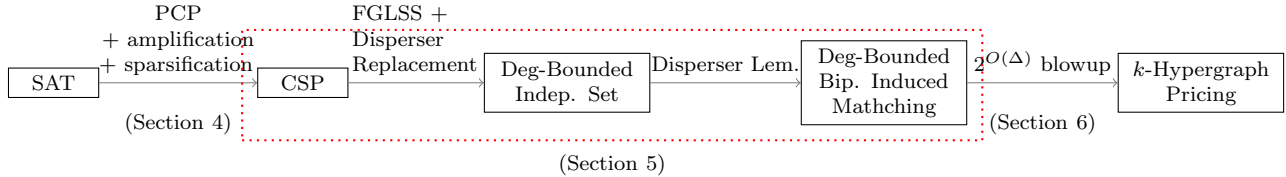
\begin{figure*}
\begin{center}
\begin{tikzpicture}[auto, node distance=0.1cm]
    \scriptsize
    \tikzstyle{mybox}=[rectangle, draw=black, text centered, text=black, text width=2cm]
%
    \node (sat) [rectangle, draw=black, text centered, text=black, text width=1cm] {SAT};
    \node (csp) [rectangle, draw=black, text centered, text=black, text width=1cm, right=2.1 of sat] {CSP};
    \node (indep) [mybox, right=1.8 of csp] {Deg-Bounded Indep. Set};
    \node (induce) [mybox, right=2 of indep] {Deg-Bounded Bip. Induced Mathching};
    \node (pricing) [mybox, right=2 of induce] {$k$-Hypergraph Pricing};
    \draw[->, thin, gray]
       (sat) to 
       node (PCP-text) [black, text width=2.1cm]{\scriptsize \centering PCP \\ + amplification\\ + sparsification}
       node[black, below = 0.3 of PCP-text]
         {(Section~\ref{sec: PCP})}
       (csp);
    \draw[->, thin, gray]
       (csp) to 
       node (FGLSS-text) [black,text width=1.7cm] {\scriptsize \centering FGLSS + Disperser Replacement}
       (indep);
    \draw[->, thin, gray]
       (indep) to
       node (Disperser-text) [black] {\scriptsize Disperser Lem.} 
       (induce);
    \draw[->, thin, gray]
       (induce) to
       node (pricing-reduction-text) [black] {\scriptsize $2^{O(\Delta)}$ blowup}
       node[black, below = 0.3 of pricing-reduction-text]
              {(Section~\ref{sec:hardness})}  
       (pricing);
    \draw [red,thick,dotted]
        ($(csp.north west)+(-0.2,0.5)$)
        rectangle ($(induce.south east)+(0.2,-0.2)$)
        node[black, below = 0.8, midway]
          {(Section~\ref{sec: hardness of induced matching})};
\end{tikzpicture}
\end{center}
\caption{Hardness proof outline (more detail in \Cref{sec:overview}).}\label{fig:summary}
\end{figure*}

\danupon{The biggest problem of this section that we should discuss is that it's not consistent with the next section.}

The key ingredients in our proofs are {\em tight} connections between 3SAT, independent set, induced matching, and pricing problems. Our reductions are fairly tight in the sense that their improvement would violate the ETH. They are outlined in \Cref{fig:summary} (see \Cref{sec:overview} for a more comprehensive overview). 
Among several techniques we have to use, the most important new idea is a {\em simple new property of dispersers}. We show that an operation called {\em bipartite double cover} will convert a disperser $G$ to another bipartite graph $H$ whose maximum induced matching size is essentially the same as the size of maximum independent set of $G$. This property crucially relies on the fact that $G$ is a disperser.
We note that the disperser that we study here is not new -- it has been around for at least two decades (e.g., \cite{CohenW89,Sipser88}) -- but the property that we prove is entirely new. We provide the proof idea of this new property in \Cref{sec:disperser overview}. 

Our hardness proof of the maximum induced matching problem in $d$-degree-bounded graphs is inspired by the (implicit) reduction of Briest \cite{Briest08} and the previous (explicit) reduction of ours \cite{ChalermsookLN-SODA13}. These previous reductions are not tight in two aspects: (i) they do not result in a tight $d^{1-\epsilon}$ hardness for the maximum bipartite induced matching problem (Briest's reduction in \cite{Briest08} only gives a hardness of $d^{\delta}$ for some $\delta >0$, and our previous reduction in \cite{ChalermsookLN-SODA13} only gives a hardness of $d^{1/2 -\epsilon}$), and (ii) they do not give any hardness for subexponential-time approximation algorithms. 
%
%
In this paper, we make use of additional tools to improve these two aspects to get a tight reduction. The first tool is the new property of the disperser as we have discussed. The second tool is a new PCP which results from carefully combining known techniques and ideas from the PCP literature.

Our PCP requires an intricate combination of many known properties: That is, it must be sparse, as well as, having nearly-linear size, large degree, and small free-bit complexity. 
We explain some of the required properties here.
The sparsity and the size of the PCP are required in order to boost hardness of the $k$-hypergraph pricing problem to $n^{1/2 -\epsilon}$ (without these, we would not go beyond $n^{\delta}$ for some small $\delta$). 
The large degree of the PCP is needed to ensure that our randomized construction is successful with high probability. Finally, the small free-bit complexity is needed to get the $d^{1-\epsilon}$ hardness for the maximum bipartite induced matching and the independent set problems in $d$-degree-bounded graphs; this is the same idea as those used in the literature of proving hardness of the maximum independent set problem.  

Our proof of the hardness of the $k$-hypergraph pricing problem from the hardness of the bipartite induced matching problem is inspired by the previous proofs in \cite{BriestK11,ChalermsookCKK12}. Both previous proofs require a hardness of some special cases of the bipartite induced matching problem (e.g., \cite{ChalermsookLN-SODA13} requires that the input instance is a result of a certain graph product) in order to derive the hardness of the $k$-hypergraph pricing problem. In this paper, we provide insights that lead to a reduction that simply exploits the fact that the input graph has bounded degree, showing a clearer connection between the two problems. 

\subsection{Organization}
We give an overview of our hardness proof in \Cref{sec:overview}.
The hardness results are proved in three consecutive sections (i.e., Sections~\ref{sec: PCP}, \ref{sec: hardness of induced matching} and \ref{sec:hardness}). We first present our PCP in Section~\ref{sec: PCP}. The connection between SAT and the maximum independent set and bipartite induced matching problems is presented in Section~\ref{sec: hardness of induced matching}. Finally, Section~\ref{sec:hardness} shows a connection between the bipartite induced matching and the $k$-hypergraph pricing problems.  
A new approximation algorithm is shown in Section~\ref{sec:algo}. 

\section{Overview of the Hardness Proof}
\label{sec:overview}

Our proofs involve many parameters. To motivate the importance of each parameter and the need of each tool, we give the top-down overview of our proofs in an informal manner. The presentation here will be imprecise, but we hope that this would help readers understanding our proofs.

\subsection{The Main Reduction: SAT $\rightarrow$ pricing} At a very high level, our proof makes the following connection between \sat and the $h$-hypergraph pricing problem. 
%
%
%
We give a reduction that transforms an $N$-variable SAT formula into a $h$-hypergraph pricing instance $(\cset, \iset)$ with the following parameters: the number of items is $|\iset| = Nh$, the number of consumers is $|\cset| = 2^{h} \poly(N)$, and the hardness gap is $h$. The following concludes the interaction between parameters. 
%
%


{\footnotesize
\begin{equation}
\underbrace{\mbox{$N$-variable \sat}}_{\mbox{no gap}} 
  \quad \implies \quad  
  \underbrace{\mbox{$h$-hypergraph pricing with $|\iset| = Nh$, $|\cset| = 2^{h} \poly N$}}_{\mbox{hardness gap $h$}}
\label{eq:sat_to_pricing}
\end{equation}
}

The running time of our reduction is small, i.e., $\poly(|\cset|)$, and thus can be ignored for now. The hardness gap of $h$ means that any algorithm that could solve the pricing problem with an approximation factor of $o(h)$ would be able to solve \sat (optimally) within the same running time. Our main task is to obtain a reduction as in \Cref{eq:sat_to_pricing}, which we explain further in \Cref{sec:sat_indep_pricing}. Below we sketch how the reduction in \Cref{eq:sat_to_pricing} gives a hardness of the pricing problem.

\paragraph{Hardness of Pricing} It follows from \Cref{eq:sat_to_pricing} that, for $h= o(N)$, any $o(h)$-approximation algorithm for the pricing problem that runs in $\poly(|\iset|, |\cset|)$-time would imply a subexponential-time ($(2^{o(N)}\poly N)$-time) algorithm that solves \sat, thus breaking the ETH. So, we obtain a hardness of $h$ for any $h = o(N)$, assuming the ETH; this hardness can be written as $\min(h, \sqrt{\iset})$ since $|\iset| = Nh$.\danupon{Should we also explain the subexponential-time approximation hardness of pricing here?}

%
%

%
%



\subsection{Intermediate Steps: SAT $\rightarrow$ independent set/induced matching $\rightarrow$ pricing} \label{sec:sat_indep_pricing}

The main reduction (\Cref{eq:sat_to_pricing}) goes through two intermediate problems: (1)~the maximum independent set problem and (2)~the maximum induced matching problem on bipartite graphs. Both problems are considered in the case where an input graph has maximum degree $\Delta$. 
There are two intermediate steps. The first step transforms any $N$-variable \sat formula into a graph $G$, instance of the maximum independent set/induced matching problem such that the number of vertices $|V(G)| = N \Delta$, and the hardness gap $\Delta$.

%

{\footnotesize
\begin{equation}
\mbox{$N$-variable \sat} 
  \quad \implies \quad
  \underbrace{\mbox{$\Delta$-degree bounded indep. set with $|V(G)| = N \Delta$}}_{\mbox{hardness gap $\Delta$}}
\label{eq:sat_to_indepset}
\end{equation}
}

The second step transforms the independent set problem to the induced matching problem. The crucial thing of our transformation is that it does not blowup any parameter. 
{\scriptsize
\begin{equation}
\underbrace{\mbox{$\Delta$-degree bounded indep. set with $|V(G)| = N \Delta$}}_{\mbox{hardness gap $\Delta$}}
  ~ \implies ~
  \underbrace{\mbox{$\Delta$-degree bounded induced matching with $|V(G)| = N \Delta$}}_{\mbox{hardness gap $\Delta$}}
\label{eq:indepset_to_induce}
\end{equation}
}

The third step transforms the instance of the maximum independent set/induced matching problem to the $\Delta$-hypergraph pricing problem with the following parameters: the number of items is $|\iset| = |V(G)|$, the number of consumers is $|\cset| = 2^{\Delta} \poly |V(G)|$, and the hardness gap is $\Delta$.
%
%
{\footnotesize
\begin{equation}
  \underbrace{\mbox{$\Delta$-degree bounded ind. matching}}_{\mbox{hardness gap $\Delta$}}\implies
\underbrace{\mbox{$\Delta$-hyp. pricing with $|\iset| = |V(G)|$, $|\cset| = 2^{\Delta} \poly |V(G)|$}}_{\mbox{hardness gap $\Delta$}}
\label{eq:indepset_to_pricing}
\end{equation}
}


%

In \Cref{sec:intro:sat_indep,sec:intro:indep_induce,sec:intro:induce_pricing}, we explain the reductions in \Cref{eq:sat_to_indepset,eq:indepset_to_induce,eq:indepset_to_pricing} in more detail. Below we sketch how the reduction in \Cref{eq:sat_to_indepset} implies sub-exponential time inapproximability results for the maximum independent set and induced matching problems.

%
\paragraph{Hardness of Max. Independent Set and Max. Induced Matching} Suppose that there is a $o(\Delta)$-approximation $2^{o(|V(G)|/\Delta)}$-time algorithm.  This algorithm will solve \sat since the reduction in \Cref{eq:sat_to_indepset} gives a hardness gap of $\Delta$. Moreover, it requires  $2^{o(|V(G)|/\Delta)}=2^{o(N)}$ time since the reduction in \Cref{eq:sat_to_indepset} gives $|V(G)|=N\Delta$; this breaks the ETH. 

%
%

\subsection{Step I: SAT $\rightarrow$ Independent Set (\Cref{eq:sat_to_indepset})}\label{sec:intro:sat_indep}

Our reduction from \sat to the maximum independent set problem is simply a sequence of standard reductions (e.g., from \cite{FGLSS96,Trevisan01}) as follows: 
%
%
We start from \sat instance $\psi$, which is NP-complete but has no hardness gap.
We then construct a PCP for $\psi$, which thus outputs a CSP (Constraint Satisfaction Problem) instance $\phi_1$ with some small gap.
(Note that one may think of PCP as a reduction from a SAT instance with no gap to a CSP instance with some gap.)
Then we apply a {\em gap-amplification and sparsification scheme} to boost up the hardness-gap to (some value) $k$ and reduce the number of clauses to roughly the same as the number of variables.
So, now, we obtain another CSP instance $\phi_2$ from this step.
Finally, we apply the {\em FGLSS reduction} of Feige, Goldwasser, Lov\'asz, Safra and Szegedy \cite{FGLSS96} to obtain an instance graph $G$ of the maximum independent set problem with hardness-gap $k$.
%
%
%
These reductions give a $|V(G)|$-hardness for the independent set problem. To obtain a hardness for on $\Delta$-bounded-degree graphs, we follow Trevisan \cite{Trevisan01} who showed that by modifying the FGLSS reduction with {\em disperser replacement}, we can obtain a graph with maximum degree $\Delta\approx k$, thus implying a $\Delta$-hardness. 
Below we conclude this reduction. 

{\footnotesize
\begin{align}
\underbrace{\mbox{$N$-variable \sat $\psi$}}_{\mbox{(no hardness-gap)}}~
&\label{eq:pcp}\xRightarrow[\hspace*{5.2cm}]{\mbox{PCP}}
\quad\underbrace{\mbox{$N$-variable CSP $\phi_1$}}_{\mbox{(hardness-gap = $\gamma$)}}\\
&\label{eq:gap amplification}\xRightarrow[\hspace*{5.2cm}]{\mbox{gap amplification + sparsification}}
\quad\underbrace{\mbox{$N$-variable $(kN)$-clause CSP $\phi_2$}}_{\mbox{(hardness-gap = $k$ for all $k \geq \gamma$)}}\\
&\label{eq:fglss_disperser}\xRightarrow[\hspace*{5.2cm}]{\mbox{FGLSS + disperser replacement}}
\quad\underbrace{\mbox{Indep. Set on $G$ with $|V(G)|= N \Delta$ and $\Delta=k$}}_{\mbox{(hardness-gap = $\Delta$ = $k$)}}
\end{align}
}

Note that all tools that we use in the above are already used in the literature earlier. The hardness proof of the maximum independent set problem using gap amplification and sparsification before applying the FLGSS reduction (with no disperser replacement) were used in, e.g., \cite{BellareS94}. The hardness using FGLSS and disperser replacement (but no gap amplification and sparsification) was used in \cite{Trevisan01}. The main work in Step~I is showing how to combine all these tools together and how to tune parameters appropriately to get the desired subexponential-time hardness of the maximum independent set problem. 
Another tool that has not been used before in the context of the maximum independent set problem, which is important for our proof, is the nearly-linear size PCP of Moshkovitz and Raz \cite{MR10} and the PCP with low free-bit complexity of Samorodnitsky and Trevisan \cite{ST00}. Our reduction in a more detailed form is shown below. 

{\footnotesize
\begin{align}
\underbrace{\mbox{$N$-variable \sat $\psi$}}_{\mbox{(no hardness-gap)}}~
&\xRightarrow{\mbox{\tiny PCP (Moshkovitz-Raz + Samorodnitsky-Trevisan)}}~
\underbrace{\mbox{$N$-variable CSP $\phi_1$, acc. conf. $w_1=\gamma^{o(1)}$}}_{\mbox{(hardness-gap = $\gamma$)}}
\tag{\ref{eq:pcp}$^*$}
\label{eq:pcp refined}\\
 &\xRightarrow{\mbox{{\tiny gap amplification + sparsification}}}
 ~\underbrace{\mbox{$N$-variable $(kN)$-clauses CSP $\phi_2$, acc. conf. $w_2=k^{o(1)}$}}_{\mbox{(hardness-gap = $k = \gamma^t$ for any $t$)}}
 \tag{\ref{eq:gap amplification}$^*$}
 \label{eq:gap amplification refined} \\
  ~~&\xRightarrow[\hspace*{1cm}]{\mbox{\tiny FGLSS + disperser}}~~
  \underbrace{\mbox{Indep. Set on $G$: $\Delta=k$ and $|V(G)|=(kN)w_2$}}_{\mbox{(hardness-gap = $k$)}}
  \tag{\ref{eq:fglss_disperser}$^*$}
  \label{eq:fglss_disperser_refined}
\end{align}
}

In the above, ``acc. conf.'' is an abbreviation for the maximum number of accepting configurations. We note that among many parameters in the reduction above, two parameters that play an important role later are $\gamma$ and $t$. Below we explain the above equations. Further detail of \Cref{eq:pcp refined,eq:gap amplification refined} can be found in \Cref{sec: PCP} (\Cref{eq:pcp refined,eq:gap amplification refined} together give a nearly-linear size sparse PCP with small free-bit complexity and large degree). The detail of \Cref{eq:fglss_disperser_refined} can be found in \Cref{sec: hardness of induced matching}.

%
\paragraph{\Cref{eq:fglss_disperser_refined}}
To expand the detail of \Cref{eq:fglss_disperser} to \Cref{eq:fglss_disperser_refined}, we need to introduce another parameter $w_2$ denoting the {\em maximum number of satisfying assignments for each clause} (known as the {\em maximum number of accepting configurations}). It is quite well-known that if we start with an $N$-variable $M$-clause CSP instance $\phi_2$ with maximum number of accepting configurations $w_2$, we will obtain a graph $G$ with $|V(G)|=Mw_2$ and properties as in \Cref{eq:fglss_disperser} after applying the FGLSS reduction and the disperser replacement (for more detail, see \Cref{sec:FGLSS detail}). 
Since $M = N k$, it follows that to obtain a reduction as in \Cref{eq:fglss_disperser}, we need $w_2$ to be small; for an intuition, it is sufficient to imagine that $w_2=k^{o(1)}$. This makes $|V(G)|\approx (Nk)k^{o(1)}\approx Nk$ (note that we are being imprecise in our calculation). This leads to the refinement of \Cref{eq:fglss_disperser} as in \Cref{eq:fglss_disperser_refined}. 
%
%
%
Our next job is to obtain $\phi_2$ with the properties as in \Cref{eq:fglss_disperser_refined}; i.e., for any $k\geq \gamma$, it has $N$ variables, $kN$ clauses, and maximum number of accepting configurations $w_2=k^{o(1)}$. 

\paragraph{\Cref{eq:gap amplification refined}} We have not yet stated the precise relationship between $\gamma$ and $k$ in \Cref{eq:gap amplification} and also how to obtain $w_2=k^{o(1)}$. To do this, we have to introduce yet another parameter $t$ which we will call {\em gap amplification parameter}. By using the standard gap amplification technique with parameter $t$, we can transform a CSP instance $\phi_1$ with hardness gap $\gamma$ to an instance $\phi_2$ with hardness gap $k=\gamma^t$ with the same number of variables. This gap amplification process increases the number of clauses exponentially in $t$ but we can use the standard sparsification technique to reduce the number of clauses to $\gamma^t N$. Moreover, if the maximum number of accepting configurations in $\phi_1$ is $w_1$, then we will get $w_2=w_1^t$ as the maximum number of accepting configurations in $\phi_2$. We will show that we can obtain $\phi_1$ with $w_1=\gamma^{o(1)}$, which implies that $w_2=\gamma^{o(t)}=k^{o(1)}$. This leads to the refinement of \Cref{eq:gap amplification} as in \Cref{eq:gap amplification refined}.
%

\paragraph{\Cref{eq:pcp refined}}
Our last job is to obtain $\phi_1$ with the above properties; i.e., it has $N$ variables and the maximum number of accepting configurations is $w_1=\gamma^{o(1)}$ for the hardness gap of $\gamma$. 
%
%
Once we arrive at this step, the combination of nearly-linear size PCP of Moshkovitz and Raz \cite{MR10} and the PCP with low free-bit complexity of Samorodnitsky and Trevisan \cite{ST00}, as stated in~\cite[Corollary 14]{MR10}, gives us exactly what we need. 
%
%
This result states that any \sat formula of size $N$ can be turned into an $N^{1+o(1)}$-variable CSP with hardness gap $\gamma$ and $w_1= \gamma^{o(1)}$. This gives a refined version of \Cref{eq:pcp} as in \Cref{eq:pcp refined}.

\subsection{Step II: independent set $\rightarrow$ induced matching (\Cref{eq:indepset_to_induce})} \label{sec:intro:indep_induce}

We next reduce from the independent set problem to the bipartite induced matching problem. The key idea of this reduction is proving a new property of the disperser. The main result is that if we construct a hardness instance of the maximum independent set problem using disperser replacement (as in \cite{Trevisan01} and in this paper, as we outlined in the previous section), then the hardness of the maximum independent set and the maximum bipartite induced matching problems are essentially the same.

\begin{theorem}[Informal]\label{thm:indep to induce informal}
Let $G$ be a graph (not necessarily bipartite) constructed by disperser replacement. Then, there is a bipartite graph $H$ of roughly the same size as $G$ such that $\induce{H}\approx \alpha(G)$. 
\end{theorem}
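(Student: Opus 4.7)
The candidate construction is the \emph{bipartite double cover of $G$ with diagonal edges included}: take $V(H) := V(G)\times\{0,1\}$ and place $(u,0)(v,1) \in E(H)$ whenever either $u=v$ or $uv\in E(G)$. Then $|V(H)| = 2|V(G)|$ and $|E(H)| \le 2|E(G)| + |V(G)|$, so $H$ is essentially the same size as $G$. The goal is to prove both $\induce{H} \ge \alpha(G)$ and $\induce{H} \le (1+o(1))\alpha(G)$, where $\alpha(G)$ denotes the independence number.

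The easy direction is $\induce{H} \ge \alpha(G)$. Given an independent set $S$ of $G$, form the set of ``diagonal'' edges $M_S := \{(v,0)(v,1) : v\in S\}$; any edge of $H$ potentially connecting two distinct matching edges $(u,0)(u,1)$ and $(v,0)(v,1)$ with $u\ne v\in S$ must have the form $(u,0)(v,1)$ or $(v,0)(u,1)$, and these belong to $E(H)$ only if $uv\in E(G)$, which is impossible since $S$ is independent. Hence $M_S$ is an induced matching of size $|S|$.

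For the reverse direction, take an arbitrary induced matching $M = \{(u_i,0)(v_i,1) : i\in[m]\}$ in $H$ and split it into the \emph{diagonal} part $M_d = \{i : u_i=v_i\}$ and the \emph{off-diagonal} part $M_o = \{i : u_i\ne v_i\}$. The diagonal indices project to an independent set $\{u_i : i\in M_d\}$ of $G$: if $u_iu_j\in E(G)$ for a pair $i\ne j$ in $M_d$, then the edge $(u_i,0)(u_j,1) = (u_i,0)(v_j,1)$ of $H$ would connect two edges of $M$, violating induction. Hence $|M_d| \le \alpha(G)$, and the entire task reduces to bounding $|M_o|$.

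The off-diagonal part is the main obstacle and is where the disperser property of $G$ enters essentially. For any off-diagonal pair $i\ne j$ the induced-matching condition forces $u_iv_j, u_jv_i \notin E(G)$ and $u_i\ne v_j$, $u_j\ne v_i$, so projecting $M_o$ to $G$ gives a perfect matching $\{u_iv_i\}$ on $\{u_i\}\cup\{v_i\}$ whose inter-side edges are exactly this matching — but same-side edges (inside $\{u_i\}$ or inside $\{v_i\}$) are unconstrained. To convert $M_o$ into an independent set of $G$ of comparable size I would exploit the disperser replacement structure: the vertices of $G$ are partitioned into clouds, one per clause of the underlying CSP, each cloud carrying a disperser whose expansion property forces that any ``cross-edge-free'' matching inside a cloud must be short compared to the cloud's independence number. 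Combining this new disperser lemma with a separate, coarser argument for off-diagonal edges that span distinct clouds — which correspond directly to inter-cloud edges of the FGLSS structure and therefore can be handled by applying the diagonal argument at the cloud level — would yield $|M_o| \le o(\alpha(G))$, and thus $\induce{H} \le (1+o(1))\alpha(G)$. The core difficulty is establishing this cloud-wise disperser lemma, which is precisely the ``simple new property of dispersers'' the paper advertises as its main combinatorial novelty; everything else in the proof is a matter of translating the induced-matching condition into adjacency constraints in $G$.
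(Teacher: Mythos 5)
Your high-level architecture matches the paper's: take the bipartite double cover with diagonal edges, split an induced matching into its diagonal and off-diagonal parts, observe that the diagonal part projects to an independent set of $G$ and hence has size at most $\alpha(G)$, and then appeal to the disperser property to control the off-diagonal part. The lower bound $\induce{H}\geq\alpha(G)$ is done correctly. However, there are two concrete problems.

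First, your picture of where the dispersers live is wrong, and this distorts the plan. In the Trevisan-style disperser replacement used here, the disperser $H_i$ is attached to a \emph{variable} $x_i$, not to a clause: $H_i$ is a bipartite graph between $O_i$ (all FGLSS vertices whose partial assignment sets $x_i=1$) and $Z_i$ (those with $x_i=0$), and these two sides range over many different clauses. The cliques per clause are not where the dispersers sit. The crucial consequence is that after replacement the entire edge set of $\widehat{G}$ is $\bigcup_i E(H_i)$; every single edge — intra-cloud or inter-cloud — belongs to some disperser. So the ``separate, coarser argument for off-diagonal edges that span distinct clouds'' you propose is a phantom case: it simply does not exist in the actual structure, and invoking it shows the decomposition you would carry out is not the one that works.

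Second, the key combinatorial lemma is named but not stated precisely nor proved, and its informal version (``any cross-edge-free matching inside a cloud must be short compared to the cloud's independence number'') is both vague and mis-localized. What the paper actually needs is: if $H=(O,Z,E)$ is a $(d,\gamma)$-disperser on $n+n$ vertices, then $\bipp[H]$ has no ($\sigma$-semi-)induced matching of size more than $4\gamma n$. The proof is short and worth knowing: given such a matching $\{(x_i,0)(y_i,1)\}_{i}$, say with at least half the edges having $x_i\in O$, take the $\gamma n$ smallest $(x_i,0)$ under $\sigma$ as a set $X$, and take $\gamma n$ of the remaining $(y_j,1)$ not matched to $X$ as $Y$; the disperser property forces an edge of $\bipp[H]$ from $X$ to $Y$, which contradicts the (semi-)induced-matching condition because the $X$-endpoint precedes its partner's under $\sigma$. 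Partitioning your $M_o$ by which $H_i$ contributes each edge and summing the per-disperser bound (using that each FGLSS vertex lies in $O(qt)$ dispersers) gives $|M_o| = O(\gamma\,q t\, |V(\widehat{G})|)$, which after the parameter choice $\gamma\approx s/(qtw)$ is the desired lower-order term. Your proposal correctly identifies the skeleton and the easy steps, but it leaves the actual content of the disperser lemma unproved and proceeds from an incorrect picture of the decomposition that would derail the completion.
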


The graph $H$ in the above theorem is actually obtained by a variant of an operation, called {\em bipartite double cover}. 
A bipartite double cover of a graph $G$, denoted by $\bipm[G]$, is a bipartite graph $H=(X, Y, E)$ where $X$ and $Y$ are copies of $V(G)$, and any vertices $x\in X$ and $y\in Y$ are adjacent in $H$ if and only if $x$ and $y$ are adjacent in $G$. 
In our variant, we also have an edge joining two vertices $x,y$ that come from the same vertex in $G$. 
Throughout, we shall abuse bipartite double cover to also mean its variant.
This operation is a natural transformation that is frequently used in transforming a graph $G$ into a bipartite graph. Many of its properties have been studied, mostly in graph theory (e.g., \cite{Brualdi1980bigraphs,Sampathkumar1975tensor,Maruvsivc2008,Imrich2008multiple}).
In this paper, we show a new property of this transformation when applied to a disperser as follows.

\subsubsection{New Property of an Old Disperser}\label{sec:disperser overview}
%
A {\em disperser} is a bipartite graph $G=(X,Y,E)$ with a certain ``expanding property'' in the sense that it has a {\em small balanced bipartite independent set} as stated informally below (the formal definition can be found in \Cref{def:disperser formal}):  
\begin{definition}[Disperser (informal)]
Let $G= (X, Y, E)$ be a disperser with some parameter $k>0$.
Then $G$ has {\em no} independent set $S$ such that $|S\cap X|=|S\cap Y|\geq k$.  
\end{definition}
In this case, we say that the balanced bipartite independent set number of $G$, denoted by $\BBIS(G)$ is at most $k$. (For an intuition, think of $k$ as $k=\delta |X|$ for some small value $\delta$.) The disperser plays an important role in proving hardness of approximation; see, e.g., the use of disperser in Trevisan's construction in \cite{Trevisan01} for the hardness of approximating the maximum independent set problem on a bounded-degree graph. 
%
So, it is natural to study the properties of dispersers when considering the hardness of approximating the maximum independent set and related problems (the maximum induced matching problem, in particular).
%
%
In this paper, we show that if $G$ is a disperser, then there is a tight connection between its balanced bipartite independence number $\BBIS(G)$, its induced matching number, and the induced matching number of its bipartite double cover $\induce{\bipm[G]}$:
\begin{lemma}[Disperser Lemma (Informal)]\label{lmm:disperser informal}
If $G=(X, Y, E)$ is disperser, then 
$$\induce{\bipm[G]} = O(\induce{G}) = O(\BBIS(G)).$$
\end{lemma}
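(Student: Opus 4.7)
The plan is to prove the two asserted inequalities separately. The first, $\induce{G}=O(\BBIS(G))$, is a short probabilistic argument; the second, $\induce{\bipm[G]}=O(\induce{G})$, is the combinatorial heart of the lemma and is where the disperser hypothesis is crucial.

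\textbf{Proving $\induce{G}=O(\BBIS(G))$.} Let $M=\{x_iy_i\}_{i=1}^{m}$ be a maximum induced matching in $G$. Independently for each $i$, keep $x_i$ with probability $\tfrac12$ and keep $y_i$ otherwise, and let $S$ be the resulting set. Then $S$ is automatically an independent set in $G$: the only potentially-present $G$-edges among its vertices would be of the form $x_iy_j$ for $i\neq j$, which are ruled out by the induced-matching property, while $x_ix_j$ and $y_iy_j$ are impossible by bipartiteness. A Chernoff estimate ensures $|S\cap X|,|S\cap Y|\ge m/3$ with positive probability, exhibiting a balanced bipartite independent set of size $\ge m/3$ on each side. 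Hence $\BBIS(G)\ge m/3$, i.e., $\induce{G}\le 3\,\BBIS(G)$.

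\textbf{Proving $\induce{\bipm[G]}=O(\induce{G})$.} Let $M$ be an induced matching in $\bipm[G]$ of size $m$. The plan is to classify its edges by which bipartition classes of $G$ the two endpoints come from, producing four buckets: base edges with one endpoint in each of $X$ and $Y$ (of counts $a$ for $X$-on-the-left and $b$ for $Y$-on-the-left) and variant self-loop edges $u^Lu^R$ with $u\in X$ (count $c$) or $u\in Y$ (count $d$), so $m=a+b+c+d$. Unpacking the induced-matching constraint in $\bipm[G]$ by translating each forbidden cross-edge back to a forbidden $G$-edge yields two pieces of information. First, the $a$ base $XY$-edges project onto an induced matching of $G$, so $a\le\induce{G}$, and symmetrically $b\le\induce{G}$. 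Second, letting $X_3,Y_3$ be the $X$-, $Y$-endpoints of the self-loops and $X_{12},Y_{12}$ those of the base edges, there are no $G$-edges between $X_3$ and $Y_{12}\cup Y_3$, nor between $X_{12}\cup X_3$ and $Y_3$, so both pairs are bipartite independent sets in $G$. Applying the disperser property to each of these then forces at least one side to be smaller than the disperser parameter $k\asymp\BBIS(G)$, and a short case analysis combining this with $a,b\le\induce{G}$ and the already-established $\BBIS(G)\le 3\induce{G}$ reduces $c+d$ to $O(\induce{G})$, yielding $m=O(\induce{G})$.

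\textbf{Main obstacle.} The principal difficulty lies in bounding the self-loop contributions $c$ and $d$. For any subset $X_3\subseteq X$, the matching $\{u^Lu^R:u\in X_3\}$ is \emph{automatically} induced in $\bipm[G]$ regardless of the rest of $G$'s structure, so without further constraints $\induce{\bipm[G]}$ would be trivially lower-bounded by $|X|$, which is generally much larger than $\induce{G}$. The disperser hypothesis is precisely what tames this contribution: its expansion implies that if $X_3$ were large, then nearly all of $Y$ would be in its $G$-neighborhood, which in turn would force $Y_{12}\cup Y_3$ (which must lie outside this neighborhood) to be tiny, collapsing the rest of the matching. Making this pinch quantitative, and confirming that it truly bounds $c+d$ in terms of $\induce{G}$ rather than merely $\BBIS(G)$, will require invoking the full formal disperser guarantee rather than the informal BBIS statement alone; the careful interplay between the two bipartite-independent-set constraints and the additive decomposition $m=a+b+c+d$ is the technical crux of the argument.
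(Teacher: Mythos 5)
Your handling of $\induce{\bipm[G]} = O(\induce{G})$ has a genuine gap. You interpret $\bipm[G]$ as the variant \emph{with} the identity edges $u^Lu^R$, and then hope the disperser property bounds the self-loop count $c+d$. But with those edges present the inequality is simply false for any disperser with small $\gamma$: the set $\{u^Lu^R : u \in X\}$ is already an induced matching of size $|X|$ (since $X$ is an independent set in the bipartite graph $G$, there is never a cross edge $u^Lv^R$ or $v^Lu^R$ between two $X$-vertices), yet $\induce{G} = O(\BBIS(G)) = O(\gamma |X|)$, a factor $1/\gamma$ smaller. In your bucket notation this is the case $Y_3 = Y_{12} = X_{12} = \emptyset$, so your ``pinch'' has nothing to pinch against --- the disperser constrains pairs of large sets $S\subseteq X$, $T\subseteq Y$, and here $T$ is empty, so no contradiction arises. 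The paper's lemma (and the formal Lemma~\ref{lem: properties of dispersers}, which explicitly says ``using only edges of type $E_2$'') is about the double cover \emph{without} identity edges; that is also the only reading under which the paper's remark that the lemma ``holds for any bipartite graph $G$'' is true. Without identity edges the bound is elementary and does not use the disperser at all: every edge of $\bipm[G]$ is $x^Ly^R$ or $y^Lx^R$ with $xy\in E(G)$, so $\bipm[G]$ is a disjoint union of two copies of $G$ (on $X^L\cup Y^R$ and on $Y^L\cup X^R$); partitioning $M$ by component and projecting each half down to $G$ gives two induced matchings of $G$, hence $\induce{\bipm[G]}\le 2\induce{G}$. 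The disperser hypothesis enters this informal lemma only through $\induce{G}=O(\BBIS(G))$, and, in the formal development, through the separate $\sigma$-semi-induced matching bound, where the identity edges are handled via the independence number of the FGLSS graph rather than by any disperser argument.

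Two minor remarks. Your probabilistic argument for $\induce{G}=O(\BBIS(G))$ works but is more than is needed: deterministically taking $\{x_1,\ldots,x_{\lfloor m/2\rfloor}\}\cup\{y_{\lfloor m/2\rfloor+1},\ldots,y_m\}$ already gives a balanced bipartite independent set of size $\lfloor m/2\rfloor$, which is all the paper does. Also, your case analysis invokes ``the already-established $\BBIS(G)\le 3\induce{G}$'', but your first paragraph proves the reverse direction $\induce{G}\le 3\BBIS(G)$; the converse need not hold for bipartite graphs, and neither the lemma nor the paper ever claims it.
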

(The formal version of the above lemma can be found in \Cref{lem: properties of dispersers}.)
In fact, \Cref{lmm:disperser informal} holds for any bipartite graph $G$, but we only need to apply it to dispersers. 
This lemma is crucial in our construction.
Its proof is so simple that we can sketch it here.
First, we show that $\induce{G} = O(\BBIS(G))$.
Let $M = \{x_1y_1, \ldots, x_ty_t\}$ be any induced matching of size $t$ in $G$, where $x_i\in X$ and $y_i\in Y$ for all $i$.
Observe that the set $S=\{x_1, \ldots x_{\lfloor t/2\rfloor}\}\cup\{y_{\lfloor t/2\rfloor+1}, \ldots, y_t\}$ is a balanced bipartite independent set.
So, it follows that $\BBIS(G)\geq \lfloor \induce{G}/2\rfloor$ and thus $\induce{G} = O(\BBIS(G))$ as desired.
Observe that our proof simply exploits the fact that $G$ is bipartite.

Next, we prove that $\induce{\bipm[G]} = O(\induce{G})$.
As the equation looks natural, one might wonder if this holds for {\em any graph}.
Unfortunately, we have to answer this question negatively since there is a simple counter example showing that this claim does not hold for a general graph. 
(Consider, for example, a graph $H$ with a vertex set $\{x_1, \ldots, x_t\}\cup \{y_1, \ldots, y_t\}$ where edges are in the form $x_iy_i$, $x_ix_j$ and $y_iy_j$, for all $i$ and $j$. In other words, the graph $H$ consists of two equal-sized cliques that are connected by a perfect matching. It can be seen that $\induce{H}=2$ whereas $\induce{\bipm[G]}\geq t/2$.) 
Our proof again exploits the fact that the graph $G$ is bipartite. 
Consider an induced matching $M$ in $\bipm[G]$. Let $U'$ and $U''$ be the two bipartitions of $\bipm[G]$. 
For any vertex $x$ in the bipartition $X$ of $G$, let us write its copy in $U'$ and $U''$ as $x'$ and $x''$, respectively. We do the same thing for every vertex $y\in Y$. 
Observe that edges in $\bipm[G]$ must be in the form $x'y''$ or $x''y'$, for some $x\in X$ and $y\in Y$. 
%
It follows that $M$ (an induced matching in $\bipm[G]$) must be in the form $\{x'_1y''_1, \ldots, x'_ty''_t\} \cup \{x''_{t+1}y'_{t+1}, \ldots, x''_{t'}y'_{t'}\}$ where, for all $i$, $x_i\in X$, $y_i\in Y$. We can use $M$ to construct two induced matchings in $G$: $M_1=\{x_1y_1, \ldots, x_ty_t\}$ and $M_2 = \{x_{t+1}y_{t+1}, \ldots, x_{t'}y_{t'}\}$ (it is not hard to show that $M_1$ and $M_2$ are induced matchings). Thus,  $\induce{G}\geq \induce{\bipm[G]|}/2$, implying that $\induce{\bipm[G]} = O(\induce{G})$ as desired.

The property of dispersers in \Cref{lmm:disperser informal}, when plugged into the construction of Trevisan \cite{Trevisan01}, immediately implies \Cref{thm:indep to induce informal}. 
Combining \Cref{thm:indep to induce informal} with the hardness of the maximum independent set problem, we immediately obtain the hardness of the maximum induced matching problem in a bipartite graph.\danupon{BTW, we should make it clear that we always talk about bipartite graphs}  Further detail of this step can be found in \Cref{sec: hardness of induced matching}


\subsection{Step III: Induced matching $\rightarrow$ Pricing (\Cref{eq:indepset_to_pricing})}\label{sec:intro:induce_pricing}

In the last step of our reduction, we reduce from the bipartite induced matching problem to the pricing problem. Given a bipartite graph $G=(U,W,E)$, we convert it to an instance of the pricing problem by thinking of the left vertices $U$ as consumers having different budgets and the right vertices $W$ as items. The main idea is that the price of each item will tell us which consumer it should be matched to (to get a solution for the bipartite induced matching problem); i.e., if an item $I$ has price $p$, then it should be matched to the consumer of budget roughly $p$. To make this idea work, we have to make consumers' budgets {\em geometrically} different, as well as the number of consumers. To be precise, we need to convert each vertex $u\in U$ into some $2^i$ consumers in such a way that any two vertices $u$ and $u'$ sharing the same neighbor $v$ must be converted to different numbers of consumers. This leads to an exponential blowup. We need a blowup of $2^{O(\Delta)}$ intuitively because each vertex $v\in V$ has degree $\Delta$. This is how the exponential blowup comes into picture.  
Further detail of this step can be found in \Cref{sec:hardness_pricing}.

\section{Preliminaries}
\label{sec:prelim}

We use standard graph terminologies as in ~\cite{DiestelBook}.
We denote any graph by $G=(V,E)$.
When we consider more than one graph, we may denote the set of vertices and edges of a graph $G$ by $V(G)$ and $E(G)$, respectively.
A set of vertices $S\subseteq V$ is {\em independent} (or {\em stable}) in $G$ if and only if $G$ has no edge joining a pair of vertices $u,v\in V$. 
A set of edges $M\subseteq E$ is a {\em matching} in $G$ if and only if no two edges of $M$ share an end-vertex, and a matching $M$ is an {\em induced matching} in $G$ is the subgraph of $G$ induced by $M$ is exactly $M$. 
That is, $M$ is an induced matching in $G$ if and only if, for every pair of edges $uv, ab\in M$, $G$ has none of the edges $ua,ub,va,vb$.

\paragraph{Semi-Induced Matching:}
To prove the hardness of pricing problems, we need a slight variant of the maximum induced matching problem, called the maximum semi-induced matching problem.   
Given a {\em permutation} (a.k.a, a {\em total order}) $\sigma$ of $V$, a set of edges $M\subseteq E$ is a {\em $\sigma$-semi-induced matching} in
$G$ if and only if, for every pair of edges $uv, ab \in M$ such
that $\sigma(u)<\sigma(a)$, $G$ has none of the edges $ua, ub$.
Given any graph $G$ and a total order $\sigma$, we use the notation $\sinducesigma{\sigma}{G}$ to denote the size of a maximum $\sigma$-semi-induced matching in $G$, and let $\sinduce{G} = \max_{\sigma} \sinducesigma{\sigma}{G}$.  
Notice that for any $\sigma$, if $M$ is an induced matching in $G$, then $M$ is also a $\sigma$-semi-induced matching in $G$, so we must have $\induce{G} \leq \sinducesigma{\sigma}{G} \leq \sinduce{G}$.  
In the maximum semi-induced matching problem, our goal is to compute $\sinduce{G}$.  

Our hardness proof of the maximum bipartite induced matching problem will, in fact, show a stronger property than just bounding the size of a maximum induced matching. 
That is, in the completeness case, the reduction guarantees that $\induce{G} \geq c$ while in the soundness case, it gives $\sinduce{G} \leq s$ (where $s$ and $c$ are soundness and completeness parameters, respectively).
Notice that $\sinduce{G} \leq s$ implies $\induce{G} \leq s$, so this stronger property implies the $(c/s)$-hardness of approximating the maximum bipartite induced matching problem as a consequence. 
Bounding the size of a maximum semi-induced matching in the soundness case seems to be necessary for us here in order to prove the hardness of the $k$-hypergraph pricing problem.

\paragraph{The Pricing Problems:}
In an equivalent formulation of the $k$-hypergraph pricing problem that we will be using throughout the paper, the pricing instance is given by two sets $(\cset, \iset)$ where $\cset$ and $\iset$ are the sets of consumers and items, respectively. 
Each consumer $c \in \cset$ is associated with a budget $B_c$ and an item set $S_c \subseteq \iset$. 
We have an additional constraint that $|S_c| = k$.  
It is easy to see that this formulation is equivalent to the hypergraph formulation, i.e., each vertex corresponds to an item and each edge corresponds to a consumer, and the additional constraint $|S_c| = k$ ensures that the size of each hyperedge is $k$. 
For our purpose, this formulation has an advantage over the hypergraph formulation since we will be dealing with several graph problems and connections between them, which can easily create confusion between a number of graphs that come up in the same context.   
\paragraph{Constraint Satisfaction Problems:} 
One of the most fundamental problems in theoretical computer science is
$k$-SAT, where we are given a CNF formula $\phi$, and the goal is to
decide whether there is an assignment to boolean variables of $\phi$
that satisfies all the formula. 
In the maximization version of $k$-SAT, the goal is to find an
assignment that maximizes the number of clauses of $\phi$ satisfied. 

The {\em $k$-constraints satisfaction problem ($k$-CSP)} is a
generalization of $k$-SAT, in which each clause is a (boolean) function
$\Pi_j$ on $k$ (boolean) variables, and the goal of $k$-CSP is to find
an assignment to variables that satisfies as many clauses as possible.
That is, the goal is to find an assignment $f$ such that 
$\Pi_j(f_j)=1$ for all clause $\Pi_j$, where $f_j$ is a 
{\em partial assignment} restricted to only variables that appear in $\Pi_j$. 
We use the term {\em assignment of a clause} $\Pi_j$ to mean  
a partial assignment restricted to variables in $\Pi_j$.
For example, suppose $\Pi_j$ consists of variables $x_1,x_2,x_3$; 
then $(1,0,1)$ is an assignment of $\Pi_j$ where $x_1=1$, $x_2=0$ and
$x_3=1$. 
An assignment $f_j$ of $\Pi_j$ is a {\em satisfying assignment} 
for the clause $\Pi_j$ if and only if $\Pi_j(f_j)=1$, 
i.e., $f_j$ satisfies $\Pi_j$.

\paragraph{Exponential Time Hypothesis:} The complexity assumption that has received more attention recently is the 
{\em Exponential Time Hypothesis} (ETH), which rules out the existence of
subexponential-time algorithms that decide $k$-SAT. 
The formal statement of this conjecture is as follows.

\begin{hypothesis}[Exponential-Time Hypothesis (ETH)]
\label{hypo:ETH}
For any integer $k\geq 3$, there is a constant $0<q_0(k)<1$ such that
there is no algorithm with a running time of $2^{q N}$, for all
$q<q_0(k)$, that solves $k$-SAT where $N$ is the size of the instance.
In particular, there is no subexponential-time algorithm that solves
$3$-SAT. 
\end{hypothesis}

Indeed, the ETH was first stated in terms of the number of variables. 
Impagliazzo, Paturi and Zane \cite{IPZ01} showed that the statement
is equivalent for all the parameters, i.e., 
$N$ in the statement can be the number of variables, the number of
clauses or the size of the instance.
For our purpose, we state the theorem in terms of the instance size. For more discussion related to the ETH, we refer readers to a comprehensive survey by Lokshtanov et~al.~\cite{LokshtanovMS11} and references therein.


\section{Nearly-linear size sparse PCP with small free-bit complexity and large degree}
\label{sec: PCP} 

This section implements the reductions in~\Cref{eq:pcp refined} and \Cref{eq:gap amplification refined} as outlined in the overview section. 
%
%
Informally, given a SAT formula, we need to construct a CSP with the following property: (1) We need it to be {\em sparse}, i.e., the number of clauses is small compared to the number of variable. (2) The {\em amortized free-bit complexity}, i.e., the number of satisfying assignments of each clause, can be made arbitrarily small. (3) The {\em degree}, i.e., the number of occurrences of each variable, is large. The last requirement is one of the main reasons that we have to slightly modify a PCP. 
Moreover, to be able to use the ETH, we need the size, i.e., the number of variables, to be nearly-linear, so we start from the nearly linear sized PCP of Moshkovitz and Raz~\cite{MR10}. 
The following theorem summarizes the properties of our PCP.

\begin{theorem}[Nearly-linear size sparse PCP with small free-bit complexity and large degree]
\label{thm: CSP} 
Let $k, t$ be parameters and $\epsilon = 1/k$. Also, let $\delta>0$ be any parameter. 
There is a randomized polynomial-time algorithm that transforms a 3SAT formula of size $N$ to a $(tq)$-CSP formula $\phi$, where $q = k^2 + 2k$, that satisfies the following properties: 
\begin{itemize} 
\item (Small Number of Variables and Clauses) The number of variables is at most $q N^{1+\epsilon}$\danupon{LATER: Before it was $q N^{1+\epsilon+\delta}$. We should check whether we have to make changes in later sections due to this change.}, and the number of clauses is $M = 100q2^{t(k^2+1)}N^{1+\epsilon+\delta}$. 

\item (Big Gap between Completeness and Soundness) The value of the \yi is at least $c=1/2^{t+1}$, and the value of \ni is $s=2^{-t(k^2+2)}$.

\item (Free-Bit Complexity) For each clause in $\phi$, the number of satisfying assignments for such clause is $w=2^{2kt}$. Moreover, for each variable $x_j$ that appears in a clause, the number of satisfying assignments for which $x_j = 0$ is equal to the number of satisfying assignments for which $x_j = 1$.  

\item (Large Degree) For each variable $x_j$, the total number of clauses in which $x_j$ appears is $M_j \geq N^{\delta}2^{t(k^2+1)}$.\danupon{This is what is different from other PCPs}  

\end{itemize}  
\end{theorem}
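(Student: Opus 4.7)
The plan is to build the claimed CSP in three stages, accumulating the four required properties (size, gap, free-bit/balance, degree) one after another, and then tuning parameters at the end.

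\textbf{Stage 1 (Base PCP).} I start with the nearly-linear size 2-query projection PCP of Moshkovitz--Raz and compose it with the Samorodnitsky--Trevisan low free-bit-complexity inner verifier, as packaged in \cite{MR10}. This yields, from a 3SAT formula of size $N$, a $q$-CSP $\phi_0$ on at most $q\cdot N^{1+\epsilon}$ variables with $q = k^2 + 2k$, with some $M_0$ clauses (bounded by a slightly superlinear function of $N$), perfect completeness (or completeness $\tfrac12$ to match the target), soundness $\gamma_0 \approx 2^{-(k^{2}+2)}$, and exactly $w_0 = 2^{2k}$ satisfying assignments per clause. The Samorodnitsky--Trevisan test is symmetric under the natural $\mathbb{F}_2$-action on each queried variable, so I inherit the balance property (each variable appearing in a clause is $0$ in exactly half of its satisfying assignments and $1$ in the other half).

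\textbf{Stage 2 (Gap amplification by $t$-fold product).} I build $\phi_1$ whose variable set is the same as $\phi_0$'s and whose clauses are $t$-tuples $(\Pi_{j_1},\ldots,\Pi_{j_t})$ of clauses of $\phi_0$, satisfied iff all $t$ components are. Each clause of $\phi_1$ has arity $tq$ and exactly $w = w_0^t = 2^{2kt}$ satisfying assignments, with the balance property preserved coordinatewise. The completeness drops to $1/2^t$ and the soundness drops to $\gamma_0^{\,t} \approx 2^{-t(k^2+2)}$; so far so good, but the number of clauses is $M_0^t$, far too many.

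\textbf{Stage 3 (Random sparsification).} I independently sub-sample $M = 100q\cdot 2^{t(k^2+1)} N^{1+\epsilon+\delta}$ clauses of $\phi_1$ uniformly at random (with repetition). The Yes/No values of the sub-sampled CSP concentrate around their expectations: by a standard Chernoff + union bound over the at most $2^{qN^{1+\epsilon}}$ assignments, completeness stays $\geq 1/2^{t+1}$ and soundness stays $\leq 2^{-t(k^2+2)}$ with high probability, provided $M$ is at least a suitable $\mathrm{poly}(N)\cdot 2^{t(k^2+2)}$ — which is exactly why the $2^{t(k^2+1)}\cdot N^{\epsilon+\delta}$ factor is there. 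The same Chernoff bound applied per variable gives that each variable's degree $M_j$ is concentrated around its expectation $\tfrac{tq}{qN^{1+\epsilon}}\cdot M$, which is $\ge N^{\delta}\cdot 2^{t(k^2+1)}$ as required, with failure probability $o(1/N)$.

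\textbf{Main obstacle.} The tension is between the three competing demands: the sparsification has to be \emph{aggressive} enough to get $M = O(2^{t(k^2+1)} N^{1+\epsilon+\delta})$ clauses (rather than $N^{1+\epsilon}\cdot 2^{t(k^2+2)}$ which would be too many for Step~I of \Cref{sec:intro:sat_indep}), but \emph{gentle} enough that (i) the soundness still holds over all $2^{qN^{1+\epsilon}}$ assignments, forcing $M\gtrsim 2^{t(k^2+2)}$ up to lower-order factors, and (ii) the minimum degree is $N^\delta\cdot 2^{t(k^2+1)}$. The extra $N^\delta$ factor is precisely the slack that makes all three bounds simultaneously achievable, and balancing these three Chernoff computations against the clause budget $M$ is the main technical bookkeeping of the proof. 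The balance-per-variable property is simply transported from the base PCP through both the tensor product and the uniform sub-sampling, so it requires no extra work.
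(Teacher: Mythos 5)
Your proposal follows essentially the same route as the paper: start from the Moshkovitz--Raz + Samorodnitsky--Trevisan PCP (\cite[Corollary~14]{MR10}), read it as a $q$-CSP, amplify the gap by taking AND of $t$ independently and uniformly chosen clauses, sample $M$ such conjunctions, and verify each property (size, completeness, soundness, balance, degree) with a Chernoff + union-bound argument. Your "full tensor product then sub-sample with repetition" is exactly the same random process as the paper's "independently draw $t$ clauses $M$ times and AND them," so Stages~2 and~3 are identical to the paper's gap-amplification step, and your balance and soundness/completeness bookkeeping matches.

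The one place where you diverge is that the paper first \emph{modifies} the PCP (Corollary~3.3) by padding the verifier with $\delta\log N$ extra dummy random bits, which has the effect of duplicating each clause of the base CSP $N^\delta$ times so that every variable has degree $\geq N^\delta$ in $\phi$ before amplification. You skip this step and instead lower-bound the degree of a variable in the base CSP by~$1$, getting expected degree $\approx (t/N^{1+\epsilon})\cdot M \geq N^\delta 2^{t(k^2+1)}$ after sampling. This actually works: duplicating every clause $N^\delta$ times multiplies both $m_j$ and $m$ by $N^\delta$, so $m_j/m$ — and hence the expected degree after sampling — is unchanged, and the Chernoff concentration (failure probability $\exp(-\Omega(N^\delta 2^{t(k^2+1)}))$, union bound over $\leq qN^{1+\epsilon}$ variables) still goes through. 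So the modification step in the paper is not load-bearing for the stated degree bound; you can think of it as the authors' way of making the min-degree of $\phi$ itself explicitly polynomial (which they cite as needed for the downstream disperser construction when $t$ is small), but the $N^\delta$ factor is ultimately supplied by the choice of $M$, not by the padding. Everything else in your argument agrees with the paper's proof, including the slightly informal handling of the completeness-$1/2$ normalization and the per-variable balance of the Samorodnitsky--Trevisan linear test.
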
 

The outline of our proof is as follows.
First, we take a PCP with nearly-linear size and optimal amortized free-bit complexity.  
We slightly modify PCP to satisfy our desired property and
then transform the PCP into a CSP instance.
The CSP instance that we obtain at this point has small hardness gap (i.e., the ratio between completeness and soundness parameters), 
so we apply a standard {\em gap-amplification scheme} (see~\cite{Zuckerman96}) to 
amplify this gap and obtain a final CSP instance.

\subsection{PCP Construction}
\label{sec:PCP construction}


A {\em probabilistically checkable proof} system (PCP) is a characterization of an NP problem.
A PCP system for a language $L$ of size $N$ consists of a (randomized)
verifier $V$ on an input $\phi$ that has an oracle access to a proof 
string $\Pi$ given by a prover.
To decide whether to ``accept'' or ``reject'' the proof,
the verifier flips a coin a certain number of times, forming a random string $r$,
and then queries $q$ positions of the proof corresponding to $r$. 
(These $q$ positions depends on the random string $r$.)
We call the values read from the proof string {\em answers}.
Given a random string $r$, we denote by $b_1(r),\ldots, b_q(r)$ 
the indices of the proof bits read by the verifier. 
A {\em configuration} is a $(q+1)$-tuple $(r, a_1,\ldots, a_q)$ 
where $a_i$ is the value of the position $b_1(r)$ of the proof bit.
We say that a configuration $(r,a_1,\ldots, a_q)$ is {\em accepting} if the random string $r$ 
of the verifier and the answers $a_1,\ldots, a_q$ from the prover cause the verifier 
to accept the proof.
A subset of $f$ queries are {\em free} if, for any possible answers
to these queries, there is a unique answer to each of the other
queries that would make the verifier accepts.


Our starting point in constructing the desired CSP 
is the following PCP characterization of NP by Moshkovitz and Raz~\cite[Corollary 14]{MR10}.


\begin{theorem}[\cite{MR10} + \cite{ST00}]
\label{thm: MR}
For any sufficiently large constant $k \geq \omega(1)$,
3SAT on input of size $N$ has a verifier that uses
$(1+ \epsilon) \log{N}$ random bits to pick $q = k^2 + 2k$ queries 
to a binary proof, such that only $2k$ of the queries are free, 
i.e.,  for each random string, there are $2^{2k}$ possible
satisfying assignments of the queried bits in the proof.
The verifier has completeness $1-\epsilon$ and
soundness error at most $2^{-k^2 +1}$.
Moreover, the acceptance predicate is linear.
\end{theorem}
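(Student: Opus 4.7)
The plan is to obtain this verifier by composing two known PCP constructions in the standard outer/inner fashion: the Moshkovitz-Raz nearly-linear-size 2-query projection PCP will play the role of the outer verifier, and the Samorodnitsky-Trevisan long-code-based test will play the role of the inner verifier. First I would invoke the Moshkovitz-Raz result, which transforms a 3SAT instance of size $N$ into a 2-query projection game of size $N^{1+o(1)}$ with soundness error as small as any chosen $\eta=\eta(N)$, at the cost of an outer alphabet of size $\mathrm{poly}(1/\eta)$. The outer verifier uses $(1+\epsilon/2)\log N$ random bits, reads one symbol from each of two proof tables, and checks a projection constraint $\pi_r(a)=b$ determined by its randomness $r$.

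Next I would wrap each outer symbol in its long code, so that the prover writes down, for every position of each outer table, a Boolean function $f:\{-1,1\}^n\to\{-1,1\}$, where $n$ is the outer alphabet size. For each outer random string $r$, the inner verifier runs the $(k^2+2k)$-query test of Samorodnitsky-Trevisan on the two long codes indexed by the endpoints of $r$. This test first samples $2k$ direction strings uniformly at random, then queries $k^2$ further long-code bits obtained as products of pairs of those directions, and finally accepts iff an $\mathbb{F}_2$-linear relation on the $q$ answers holds. For each fixed choice of the $2k$ direction answers there is exactly one completion of the remaining bits that satisfies the relation, giving exactly $2^{2k}$ accepting configurations per outer random string, which realizes the claimed $2k$ free bits and establishes linearity of the acceptance predicate.

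Third I would stitch the randomness and analyze completeness and soundness. The outer verifier uses $(1+\epsilon/2)\log N$ bits and the inner verifier uses $O(k\log n)$ bits to sample the directions, so taking $\eta$ to be a sufficiently large constant depending on $k$ keeps the total below $(1+\epsilon)\log N$. Completeness is immediate: starting from a satisfying assignment of the 3SAT formula one writes down the long codes of a canonical outer prover strategy, and both the projection check and each linear inner test accept with probability at least $1-\epsilon$. For soundness I would invoke the standard list-decoding/composition argument: if the composed verifier accepts with probability exceeding $2^{-k^2+1}$, a Fourier-analytic influence-decoding of the purported long codes yields a short list of candidate outer symbols at each vertex, from which one extracts an outer prover succeeding with probability greater than $\eta$, contradicting the outer soundness.

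The hard part will be calibrating parameters so that all three requirements hold simultaneously: the outer soundness $\eta$ and the resulting alphabet size $n$ must be small enough that the influence-decoding loss in the composition is absorbed by the inner soundness $2^{-k^2+1}$, yet $n$ must also be small enough that the inner randomness $O(k\log n)$ fits inside the $(\epsilon/2)\log N$ slack provided by the outer PCP. Since $k$ is growing but slowly, a careful quantitative choice of $\eta$ as a function of $k$ suffices; once this choice is fixed, the remainder is routine bookkeeping in the outer/inner composition paradigm.
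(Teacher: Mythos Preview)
Your proposal is correct and matches the approach the paper indicates. Note, however, that the paper does not actually prove this theorem: it is quoted as \cite[Corollary~14]{MR10}, and the paper only remarks in one sentence that the result arises from plugging the Moshkovitz--Raz nearly-linear-size projection PCP in as the outer verifier for the Samorodnitsky--Trevisan linearity-based test. Your outline is exactly this outer/inner composition, so you have supplied the proof sketch that the paper merely cites.
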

In fact, the main result in \cite{MR10} is the 2-query projection PCP
(a.k.a. the label~cover problem) with sub-constant error and nearly-linear size.
In \cite{ST00}, Samorodnitsky and Trevisan constructed a PCP with
optimal amortized free-bit complexity via linearity testing and
used a 2-query projection PCP as an outer PCP.
For simplicity of parameters and since we only need the completeness of $1/2$ in our construction, we replace the completeness of $1-\epsilon$ by $1/2$ from now on.\parinya{Added this sentence.} 

Since the acceptance predicate of the verifier is linear, 
we can assert that, for each random string $r$ and each integer $j: 1 \leq j \leq q$, 
the number of accepting configurations for which $\Pi_{b_j(r)} = 0$ 
is equal to the number of accepting configurations for which $\Pi_{b_j(r)} =1$.  

Now we describe the (slight) modification of PCP. 
We want each proof bit to be participated in at least $N^\delta$ random strings in order to ensure sufficient success probability in later steps. 
Therefore, we modify the PCP by allowing the verifier to flip $\delta \log N$ extra random coins, so now the number of random bits needed is $(1+\epsilon+ \delta) \log N$. 
The verifier does not perform any extra computation based on these new random coins.
\bundit{Just added the following -- Aug 10, 2013}
That is, the positions of the proof that the verifier reads depend only on the first $(1+\epsilon)\log N$ random bits. This guarantees that the proof degree is at least $N^\delta$ while the completeness and soundness are preserved.
(In the context of CSP, this is equivalent to making $2^{\delta \log N}$ copies of each clause.)
\bundit{Just added - Aug 11, 2013}
This step is required to ensure that the PCP has a proof degree polynomial on $N$ (which is later required in the construction of dispersers.) When $t=\Omega(\log N)$, we can skip this step because the proof degree will be $\poly(N)$ automatically after the gap amplification step.
%
To be precise, given a PCP verifier $V$, we construct a modified PCP verifier $V'$
as follows.\bundit{Should we remove the detail here to make it brief? In fact, it should be in appendix.}
The verifier $V'$ uses the same proof $\Pi$ as that of $V$, and $V'$ has 
the same parameters as $V$ except for the size of a random string.
Each time, $V'$ flips $B=(1+\epsilon+ \delta) \log N$ coins to 
generate a random string $r'=(x_1,\ldots,x_B)$, where $x_i\in\set{0,1}$.
Then $V'$ simulates the verifier $V$ and feeds $V$ a random string
$r=(x_1,\ldots,x_{(1+\epsilon) \log N})$. 
So, the positions of the proof $\Pi$ that $V$ reads depend only on the 
first $(1+\epsilon) \log {n}$ random bits generated by $V'$.
The verifier $V'$ accepts the proof if (the simulation of) $V$ accepts
the proof. 
It can be seen that the verifier $V'$ has all the properties of $V$, but the 
{\em proof degree}, i.e., the number of random strings that cause 
$V'$ to read each position $i$ of the proof, is now at least $N^\delta$.
Hence, we have the following theorem.

\begin{corollary}
\label{cor: modified-PCP}
For any sufficiently large constant $k \geq \omega(1)$,
3SAT on input of size $N$ has a verifier that uses
$(1+ \epsilon +\delta) \log{N}$ random bits to pick $q = k^2 + 2k$ queries 
to a binary proof, such that only $2k$ of the queries are free, 
i.e.,  for each random string, there are $2^{2k}$ possible
satisfying assignments of the queried bits in the proof.
The verifier has completeness $1-\epsilon$ and
soundness error at most $2^{-k^2 +1}$, and the acceptance predicate is linear.
Moreover, the proof degree is at least $N^\delta$, i.e.,
for each position $i$ of the proof $\Pi$, 
there are at least $N^\delta$ random strings that cause 
the PCP verifier to read $\Pi(i)$. 
\end{corollary}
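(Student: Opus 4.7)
The plan is to construct the modified verifier $V'$ by ``padding'' the randomness of the base verifier $V$ from \Cref{thm: MR} with $\delta \log N$ extra random coins that are never actually consulted when choosing query positions. Concretely, $V'$ samples a random string $r' \in \{0,1\}^{(1+\epsilon+\delta)\log N}$, discards the last $\delta \log N$ bits, feeds the prefix $r \in \{0,1\}^{(1+\epsilon)\log N}$ to $V$, and then accepts iff $V$ accepts. The same proof string $\Pi$ is used for $V'$ as for $V$.

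Next I would verify that each of the properties inherited from \Cref{thm: MR} is preserved. The query complexity remains $q = k^2 + 2k$ and the free-bit complexity remains $2k$, since $V'$ issues exactly the same queries as $V$ on the underlying prefix $r$. Linearity of the acceptance predicate, as well as the balanced-accepting-configuration property, are preserved for the same reason. For completeness and soundness, because $V'$'s accept/reject decision is a deterministic function of the prefix $r$ alone, for any fixed proof $\Pi$ we have $\Pr_{r'}[V'^{\Pi} \text{ accepts}] = \Pr_{r}[V^{\Pi} \text{ accepts}]$. Thus completeness $1-\epsilon$ (or $1/2$ after the simplification noted in the text) and soundness $2^{-k^2+1}$ carry over verbatim.

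The final step is to lower bound the proof degree. For each proof position $i$, let $R_i \subseteq \{0,1\}^{(1+\epsilon)\log N}$ be the set of prefixes $r$ such that $V$ queries position $i$ on randomness $r$; we do not need any nontrivial lower bound on $|R_i|$ (even $|R_i| \geq 1$ suffices). The set of $V'$-random strings that query position $i$ is exactly $R_i \times \{0,1\}^{\delta \log N}$, which has cardinality at least $2^{\delta \log N} = N^{\delta}$. This gives the required proof degree bound.

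I do not expect any serious obstacles here: the construction is a syntactic padding, and all PCP parameters except the randomness length (and hence proof degree) are by definition preserved because $V'$ makes its queries and its accept decision as a function of only the first $(1+\epsilon)\log N$ random bits. The only mild point to keep in mind is to ensure that the extra random bits are fed into the randomness count so that the proof degree count becomes $N^\delta$, rather than being ``absorbed'' as nondeterminism that would break the PCP accounting; this is handled automatically because $V'$ is a genuine randomized verifier whose randomness includes the padding bits.
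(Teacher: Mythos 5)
Your proposal is correct and follows essentially the same route as the paper: pad the verifier's randomness with $\delta \log N$ unused coins, observe that all query/free-bit/linearity parameters are determined by the simulated prefix and hence unchanged, argue that for any fixed proof the acceptance probability is preserved because acceptance depends only on the prefix (which remains uniformly distributed), and lower bound the proof degree by the multiplicative factor of $2^{\delta\log N} = N^\delta$ unused suffixes per base random string. The only implicit step you share with the paper is that each proof position is queried by at least one base random string, which is harmless.
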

\begin{proof}
We start from the PCP as in Theorem~\ref{thm: MR} and modify the PCP
as in the above discussion. 
Since all the computations are done by (the simulation of) the verifier $V$,
it is easy to see the properties of $V$ and $V'$ 
(i.e., the size of the proof, the number of queries, the number of free queries and 
the linearity of the acceptance predicate) are the same. 
The proof degree of the verifier $V'$ follows by the construction.
That is, for each random string $r=(x_1,\ldots,x_{(1+\epsilon)\log N})$ of $V$,
there are $N^\delta$ random strings 
$r'=(x_1,\ldots,x_{(1+\epsilon)\log N},y_{1},\ldots,y_{\delta\log N})$
(generated by $V'$) that causes $V'$ to feed the random string $r$ to $V$.
Thus, if $r$ causes $V$ to read the position $i$ of the proof, then there are 
at least $N^\delta$ random strings of $V'$ that causes $V'$ (indeed, $V$) to read the position $i$ of the proof. 
In other words, $V'$ has proof degree at least $N^\delta$.

Finally, as $V'$ has more random bits than $V$, we need to verify that the completeness and soundness of these two PCPs are the same. Let us check this quickly. Fix a proof $\Pi$. For each random string $r$ of $V$, there are exactly $N^\delta$ random strings $r'$ of $V'$ that contains $r$ as the first $(1+\epsilon) \log{n}$ substring, and the acceptance and rejection of the proof $\Pi$ depends only on $r$. Thus, the fraction of random strings that cause $V'$ to accept or reject $\Pi$ is the same as that of $V$. In other words, the modification of the PCP preserves both the completeness and soundness. This proves the corollary.
\end{proof}

From this PCP, we create a CSP formula $\phi$ in a natural way: For each proof bit $\Pi_b$, we have a corresponding variable $x_b$ that represents the proof bit. 
For each random string $r$, we have a clause that involves variables $x_{b_1(r)},\ldots, x_{b_q(r)}$, and this clause is satisfied by an assignment $(a_1,\ldots, a_q)$ if and only if $(r, a_1,\ldots, a_q)$ is an accepting configuration.
The followings summarizes the properties of the resulting CSP: 

\begin{itemize}
\item ({\sc Size:}) The number of clauses and variables is at most $m=q N^{1+\epsilon+\delta}$. 
This is simply because the length of random strings is $(1+\epsilon +\delta) \log N$.  

\item (\yi:) If the input 3-SAT instance is satisfiable, then there is an assignment that satisfies $1/2$ fraction of clauses of $\phi$.

\item (\ni:) If the input 3-SAT instance is not satisfiable, then any assignment satisfies at most $2^{-k^2+1}$ fraction of the clauses of $\phi$.

\item ({\sc Balancedness:}) For each clause, the number of assignments that satisfy such clause is at most $2^{2k}$, and for each variable $x_j$ appearing in such clause, the number of satisfying assignments for which $x_j=1$ equals that for which $x_j=0$.  

\item ({\sc Degree of CSP:})For any variable $x_j$, the total number of clauses that contain $x_j$ is $m_j \geq N^{\delta}$.  
\end{itemize}   

\paragraph{Gap Amplification:}
Now, we have already generated a CSP with relatively large gap between \yi and \ni.  
In this step, we amplify the gap between the \yi and \ni further by a gap amplification scheme (see \cite{Zuckerman96}), which will imply the theorem.  

%
Let $M= 100 q 2^{t(k^2+1) }N^{1+\epsilon+\delta}$.
We construct from $\phi$ a new CSP instance $\phi'$ on $M$ clauses.
For $i=1,2,\ldots,M$, we create a clause $C'_i$ of $\phi'$ by independently and uniformly at random choosing $t$ clauses from the formula $\phi$ and join them by the operation ``AND''. 
It can be seen that the number of variables of $\phi'$ is the same as that of $\phi$, and the total number of clauses is exactly $M$. 
It remains to prove the other properties of the CSP instance $\phi'$ using probabilistic arguments.

\begin{itemize}

\item {\sc Completeness:}
%
First, we prove the completeness.
In the \yi, there is an (optimal) assignment $\sigma$ of $\phi$ that satisfies $1/2$ fraction of the clauses in $\phi$. 
We will show that with high probability $\sigma$ satisfies at least $1/2^{t+1}$ fraction of the clauses of $\phi'$.
(Note that $\phi$ and $\phi'$ have the same set of variables.)
Consider the random process that constructs $\phi'$.
The probability that each randomly generated clause of $\phi'$ is satisfied by $\sigma$ (i.e., the probability that all the $t$ clauses are satisfied by $\sigma$) is at least $1/2^t$ .
So, the expected number of clauses satisfied by $\sigma$ is $2^{-t} M$.
Since $2^{-t} M \geq 100 q N^{1+\delta+\epsilon}$ (by our choice of $M$), we can apply Chernoff's bound to show that the probability that $\sigma$ satisfies less than $1/2^{t+1}$ fraction of clauses in $\phi'$ is less than $1/N$.

\item {\sc Soundness:}
Now, we prove the soundness.
In the \ni, any assignment $\sigma$ satisfies at most $2^{-k^2+1}$ fraction of the clauses of $\phi$.
We will show that $\sigma$ satisfies at most $2^{-t(k^2-2)}$ fraction of the clauses of $\phi'$ with high probability.
So, for any assignment $\sigma$, the expected number of clauses in $\phi'$ satisfied by $\sigma$ is at most $2^{-t(k^2-1)}M \leq 100 q N^{1+\delta +\epsilon}$. 
By Chernoff's bound, the probability that such assignment satisfies more than $2^{-t(k^2-2)} M$ clauses is at most $2^{-10qN^{1+\delta+\epsilon}}$.
Since there are at most $2^{qN^{1+\delta+\epsilon}}$ such assignments (because the number of variables is $q N^{1+\delta+\epsilon}$), we have by union bound the claimed soundness. 

\item {\sc Balanceness:}
The property in the third bullet holds trivially because each clause of $\phi'$ is constructed from the ``AND'' of $t$ clauses of $\phi$. 

\item {\sc Degree of CSP:}
Finally, to prove the fourth bullet, we fix a variable $x_j$. 
The probability that each random clause contains a variable $x_j$ is at least $m_j/m$.
So, the expected number of clauses that contain $x_j$ is at least 
\[
  \frac{m_j}{m} M \geq \frac{m_j}{q N^{1+\epsilon+\delta}} 100 q 2^{t(k^2+1)} N^{1+\delta+\epsilon} 
                   \geq 100 N^{\delta} 2^{t(k^2+1)} 
\] 
By applying Chernoff's bound, the probability that we have less than $N^{\delta} 2^{t(k^2+1)}$  clauses containing $x_j$ is at most $1/N$. 
\end{itemize}

This completes the proof of Theorem~\ref{thm: CSP}.

\section{Tight Hardness of Semi-Induced Matching}
\label{sec: hardness of induced matching}

In this section, we prove the (almost) tight hardness result of the
semi-induced matching problem on a $\Delta$-degree bounded bipartite graph.
What we prove is actually stronger than the hardnesses of the induced and semi-induced matching problems themselves: We show that the completeness case has a large induced matching while the soundness case has no large semi-induced matching.   
The formal statement is encapsulated in the following theorem.
\danupon{We should use at most one of $\Delta$-degree bounded or $d$-degree bounded (not both).}

\begin{restatable}[Hardness of $\Delta$-Degree Bounded Bipartite Semi-induced Matching]{theorem}{inducedmatching}
\label{thm: hardness of semi-induced matching}
Let $\epsilon> 0$ be any constant and $t>0$ be a positive integer.  
There is a randomized algorithm that transforms
a SAT formula $\phi$ of input size $N$ 
into a $\Delta$-degree bounded bipartite graph, where $\Delta=2^{t(\frac{1}{\epsilon^2}+ O(\frac{1}{\epsilon})  )}$ such that:
\begin{itemize}
  \item (\yi:) If $\phi$ is satisfiable, then $\induce{G} \geq |V(G)|/\Delta^{\epsilon}$.

  \item (\ni:) If $\phi$ is not satisfiable, then $\sinduce{G} \leq |V(G)|/\Delta^{1-\epsilon}$.
\end{itemize}
The construction size is $|V(G)| \leq  N^{1+\epsilon}\Delta^{1+\epsilon}$,
and the running time is $\poly(N,\Delta)$. 
Moreover, as long as $t \leq 5 \epsilon^2 \log N$, the reduction is guaranteed to be successful with high probability. 
\end{restatable}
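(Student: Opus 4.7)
The plan is to follow the ``CSP $\to$ bounded-degree independent set $\to$ bipartite induced matching'' pipeline outlined in \Cref{sec:overview}. First I would invoke \Cref{thm: CSP} with $k$ chosen as $\Theta(1/\epsilon)$, the parameter $t$ from the statement, and a sufficiently small $\delta > 0$, obtaining a $(tq)$-CSP $\phi'$ with completeness $c = 2^{-(t+1)}$, soundness $s = 2^{-t(k^2+2)}$, exactly $W = 2^{2kt}$ satisfying assignments per clause together with the balancedness property, and every variable appearing in at least $N^{\delta}2^{t(k^2+1)}$ clauses.

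Next, I would apply an FGLSS-style reduction with disperser replacement to produce an intermediate bounded-degree graph $G_0$. Its vertex set is $\{(r,\sigma) : r \text{ a clause of } \phi',\ \sigma \text{ a satisfying assignment of } r\}$, so $|V(G_0)| = MW$. For every variable $x_j$ and every ordered pair of clauses $(r,r')$ containing $x_j$, rather than placing all FGLSS edges between the $W/2$ vertices of the $r$-cloud with $\sigma(x_j)=0$ and the $W/2$ vertices of the $r'$-cloud with $\sigma'(x_j)=1$, I would sample a random bipartite graph of suitably small degree. Balancedness and the large variable-degree of $\phi'$ should let a Chernoff plus union bound over all variable pairs (exactly where the hypothesis $t \leq 5\epsilon^2 \log N$ enters) show that, whp, these bipartite graphs are strong dispersers, that $\Delta_{\max}(G_0) \leq \Delta = 2^{t(1/\epsilon^2 + O(1/\epsilon))}$, that $\alpha(G_0) \geq cM$ in the \yi case, and that every relevant induced bipartite piece of $G_0$ has balanced bipartite independence number at most $|V(G_0)|/\Delta^{1-\epsilon}$ in the \ni case.

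The output bipartite graph will then be the variant bipartite double cover $G = \bipm[G_0]$: two copies $V',V''$ of $V(G_0)$, cross edges $(u',v'')$ whenever $uv \in E(G_0)$, plus identity edges $(v',v'')$ for every $v$; the size bound $|V(G)| \leq N^{1+\epsilon}\Delta^{1+\epsilon}$ then follows from a direct calculation with $MW$. For completeness, any independent set $S$ of $G_0$ yields the induced matching $\{(v',v''): v \in S\}$ of size $|S| \geq cM$, whence $\induce{G} \geq |V(G)|/\Delta^{\epsilon}$. For soundness, I would partition any $\sigma$-semi-induced matching $M$ of $G$ into identity edges $M_A$ and cross edges $M_B$: the vertices $\{v : (v',v'') \in M_A\}$ form an independent set in $G_0$ by the induced-matching condition on identity edges, so $|M_A| \leq \alpha(G_0)$; and, by refining the ``orient by which copy contains the $V'$-endpoint, then split by $\sigma$-order'' argument of \Cref{lmm:disperser informal}, $M_B$ decomposes into $O(1)$ induced matchings of $G_0$, each of size $O(\BBIS(G_0))$ by the disperser property above. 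Combining these two bounds gives $\sinduce{G} \leq |V(G)|/\Delta^{1-\epsilon}$, as required.

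\textbf{Main obstacle.} The hard part will be bounding $\sinduce{G}$ rather than only $\induce{G}$: the disperser lemma in \Cref{sec:disperser overview} is stated for induced matchings, so one has to upgrade the decomposition argument to handle the strictly more flexible semi-induced matchings, where identity and cross edges can interact in non-obvious ways through the ordering $\sigma$. Once that upgrade is in place, the rest is careful bookkeeping: the exponents $\epsilon$ on the two sides of the gap should come from a tight match between the free-bit complexity $2kt$, the amplification exponent $t(k^2+1)$, and the disperser degree, and the size bound $|V(G)| \leq N^{1+\epsilon}\Delta^{1+\epsilon}$ and the success probability for $t \leq 5\epsilon^2\log N$ then fall out from routine estimates.
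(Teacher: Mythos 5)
Your high-level pipeline (CSP from \Cref{thm: CSP} $\to$ FGLSS with disperser replacement $\to$ bipartite double cover $\to$ completeness via independent set, soundness via decomposing into identity and cross edges) matches the paper's, and the completeness argument and the size/success-probability bookkeeping are on the right track. There is, however, a genuine gap in the soundness step. Your claim that the cross-edge part $M_B$ ``decomposes into $O(1)$ induced matchings of $G_0$, each of size $O(\BBIS(G_0))$'' does not hold: $G_0 = \widehat{G}$ is not itself a disperser but a union of $N$ dispersers $H_1,\ldots,H_N$ (one per CSP variable), and the disperser/BBIS property only holds \emph{per} $H_j$, not globally on $G_0$. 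The correct decomposition, and the one the paper uses, is into $N$ pieces $\mset_2^j = \{(u,1)(v,2) : uv \in E(H_j)\}$; each $\mset_2^j$ is a $\sigma_j$-semi-induced matching in $\bipp[H_j]$ and Lemma~\ref{lem: properties of dispersers} gives $|\mset_2^j| \leq 4\gamma n_j$, so summing over $j$ (using that each vertex lies in exactly $qt$ dispersers) yields $|M_B| \leq 4\gamma qt\, |V(\widehat{G})|$. Also, what you flag as the main obstacle — upgrading from induced to semi-induced matchings — is handled directly by the formal Lemma~\ref{lem: properties of dispersers} (its second item bounds $\sinducesigma{\sigma}{\bipp[H]}$); the actually nontrivial step you would still need to supply is the per-variable decomposition.

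Two smaller points. First, your disperser replacement is phrased as sampling a random bipartite graph per ordered pair of clauses sharing $x_j$; the paper instead builds a single $(d,\gamma)$-disperser $H_j = (O_j,Z_j,E_j)$ per variable on the aggregated halves $O_j, Z_j$ across \emph{all} clauses containing $x_j$. The per-variable version gives the degree bound $qtd$ immediately (each vertex lies in $qt$ dispersers, contributing $d$ each); a per-pair version multiplies edges per vertex by the number of co-occurring clauses and needs extra care to avoid blowing up the degree. Second, to make the soundness bound close one must also carry Trevisan's argument bounding $\alpha(\widehat{G})$ itself in the \ni case (giving $\alpha(\widehat{G}) \leq sM + \gamma qt wM$), and then combine it with the $|M_B|$ bound via the corollary $\alpha(\widehat{G}) \leq \sinducesigma{\sigma}{G} \leq \alpha(\widehat{G}) + 4\gamma qt |V(\widehat{G})|$, before choosing $\gamma = s/(5qtw)$ to match the exponents; none of this is in your sketch yet.
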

%

\begin{restatable}[Hardness of $d$-Degree-Bounded Maximum Independent Set]{theorem}{hardnessindep}
\label{thm: hardness of independent set}
Let $\epsilon> 0$ be any sufficiently small constant and $t>0$ be a positive integer.
There is a randomized algorithm that transforms
a SAT formula $\phi$ on input of size $N$
into a $d$-degree-bounded graph $G$, where $d=2^{t(\frac{1}{\epsilon^2}+ O(\frac{1}{\epsilon})  )}$ such that:
\begin{itemize}
  \item (\yi:) If $\phi$ is satisfiable, then $\alpha(G) \geq |V(G)|/d^{\epsilon}$.

  \item (\ni:) If $\phi$ is not satisfiable, then $\alpha(G) \leq |V(G)|/d^{1-\epsilon}$.
\end{itemize}
The construction size is $|V(G)| = N^{1+\epsilon}d^{1+\epsilon}$,
and the running time is $\poly(N,d)$. 
Moreover, as long as $t \leq 5 \epsilon^2 \log N$, the reduction is guaranteed to be successful with high probability. 
\end{restatable}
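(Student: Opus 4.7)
The plan is to compose the nearly-linear-size sparse PCP of \Cref{thm: CSP} with an FGLSS-style reduction enhanced by Trevisan's \emph{disperser replacement}~\cite{Trevisan01}, following the outline in \Cref{eq:fglss_disperser_refined}. First I would invoke \Cref{thm: CSP} with $k = \Theta(1/\epsilon)$ (chosen large enough that $\epsilon(k+1)^2$ dominates $2k+1$) and with $\delta$ of order $\epsilon$, obtaining a CSP $\phi'$ on $O(N^{1+\epsilon})$ variables with $M = 100q \cdot 2^{t(k^2+1)} N^{1+\epsilon+\delta}$ clauses, at most $w = 2^{2kt}$ satisfying configurations per clause, completeness $c = 2^{-(t+1)}$, soundness $s = 2^{-t(k^2+2)}$, and each variable appearing in at least $N^{\delta} 2^{t(k^2+1)}$ clauses.

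Next I would construct the standard FGLSS graph $G_0$: its vertices are pairs $v_{C,a}$ where $a$ is a satisfying assignment of a clause $C$, so $|V(G_0)| = Mw$. Within each clause cloud, place a clique; across clouds, connect $v_{C,a}$ to $v_{C',a'}$ whenever $a$ and $a'$ disagree on some common variable. To bound the degree, for each variable $x_j$ I would replace the complete bipartite conflict graph between the two configuration sets $A_j^0$ and $A_j^1$ by a random bipartite \emph{disperser} of regular degree $\Delta_j$, chosen so that the disperser contains no balanced bipartite independent set of relative size $d^{-(1-\epsilon)}$. The balancedness guarantee of \Cref{thm: CSP} makes $|A_j^0| = |A_j^1|$, and the variable-degree guarantee makes both polynomially large, so the standard random sampling argument produces such a disperser with probability $1-o(1)$ as long as $t \leq 5\epsilon^2 \log N$. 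Summing the $\Delta_j$ over the $q$ variables of any single clause yields maximum degree $d = 2^{t(k+1)^2 + O(t)} = 2^{t(1/\epsilon^2 + O(1/\epsilon))}$.

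Completeness is straightforward: any satisfying CSP assignment $\sigma$ selects one vertex per satisfied clause, giving an independent set of size $\geq cM = |V(G_0)|/(w \cdot 2^{t+1})$, and the parameter choice makes $w \cdot 2^{t+1} \leq d^{\epsilon}$, so $\alpha(G_0) \geq |V(G_0)|/d^{\epsilon}$; since disperser replacement only removes edges, the bound transfers to $G$. For soundness, suppose $I$ were an independent set with $|I| > |V(G)|/d^{1-\epsilon}$. By pigeonhole on the clouds, $I$ hits many clouds, and the disperser property on each variable $x_j$ forces $I \cap (A_j^0 \cup A_j^1)$ to lie almost entirely on one side (otherwise $I$ would contain a forbidden balanced bipartite independent set inside the disperser). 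Hence $I$ induces a globally consistent partial assignment satisfying strictly more than an $s$-fraction of the clauses of $\phi'$, contradicting the CSP soundness.

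The main obstacle is the parameter juggling: one must simultaneously arrange that $|V(G)| = Mw \leq N^{1+\epsilon} d^{1+\epsilon}$, that $w \cdot 2^{t+1} \leq d^{\epsilon}$ for completeness, that the disperser threshold is at most $d^{-(1-\epsilon)}$ for soundness, and that the randomized disperser construction succeeds with high probability. Taking $k = \Theta(1/\epsilon)$ and $\delta = \epsilon$ threads these constraints; the balancedness clause of \Cref{thm: CSP} is exactly what keeps the two sides of each variable's disperser comparable, and the large-degree clause is exactly what makes them polynomially large enough to support the randomized disperser construction, with the success condition $t \leq 5\epsilon^2\log N$ falling out of the sampling calculation. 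The bipartite induced matching theorem \Cref{thm: hardness of semi-induced matching} then follows from essentially the same construction by taking the bipartite double cover of $G$ and invoking the Disperser Lemma outlined in \Cref{sec:disperser overview}.
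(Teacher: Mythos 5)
Your proposal follows the paper's own route essentially verbatim: invoke the sparse, large-degree, small-free-bit PCP of \Cref{thm: CSP}, build the FGLSS graph, replace each per-variable complete bipartite conflict graph by a random disperser (using the balancedness and large-degree guarantees of \Cref{thm: CSP} to make the sides equal-sized and polynomially large), and tune $k=\Theta(1/\epsilon)$ and $\delta=\Theta(\epsilon)$ so the YES/NO gap becomes $d^{1-O(\epsilon)}$ with $|V(G)| \leq N^{1+O(\epsilon)}d^{1+O(\epsilon)}$. One small slip: the degree bound comes from summing the disperser degrees over the $qt$ variables appearing in an amplified clause, not the $q$ variables of a base clause, but since the polynomial factor $qt$ is swallowed by the $2^{O(t)}$ term this does not change the final exponent.
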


\subsection{The Reduction}
\label{sec:reduction-semi-induced-matching}

Our reduction is precisely described as follows.
Take an instance $\phi$ of $(qt)$-CSP as in Theorem~\ref{thm: CSP}
that has $N$ variables and $M$ clauses.

\paragraph{The FGLSS Graph $\widehat{G}$ with Disperser Replacement:}
First, we construct from $\phi$ a graph $\widetilde{G}$ by
the FGLSS construction, and then the graph $\widetilde{G}$ will be transformed to graph $\widehat{G}$ by the disperser replacement step. 
For each clause $\phi_j$ of $\phi$, for each possible satisfying assignment $C$ of $\phi_j$, we create in $\widetilde{G}$ a vertex $v(j, C)$ representing the fact that ``$\phi_j$ is satisfied by assignment $C$''.
Then we create an edge $v(j, C) v(j', C')\in E(\widetilde{G})$ if there is a
{\em conflict} between partial assignments $C$ and $C'$, i.e.,
there is a variable $x_i$ appearing in clauses $\phi_j$ and $\phi_{j'}$ such that
$C$ assigns $x_i=0$ whereas $C'$ assigns $x_i=1$.
So, the total number of vertices is $|V(\widetilde{G})|=w \cdot M$.
The independence number of a graph $\widetilde{G}$ corresponds to the
number of clauses of $\phi$ that can be satisfied.
In particular, we can choose at most one vertex from each clause $\phi_j$
(otherwise, we would have a conflict between $v(j,C)$ and $v(j,C')$), and we can choose two vertices
$v(j,C),v(j',C')\in V(\widetilde{G})$ if and only if the assignment $C$ and $C'$ have no
conflict between variables.
Thus, the number of satisfiable clauses of $\phi$ is the same as the
independence number $\alpha(\widetilde{G})$.
Hence, in the \yi, we have $\alpha(\widetilde{G})\geq c \cdot M$, and in \ni, we have
$\alpha(\widetilde{G}) \leq s \cdot M$.
This gives a hard instance of the maximum independent set problem.
Notice that the degree of $\widetilde{G}$ can be very high. 
%

Next, in order to reduce the degree of $\widetilde{G}$, we follow the disperser replacement step as in \cite{Trevisan01}.
Consider an additional property of $\widetilde{G}$.
For each variable $x_i$ in $\phi$, let $O_i$ and $Z_i$ denote the set of
vertices $v(j,C)$ corresponding to the (partial) assignments
for which $x_i=1$ and $x_i=0$, respectively.
It can be deduced from Theorem~\ref{thm: CSP} 
that $|O_i| = |Z_i| =M_i/2 \geq 2^{t(k^2+1)}N^{\delta}$, for some constant $\delta>0$.

Since there is a conflict between every vertex of $O_i$ and $Z_i$,
these two sets define a complete bipartite subgraph of $\widetilde{G}$, namely
$\widetilde{G}_i=(O_i,Z_i,\widetilde{E}_i)$, where $\widetilde{E}_i=\set{uw:u\in O_i, w\in Z_i}$.
Observe that if we replace each subgraph $\widetilde{G}_i$ of $G$ by
a $d$-degree bounded bipartite graph, the degree of vertices in the resulting graph reduces to $qt d$. 
To see this, we may think of each vertex $u$ of $\widetilde{G}$ as a vector with $qt$ coordinates (since it corresponds to an assignment to some clause $\phi_j$ which has $qt$ related variables).
For each coordinate $\ell$ of $u$ corresponding to a variable $x_i$,
there are $d$ neighbors of $u$ having a conflict at coordinate $\ell$
(since the conflict forming in each coordinate are edges in
$\widetilde{G}_i$, and we replace $\widetilde{G}_i$ by a $d$-degree bounded
bipartite graph).
Thus, each vertex $u$ has at most $qtd$ neighbors.
However, as we wish to preserve the independence number of $G$, i.e.,
we want $\alpha(\widehat{G})\approx\alpha(\widetilde{G})$, we require such
a $d$-degree bounded graph to have some additional properties.
To be precise, we construct the graph $\widehat{G}$ by replacing each subgraph
$\widetilde{G}_i$ of $\widetilde{G}$ by a {\em $(d,\gamma)$-disperser}
$H_i=(O_i,Z_i,E_i)$, defined below. 

\begin{definition}[Disperser] \label{def:disperser formal}
\parinya{TODO for later: Present the dispersers and its properties first before the construction.}
A $(d,\gamma)$-disperser $H=(U',W',E')$ is a $d$-degree bounded bipartite graph on
$n'=|U'|=|W'|$ vertices such that,
for all $X\subseteq U', Y \subseteq W'$, if
$|X|, |Y| \geq \gamma n'$, then there is an edge $xy \in E'$ joining a
pair of vertices $x\in X$ and $y\in Y$.
\end{definition}

Intuitively, the important property of the disperser $H_i$ is that 
any independent set $S$ in $H_i$ cannot contain a large number of
vertices from both $O_i$ and $Z_i$;
otherwise, we would have an edge joining two vertices in $S$.

All these ideas of using disperser to ``sparsify'' the graphs were used by Trevisan in~\cite{Trevisan01} to prove the hardness of the bounded degree maximum independent set problem.
The key observation that makes this construction work for our problem is that a similar property that holds for the size of a maximum independent set also holds for the size of 
a maximum $\sigma$-semi-induced matching in $\bipp[H_i]$, i.e., 
$\bipp[H_i]$ cannot contain a large $\sigma$-semi-induced matching, for
any permutation $\sigma$.

Now, we proceed to make the intuition above precise. 
A $(d,\gamma)$-disperser can be constructed by a randomized algorithm, which is stated in the next lemma.
In the case that $d$ is constant, we may construct a $(d,\gamma)$-disperser
by a deterministic algorithm in \cite{RVW00}, which has a running time
exponential in terms of $d$. 

\begin{lemma}
\label{lemma: disperser}
For all $\gamma >0$ and sufficiently large $n$, there is a randomized
algorithm that with success probability $1- e^{-n\gamma (\log (1/\gamma) -2)}$,
outputs a $d$-regular bipartite graph
$H=(O, Z, E), |O| = |Z| = n$, where $d=(3/\gamma)\log(1/\gamma)$
such that, for all $X\subseteq Z, Y \subseteq O$,
if $|X|, |Y| \geq \gamma n$, then there is an edge $(x,y) \in E$
joining a pair of vertices $x\in X$ and $y \in Y$.
\end{lemma}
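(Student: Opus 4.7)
\textbf{Proof plan for \Cref{lemma: disperser}.} The plan is to use the standard probabilistic method: construct the random graph $H$ as a union of $d$ independent uniformly random perfect matchings $\pi_1,\dots,\pi_d$ between $O$ and $Z$ (equivalently, one may sample, for each $u\in O$, $d$ neighbors in $Z$ independently and uniformly, and later clean up the negligibly few multi-edges that appear). This immediately yields a $d$-regular bipartite (multi)graph, so the only non-trivial task is to show that the desired dispersion property holds with the claimed probability.

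First, I will observe that it suffices to verify the dispersion property for pairs $(X,Y)$ of minimum size $k:=\lceil\gamma n\rceil$, since any larger pair contains such a subpair. Fix one such pair. For a single random permutation $\pi_i$, the probability that $\pi_i(Y)$ misses $X$ equals $\binom{n-k}{k}/\binom{n}{k}=\prod_{j=0}^{k-1}\tfrac{n-k-j}{n-j}\le (1-k/n)^{k}\le (1-\gamma)^{\gamma n}$. Since the $\pi_i$ are independent,
\[
\Pr[\text{no edge of $H$ joins $X$ and $Y$}]\;\le\;(1-\gamma)^{d\gamma n}\;\le\;e^{-\gamma d\cdot \gamma n}\;=\;\gamma^{3\gamma n},
\]
where the last equality uses the choice $d=(3/\gamma)\log(1/\gamma)$.

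Next, I will apply a union bound over all relevant pairs. The number of pairs $(X,Y)$ with $|X|=|Y|=k$ is at most $\binom{n}{k}^2\le(en/k)^{2k}\le(e/\gamma)^{2\gamma n}$. Multiplying,
\[
\Pr[\text{$H$ fails to be a $(d,\gamma)$-disperser}]\;\le\;(e/\gamma)^{2\gamma n}\cdot\gamma^{3\gamma n}\;=\;\exp\!\bigl(2\gamma n+\gamma n\log\gamma\bigr)\;=\;e^{-\gamma n(\log(1/\gamma)-2)},
\]
matching the claimed success probability. Finally, I will note that for sufficiently large $n$ the expected number of multi-edges produced by the matching-union construction is $O(d^2)$, which is dwarfed by the slack in the above bound, so by deleting these few repeated edges and adding arbitrary replacements we obtain a simple $d$-regular bipartite graph while preserving the disperser property.

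The routine computations above are the entire proof; there is no real obstacle, only the need to pin down the constant in $d$ so that the exponent $\gamma^{3\gamma n}$ obtained from a single ``missing-edge'' event precisely cancels the $(e/\gamma)^{2\gamma n}$ contribution from the union bound and leaves the stated failure exponent $-\gamma n(\log(1/\gamma)-2)$. The only mild subtlety is ensuring that the final graph is simple (rather than merely a regular multigraph), which is handled by the ``delete and replace'' postprocessing described above, since the slack in the probability bound easily absorbs the negligible number of collisions.
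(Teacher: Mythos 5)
The paper itself does not supply a proof of this lemma — it states the construction as known (referring to the randomized construction used by Trevisan and citing \cite{RVW00} only for a deterministic alternative at constant $d$). Your proof fills that gap with the standard probabilistic-method argument (union of $d$ independent random perfect matchings, a per-pair miss probability $\le (1-\gamma)^{d\gamma n}$, and a union bound over the $\binom{n}{k}^2 \le (e/\gamma)^{2\gamma n}$ pairs of size $k=\lceil\gamma n\rceil$), and the arithmetic checks out: the choice $d=(3/\gamma)\log(1/\gamma)$ makes the single-pair bound $\gamma^{3\gamma n}$, which, multiplied against the union-bound term, yields exactly $e^{-\gamma n(\log(1/\gamma)-2)}$.

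Two small points worth flagging, neither of which is a real gap. First, your computation silently takes $\log=\ln$; with that convention the constant $3$ in $d$ and the final exponent come out exactly as stated, and the paper's notation $e^{-n\gamma(\log(1/\gamma)-2)}$ is most naturally read the same way — but if $\log$ were base $2$ the constant in $d$ would have to be adjusted (the structure of the argument is unchanged). Second, the step $(en/k)^{2k}\le(e/\gamma)^{2\gamma n}$ is, strictly speaking, only an equality up to the rounding $k=\lceil\gamma n\rceil$; since $e/\gamma>1$ and $k\ge\gamma n$ this actually tilts the wrong way by an $O(1)$ additive term in the exponent, but that slack is absorbed by the lemma's ``sufficiently large $n$'' qualifier (or by noting that one can instead restrict attention to the case $\gamma n\in\mathbb{Z}$). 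The multigraph-to-simple-graph cleanup you describe is the usual hand-wave and is fine here; for the disperser property alone a multigraph would even suffice, so nothing delicate is being hidden.
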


The condition that $n_i$ is sufficiently large is satisfied because
$|O_i|=|Z_i|\geq M_i\geq N^{\delta} 2^{tk^2}$ for all $i$ (since each variable $x_i$ appears in $M_i$ clauses, and for each such clause, there is at least one accepting configuration for which $x_i = 0$ and one for which $x_i = 1$.)  
\bundit{Just edited below: Aug 11, 2013}
Also, since the success probability in constructing each disperser is high (i.e., at least $2^{N^\delta}$), we can guarantee that all the dispersers are successfully constructed with high probability.
%
%
%
By setting appropriate value of $\gamma$ (which we will do later) and 
following analysis in \cite{Trevisan01}, 
we have the following completeness and soundness parameters with high
probability.

\begin{itemize}
  \item (\yi:) $\alpha(\widehat{G}) \geq 2^{-t}M$
  \item (\ni:) $\alpha(\widehat{G}) \leq 
                2^{-t(k^2+2)}M +  \gamma qt (w M)$
\end{itemize}

\paragraph{The Final Graph $G$:}
We construct the final graph $G$ by transforming $\widehat{G}$ into a bipartite
graph as follows:
first create two copies $V'$ and $V''$ of vertices of $\widehat{G}$, i.e., each vertex $u \in V(G)$ has two corresponding copies $u' \in V'$ and $u'' \in V''$;
then create an edge joining two vertices $u'\in V'$ and $w''\in V''$ if and only if
there is an edge $uw\in E(\widehat{G})$ or $u=w$.
Thus, a formal definition can be written as $G=B[\widehat{G}]= (U \cup W,E)$ where  
\begin{eqnarray*}
U&=&\set{(u,1):u\in V(\widehat{G})}, \\
W&=&\set{(w,2):w\in V(\widehat{G})} \mbox{ and } \\
E&=&E_1 \cup E_2 \mbox{ where} \\
 E_1 &=& \set{(u,1)(u,2): u \in V(\widehat{G})}, \\
 E_2 &=&  \set{(u,1)(w,2):u,w\in V(\widehat{G}) \wedge (uw\in E(\widehat{G}))} \\
\end{eqnarray*}

The graph $G$ is a $(2qtd+1)$-degree bounded bipartite graph on $2|V(\widehat{G})|$ vertices.
Observe that edges in $G$ of the form $(u,1)(u,2)$ correspond to a vertex in $\widehat{G}$.
Thus, a (semi) induced matching in $G$ whose edges are
in this form corresponds to an independent set in $\widehat{G}$.
Although this is not the case for every (semi) induced matching $\mset$ in
$G$, we will show that we can extract a (semi) induced matching $\mset'$
from $\mset$ in such a way that $\mset'$ maps to an independent set in $G$,
and $|\mset'|\geq \Omega(|\mset|)$.

\subsection{Analysis}

Now, we analyze our reduction.
Before proceeding, we prove some useful properties of dispersers.
The next lemma shows the bounds on the size of a $\sigma$-semi-induced
matching in a disperser.


\begin{lemma}[Disperser Lemma]
\label{lem: properties of dispersers}
Every $(d,\gamma)$-disperser $H=(O,Z,E)$ on $2n$ vertices has the
following properties.
\begin{itemize}

\item For any independent set $S$ of $H$, $S$ cannot contain more than 
  $\gamma n$ vertices from both $O$ and $Z$, i.e.,
  \[ 
    \min (|S\cap O|,|S\cap Z|) \leq \gamma n
  \]

\item For any permutation (ordering) $\sigma$ of the vertices of $H$, 
  the graph $\bipp[H]=(U,W,F)$ obtained by transforming $H$ into a bipartite graph (using only edges of type $E_2$)
  contains no $\sigma$-semi induced matching of size more than 
  $4\gamma n$, i.e.,
  \[
    \sinducesigma{\sigma}{\bipp[H]} \leq 4\gamma n
  \]
\end{itemize}
\end{lemma}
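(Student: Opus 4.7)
The first bullet is a direct consequence of the defining property of the $(d,\gamma)$-disperser. If $S$ were an independent set of $H$ with $\min(|S\cap O|,|S\cap Z|)>\gamma n$, then setting $X=S\cap O$ and $Y=S\cap Z$ would, by the disperser property, produce an edge $xy\in E(H)$ with $x\in X$ and $y\in Y$, contradicting independence of $S$. The main content of the lemma lies in the second bullet, which I reduce to the first via a structural decomposition of any $\sigma$-semi-induced matching followed by the construction of a balanced bipartite independent set inside $H$.

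For the second bullet, fix a total order $\sigma$ on $V(\bipp[H])=U\cup W$ and a $\sigma$-semi-induced matching $M$ in $\bipp[H]$. Every edge of $\bipp[H]$ has the form $(v,1)(v',2)$ with $vv'\in E(H)$, so since $H$ is bipartite with sides $O$ and $Z$, each edge of $M$ can be labelled by (i) whether its $U$-endpoint lies in $O$ or in $Z$, and (ii) whether the smaller-$\sigma$ endpoint lies in $U$ or in $W$. Pigeonholing over this $2\times 2$ classification yields a subfamily $M'\subseteq M$ with $|M'|\geq |M|/4$ all of whose edges share a common type; by symmetry I may assume its edges are $e_i=(u_i,1)(w_i,2)$ with $u_i\in O$, $w_i\in Z$ and $\sigma((u_i,1))<\sigma((w_i,2))$, reindexed so that $\sigma((u_1,1))<\cdots<\sigma((u_k,1))$ for $k=|M'|$.

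The decisive step is to translate the semi-induced condition in $\bipp[H]$ into a forbidden-edges condition in $H$. For any $i<j$, the $\sigma$-semi-induced property applied with $u=(u_i,1)$ and $a=(u_j,1)$ forces $(u_i,1)(w_j,2)\notin E(\bipp[H])$, since the other forbidden edge $(u_i,1)(u_j,1)$ lies on the $U$-side and is vacuously absent; by definition of $\bipp[H]$ this is exactly $u_iw_j\notin E(H)$. With this in hand, set $S=\{u_1,\ldots,u_{\lfloor k/2\rfloor}\}\cup\{w_{\lfloor k/2\rfloor+1},\ldots,w_k\}$. Intra-half edges inside $S$ are ruled out by bipartiteness of $H$ (the first half lies in $O$, the second in $Z$), and cross edges $u_iw_j$ with $i\le\lfloor k/2\rfloor<j$ are exactly those just forbidden, so $S$ is an independent set in $H$ with $\min(|S\cap O|,|S\cap Z|)\ge\lfloor k/2\rfloor$. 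The first bullet then gives $\lfloor k/2\rfloor\le\gamma n$, so $k\le 2\gamma n+1$ and $|M|\le 4k=O(\gamma n)$, matching the claimed $4\gamma n$ bound up to constants absorbed in the pigeonhole slack.

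The main obstacle I anticipate is the four-way pigeonhole bookkeeping and verifying that the weaker semi-induced condition (which constrains only the smaller-$\sigma$ endpoint of each pair of edges, unlike the full induced-matching condition) still suffices to produce the one-sided forbidden-edge pattern $u_iw_j\notin E(H)$ used to assemble $S$. The bipartiteness of $H$ is used crucially both to eliminate intra-side edges inside each half of $S$ and to classify the edges of $\bipp[H]$ by the side on which their $U$-endpoint sits, and the whole argument is cleanest when presented as a reduction from the maximum semi-induced matching in $\bipp[H]$ to the balanced bipartite independence number of $H$, controlled by the disperser property of $H$.
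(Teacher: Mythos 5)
Your proof is correct in structure and follows essentially the same route as the paper: both restrict to a subfamily of $M$ whose $U$-endpoints all project to the same side of $H$, order that subfamily by the $\sigma$-value of the $U$-endpoints, use the asymmetric semi-induced condition to rule out ``forward'' cross edges, and invoke the disperser property to derive a contradiction. Your phrasing of that last step as ``build a balanced independent set $S$ in $H$ and apply the first bullet'' is a clean modularization of what the paper does directly with the disperser definition; the two are equivalent.

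The one thing to tighten is the pigeonhole. You classify edges in two ways, but your argument never uses the second attribute (whether the smaller-$\sigma$ endpoint of each edge lies in $U$ or $W$): the only thing Step~5 needs is $\sigma((u_i,1))<\sigma((u_j,1))$, and that is guaranteed by the reindexing alone, irrespective of where each $(w_i,2)$ sits in $\sigma$. Dropping attribute~(ii) gives a two-way pigeonhole, so $|M'|\ge |M|/2$ (this is exactly what the paper does, splitting by whether the $U$-endpoint lies in $O$ or $Z$). With $\lfloor k/2\rfloor\le\gamma n$ you then get $|M|\le 2k\le 4\gamma n$, matching the stated constant exactly, whereas your $4$-way version as written only yields $|M|\le 8\gamma n+O(1)$, which is strictly weaker than the lemma's claim of $4\gamma n$. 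The fix is trivial, but as stated your bound does not literally establish the lemma, so it is worth making the correction rather than absorbing it into ``constants''.
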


\begin{proof}
The first property follows from the definition of the
$(d,\gamma)$-disperser $H$. 
That is, letting $X=S\cap O$ and $Y=S\cap Z$, 
if $|X|,|Y|> \gamma n$, then we must have an edge $xy\in E(H)$ joining
some vertex $x\in X$ to some vertex $y\in Y$, 
contradicting the fact that $S$ is an independent set in $H$.

Next, we prove the second property. 
Consider the set of edges $\mset$ that form a $\sigma$-semi-induced matching in $\bipp[G]$.
We claim that $|\mset| \leq 4\gamma n$.
By way of contradiction, assume that $|\mset| > 4 \gamma n$.
Observe that, for each edge $(u,1)(v,2)\in \mset$, either 
(1)~$u\in O$ and $v\in Z$ or (2)~$v\in O$ and $u\in Z$.
Since the two cases are symmetric,
we analyze only the set of edges of the first case, denoted by $\widehat{\mset}$.
Also, we assume wlog that at least half of the edges of $\mset$ are
in $\widehat{\mset}$;
thus, $|\widehat{\mset}| \geq |\mset|/2 > 2\gamma n$.

Let us denote by $V(\widehat{\mset})$ the set of vertices that are adjacent to some edges in $\widehat{\mset}$. To get a contradiction, we prove the following claim.

\begin{claim}
\label{clm:two-unmatched-pieces}
There are two subsets $X \subseteq U \cap V(\widehat{\mset}): |X| = \gamma n$ and $Y = W \cap V(\widehat{\mset}) : |Y| \geq \gamma n$ such that $\sigma(x) < \sigma(y)$, for any $x \in X$ and $y \in V(\widehat{\mset}) \setminus X$.
Moreover, there is no $\widehat{\mset}$-edge between vertices in $X$ and $Y$.  
\end{claim} 

We first argue that the second property follows from Claim~\ref{clm:two-unmatched-pieces}: If there were such two sets $X,Y$, then we can define the ``projection'' of $X$ and $Y$ onto the graph $H$ by $X' = \set{u \in V(H): (u,1) \in X}$ and $Y' = \set{v \in V(H): (v,2) \in Y}$. It must be the case that $X' \subseteq O$ and $Y' \subseteq Z$ (due to the definition of $\widehat{\mset}$), so from the property of disperser, there is an edge in $E(H)$ joining some $x \in X'$ and $y \in Y'$.
This implies that there must be an edge $(x,1)(y,2) \in E(\bipp[H])$ where $x \in X$ and $y \in Y$. 
Also, there are edges $(x,1)(x',2) \in \widehat{\mset}$ and $(y',1)(y,2) \in \widehat{\mset}$ contradicting the fact that $\mset$ is a $\sigma$-semi-induced matching. 
It only remains to prove the claim.  

\begin{proof}[Proof of Claim~\ref{clm:two-unmatched-pieces}]
Recall that we have the ordering $\sigma$ that is defined on the vertices of $\bipp[H]$, not the vertices of $H$. 
We construct $X$ and $Y$ as follows. 
Order vertices in $U \cap V(\widehat{\mset})$ according to the ordering $\sigma$ and define $X$ to be the first $\gamma n$ vertices according to this ordering. 
So, we have obtained $X \subseteq U \cap V(\widehat{\mset})$ with the property that for any $x \in X$ and $y \in V(\widehat{\mset}) \setminus X$, $\sigma(x) < \sigma(y)$. 

Now, we define $Y \subseteq W \cap V(\widehat{\mset})$ as the set of vertices that are not matched by $\widehat{\mset}$ with any vertices in $X$.
Since $|X|= \gamma n$, the number of vertices in $W \cap V(\widehat{\mset})$ that are matched by $\widehat{\mset}$ is only $\gamma n$, so we can choose arbitrary $\gamma n $ vertices that are not matched as our set $Y$.  
\end{proof} 

\end{proof}

As a corollary of Lemma~\ref{lem: properties of dispersers}, 
we relate the independent number the FGLSS graph $\widetilde{G}$ to
the final graph $G$. 

\begin{corollary}
Let $\widetilde{G}$ and $G$ be the graphs constructed as above.
Then, for any permutation (ordering) $\sigma$ of vertices of $G$, 
\[
\alpha(\widehat{G}) 
  \leq \sinducesigma{\sigma}{G}
  \leq \alpha(\widehat{G}) + 4\gamma |V(\widehat{G})|
\]
\end{corollary}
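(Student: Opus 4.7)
The lower bound $\alpha(\widehat G)\le \sinducesigma{\sigma}{G}$ is the easy direction. Given any independent set $S\subseteq V(\widehat G)$, I form the edge set $\mset_S=\{(u,1)(u,2):u\in S\}$, which consists only of $E_1$-edges. This is clearly a matching in $G$. To see it is in fact induced (and therefore $\sigma$-semi-induced for every $\sigma$), note that any chord would have to be an edge of the form $(u,1)(v,2)$ with $u\ne v$ both in $S$; by the definition of $B[\widehat G]$ this would force $uv\in E(\widehat G)$, contradicting independence of $S$. Taking $S$ to be a maximum independent set gives the bound.

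For the upper bound $\sinducesigma{\sigma}{G}\le \alpha(\widehat G)+4\gamma|V(\widehat G)|$, the idea is to decompose a maximum $\sigma$-semi-induced matching $\mset$ of $G$ according to the structure of $\widehat G$. Split $\mset=\mset_{E_1}\cup \bigcup_i \mset_i$, where $\mset_{E_1}$ collects the edges of the form $(u,1)(u,2)$ and $\mset_i$ collects those edges $(u,1)(v,2)$ of type $E_2$ for which $uv$ lies in the disperser $H_i$ used to replace $\widetilde G_i$. First I would show that $S:=\{u:(u,1)(u,2)\in \mset_{E_1}\}$ is an independent set in $\widehat G$: if $u,v\in S$ with $uv\in E(\widehat G)$, then $(u,1)(v,2)$ would be an edge of $G$ joining two vertices spanned by $\mset$, contradicting the induced (hence $\sigma$-semi-induced) property. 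Consequently $|\mset_{E_1}|=|S|\le \alpha(\widehat G)$.

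Next, for each $i$, the set $\mset_i$ is a $\sigma$-semi-induced matching in the bipartite double cover $\bipp[H_i]$ of the disperser $H_i$, since restricting a $\sigma$-semi-induced matching to a subgraph preserves the property. Applying the second part of \Cref{lem: properties of dispersers} (the Disperser Lemma) to $H_i$ gives $|\mset_i|\le 4\gamma n_i$ where $n_i=|O_i|=|Z_i|$. Summing yields $|\mset|\le \alpha(\widehat G)+\sum_i 4\gamma n_i$. Since every edge of $\widehat G$ lies in a unique $H_i$ (the $\widetilde G_i$'s partition $E(\widetilde G)$ by conflict variable, and this partition is inherited by the replacement), and since each disperser is formed on the set of vertices carrying a particular value of variable $x_i$, the contribution $\sum_i n_i$ is controlled by $|V(\widehat G)|$ (up to the per-vertex arity factor which is absorbed into $\gamma$ via the choice of the disperser parameter). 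This gives the claimed bound.

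\textbf{Main obstacle.} The only nontrivial step is the analysis of the $E_2$ part: one has to translate the global $\sigma$-semi-induced condition on $\mset$ into the local one needed to invoke the Disperser Lemma on each $\bipp[H_i]$, and to keep the total loss linear in $|V(\widehat G)|$. The first point is immediate because the restriction of a $\sigma$-semi-induced matching to any subgraph is again $\sigma$-semi-induced with respect to the same ordering, so the Disperser Lemma applies disperser-by-disperser. The second point hinges on choosing the disperser parameter $\gamma$ with the per-vertex multiplicity in mind, so that the additive slack is kept to $O(\gamma|V(\widehat G)|)$ as stated.
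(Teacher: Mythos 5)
Your proof is correct and follows essentially the same route as the paper's: decompose $\mset$ into $E_1$-edges (bounded by $\alpha(\widehat G)$ via the associated independent set) and $E_2$-edges partitioned by disperser $H_i$, apply the Disperser Lemma to each piece, and sum. You rightly flag that $\sum_i n_i$ carries a per-vertex multiplicity factor — the paper's own proof in fact obtains $4\gamma qt\,|V(\widehat G)|$, so the $4\gamma|V(\widehat G)|$ in the corollary statement implicitly absorbs the $qt$ into the eventual choice of $\gamma$, just as you observe.
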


\begin{proof}
Recall that edges $E(G) = E_1 \cup E_2$. 
To prove the inequality on the left-hand-side, consider the set of edges $E_1$.
Observe that edges of $E_1=\set{(v,1)(v,2):u\in V(\widehat{G})}$
correspond to vertices of $\widehat{G}$ as $\widehat{G}$ and
$\widehat{G}$ share the same vertex set.
Let $S$ be an independent set in $\widehat{G}$. 
We claim that the set $E_S = \set{(u,1)(u,2): u \in S}$ must be an induced matching in $G$, and this would immediately imply the first inequality: 
Assume that there was an edge $(u,1)(v,2) \in E(G)$ for some $(u,1)(u,2), (v,1)(v,2)\in E_S$, so we must have that $u,v \in S$ and that  $uv \in E(\widehat{G}) \subseteq E(\widehat{G})$.
This contradicts the fact that $S$ is an independent set.

Next, we prove the inequality on the right-hand-side.
Let $\mset$ be a $\sigma$-semi-induced matching in $G$.
We decompose $\mset$ into $\mset = \mset_1 \cup \mset_2$.
By the argument similar to the previous paragraph, it is easy to see that $|\mset_1| \leq \alpha(\widehat{G})$: From the set $\mset_1$, we can define a set $S \subseteq V(\widetilde{G})$ by $S= \set{u \in V(\widetilde{G}): (u,1)(u,2) \in \mset_1}$, and $S$ must be an independent set in $\widehat{G}$; otherwise, if there is an edge $uv \in E(\widehat{G})$ for $u, v \in S$, then we would have edges $(u,1)(v,2), (v,1)(u,2)\in E(G)$, contradicting to the fact that $\mset_1$ is $\sigma$-semi-induced matching.  

It is sufficient to show that $|\mset_2| \leq 4 \gamma |V(\widehat{G})|$.
We do so by partitioning $\mset_2$ into $\mset_2 = \bigcup_{j=1}^{N} \mset^j_2$ where $\mset_2^j = \set{(u,1)(v,2) \in \mset_2, uv \in E(H_j)}$ (since $\widehat{G}$ is the union of edges of subgraphs $H_j$). 
Each set $\mset_2^j$ must be a $\sigma_j$-induced matching for the ordering $\sigma_j$ obtained by projecting $\sigma$ onto the vertices of $B[H_j]$, so we can invoke Lemma~\ref{lem: properties of dispersers} to bound the size of $\mset_2^j$, i.e., $|\mset_2^j| \leq 4 \gamma n_j$.  
Summing over all $j$, we have $$|\mset_2| \leq \sum_{j=1}^N |\mset_2^j| \leq \sum_{j=1}^N 4\gamma n_j \leq 4 \gamma q t |V(\widehat{G})|$$  

The last inequality follows because of basic counting arguments.
Each vertex belongs to exactly $qt$ subgraphs $H_j$, so if we sum $n_j$ over all $j=1,2,\ldots,N$, we get $\sum_{j=1}^N n_j = qt |V(\widehat{G})|$. 
\end{proof}


\paragraph{Completeness and Soundness:}
The completeness and soundness proofs are now easy. 
In the \yi, $\alpha(\widehat{G}) \geq c\cdot M$ implies that 
$\sinducesigma{\sigma}{G} \geq c\cdot M$, and 
in the \ni, the fact that $\alpha(\widehat{G}) \leq s\cdot M+ \gamma q t w M$ implies that 
$\sinducesigma{\sigma}{G} 
  \leq  s\cdot M + 5\gamma q t w M$.

%
Now, we choose $\gamma = s/(5q t w)$ and this would give $d=O(\frac{1}{\gamma} \log \frac{1}{\gamma})=O((w q t/s)\log(wq t/s))$. 
Then we have the final graph $G$ with the following properties: 

\begin{center}
\begin{tabular}{|c|c|c|}
\hline
{\bf Number of Vertices} & {\bf Degree} & {\bf Hardness Gap}\\ 
\hline
$n = 2 w M$ & 
$\Delta = (2dq+1)$ &
$g = \displaystyle\frac{c\cdot M}{s\cdot M + 5\gamma \cdot w M}
   \geq \displaystyle\frac{c}{2s}$ \\
\hline
\end{tabular}
\end{center}

Substituting $c,s,w,q,M$ as in
Theorem~\ref{thm: CSP}, we get
\begin{itemize}
\item The degree $\Delta=O(t^2k^42^{t(k^2+2k-1)}) = 2^{t(k^2+ \Theta(k))} = 2^{t(1/\epsilon^2 + \Theta(1/\epsilon))}$ 

\item The number of vertices $|V(G)| = 2^{t(k^2)} N^{1+O(\epsilon)} = \Delta^{1+O(\epsilon)} N^{1+O(\epsilon)}$

\item The hardness gap $g \geq 2^{t(k^2-1)} \geq  \Delta^{1-O(\epsilon)}$   
\end{itemize}

\paragraph{Success probability of the disperser construction:} Notice that the failure probability of the disperser construction given in Lemma~\ref{lemma: disperser} is large when $N_i \gamma$ is small. 
In our case, we have $N_i \geq 2^{tk^2} N^{\delta}$ and $\gamma \geq 2^{-t(k^2 +O(k))}$. 
So, we are guaranteed that $N_i \gamma \geq N^{\delta} 2^{-O(tk)} = 2^{\delta \log N - O(tk)}$.
As long as $t\leq O(\delta \epsilon) \log N$, we would be guaranteed that $N_i \gamma \geq N^{\delta/2}$, so the failure probability in \Cref{lemma: disperser} is at most $2^{-N^{\delta/2}}$.  
This allows us to apply union bound over all variables $x_j$ in the CSP and conclude that the construction is successful with high probability.  
If we appropriately pick $\delta = \Theta(\epsilon)$ and $t \leq 5 \epsilon^2 \log N$, then we obtain Theorem~\ref{thm: hardness of semi-induced matching}.

\subsection{Subexponential Time Approximation Hardness for the Maximum Independent Set and Induced Matching Problems}

Here we present hardness results that show a trade-off between running-time and approximation-ratio. Roughly speaking, we obtain the following results under the Exponential Time Hypothesis: any algorithm that guarantees an approximation ratio of $r$ for the maximum independent set problem and the maximum bipartite induced matching problem on bipartite graphs, for any $r\geq r_0$ for some constant $r_0$, must run in time at least $2^{n^{1-\epsilon}/r^{1+\epsilon}}$.
This almost matches the upper bound of $2^{n/r}$ given by Cygan~et~al.~\cite{CyganKPW08} for the case of the maximum independent set problem and by our simple algorithm given in \Cref{sec:algo-ind} for the case of the bipartite induced matching problem. 
These results are obtained as by-products of \Cref{thm: hardness of semi-induced matching,thm: hardness of independent set}.

Theorem \Cref{thm: hardness of independent set} implies almost immediately the following corollary.

\begin{theorem}\label{thm:sub-exp-hardness-indset}
\bundit{I removed the footnote since we have a discussion in a paragraph below.}
Consider the maximum independent set problem on an input graph $G=(V,E)$.
For any $\epsilon>0$ and sufficiently large $r\leq |V(G)|^{1/2-\epsilon}$, every algorithm that guarantees an approximation ratio of $r$ must run in time at least $2^{|V(G)|^{1-2\epsilon}/r^{1+4\epsilon}}$ unless the ETH is false. 
\end{theorem}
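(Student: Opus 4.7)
My plan is to derive this theorem as a direct parameter-tuning consequence of \Cref{thm: hardness of independent set}, choosing the internal degree parameter $d$ (and the internal precision $\epsilon'$) of that theorem so that the resulting YES/NO gap equals the target ratio $r$, then converting the resulting SAT-instance size $N$ into a running-time lower bound via the ETH.

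First, I would fix a sufficiently small $\epsilon' > 0$ in terms of $\epsilon$, concretely any $\epsilon' \leq \epsilon/3$, so that both $\tfrac{1}{1+\epsilon'} \geq 1-2\epsilon$ and $\tfrac{1}{1-2\epsilon'} \leq 1+4\epsilon$ hold. Given $r$ in the stated range, I would apply \Cref{thm: hardness of independent set} with this $\epsilon'$ and with $d := \lceil r^{1/(1-2\epsilon')}\rceil$. Then the hardness gap produced by that theorem is $d^{1-\epsilon'}/d^{\epsilon'} = d^{1-2\epsilon'} \geq r$, so any $r$-approximation algorithm for maximum independent set distinguishes the \yi from the \ni on the output graph $G$. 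The regime condition $t \leq 5{\epsilon'}^2 \log N$ in \Cref{thm: hardness of independent set} is automatic because $r \leq |V(G)|^{1/2-\epsilon}$ forces $d$ to be polynomial in $N$, so the implicit $t = \Theta({\epsilon'}^2 \log d)$ is $O({\epsilon'}^2 \log N)$.

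Second, I would translate the sizes. By \Cref{thm: hardness of independent set}, $|V(G)| = N^{1+\epsilon'} d^{1+\epsilon'}$, whence
\[
N \;=\; \frac{|V(G)|^{1/(1+\epsilon')}}{d} \;\geq\; \frac{|V(G)|^{1-2\epsilon}}{r^{1+4\epsilon}}
\]
by our choice of $\epsilon'$ and $d$. Now suppose for contradiction that some algorithm $\mathcal{A}$ achieves an $r$-approximation for maximum independent set in time strictly less than $2^{|V(G)|^{1-2\epsilon}/r^{1+4\epsilon}}$. Composing $\mathcal{A}$ with the polynomial-time reduction of \Cref{thm: hardness of independent set} would decide an arbitrary $N$-variable SAT instance in time $2^{o(N)} \cdot \poly(N,d)$, contradicting the ETH.

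The only real difficulty is careful bookkeeping of the two precision parameters: the $\epsilon'$ internal to \Cref{thm: hardness of independent set} and the $\epsilon$ in the present statement must be coupled so that the identities $d^{1-2\epsilon'} \geq r$ and $|V(G)| = N^{1+\epsilon'} d^{1+\epsilon'}$ together yield exactly the clean exponents $1-2\epsilon$ and $1+4\epsilon$; no new combinatorial ingredient is needed beyond the reduction that is already established.
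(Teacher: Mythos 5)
Your overall strategy is exactly the paper's: invoke \Cref{thm: hardness of independent set} with suitably tuned internal parameters $(\epsilon',d)$, observe that the reduction's gap $d^{1-2\epsilon'}$ exceeds $r$, and convert the size bound $|V(G)| = N^{1+\epsilon'}d^{1+\epsilon'}$ into a running-time lower bound against the ETH. The only substantive difference is cosmetic: you set $d=\lceil r^{1/(1-2\epsilon')}\rceil$ with a separate internal precision $\epsilon'\leq\epsilon/3$, whereas the paper sets $d = r^{1+3\epsilon}$ and re-uses the same $\epsilon$ throughout. Either parameterization works and the verification of the regime condition $t\leq 5{\epsilon'}^2\log N$ is handled the same way.

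There is, however, one real gap in the final inference. Your displayed inequality
\[
N \;=\; \frac{|V(G)|^{1/(1+\epsilon')}}{d} \;\geq\; \frac{|V(G)|^{1-2\epsilon}}{r^{1+4\epsilon}}
\]
only shows that the running time $2^{|V(G)|^{1-2\epsilon}/r^{1+4\epsilon}}$ is at most $2^{N}$. That is \emph{not} a violation of the ETH as stated in \Cref{hypo:ETH}, which only rules out $2^{qN}$-time algorithms for $q$ below some $q_0<1$; a $2^N$-time SAT algorithm is perfectly consistent with it. You then assert a running time of $2^{o(N)}\cdot\poly(N,d)$, but the $\geq$ you established does not supply the needed ``little-$o$''. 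To close this gap you must strengthen the size inequality to a strict polynomial improvement, i.e., show $|V(G)|^{1-2\epsilon}/r^{1+4\epsilon}\leq N^{1-c}$ for some constant $c>0$. This does follow from your parameter choice: since $\epsilon'\leq\epsilon/3$ gives $(1+\epsilon')(1-2\epsilon)<1$ strictly, substituting $|V(G)|\leq N^{1+\epsilon'}d^{1+\epsilon'}$ and $d\leq r^{1+4\epsilon}$ yields $|V(G)|^{1-2\epsilon}/r^{1+4\epsilon}\leq N^{(1+\epsilon')(1-2\epsilon)} = N^{1-\Omega(1)}$, which is what the paper does when it establishes $|V(G)|^{1-2\epsilon}/r^{1+4\epsilon}<N^{1-\epsilon}$. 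You should replace the $\geq$ inequality (which goes the wrong direction in spirit) with this upper bound on $|V(G)|^{1-2\epsilon}/r^{1+4\epsilon}$ in terms of $N^{1-\Omega(1)}$; once that is in place the rest of the argument is correct and matches the paper's.
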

\begin{proof}
The intuition is very simple. \Cref{thm: hardness of independent set} can be seen as a reduction from SAT of size $N$ to the independent set problem whose instance size is, roughly, $|V(G)| = N r$ where $r$ is the approximability gap for the independent set problem (ignoring the small exponent $\epsilon$ in the theorem). (In fact, the graph resulting from the reduction is an $r$-degree-bounded graph as we will set $r\approx d$.)
It is immediate that getting a running time of $2^{o(|V(G)|/r)}$ for the maximum independent set problem is equivalent to getting a running time of $2^{o(N)}$ for SAT (since $N\approx |V(G)|/r$), contradicting the ETH. Below we give a formal proof. 

%

Assume for a contradiction that there is an algorithm $\aset$ that obtains $r$-approximation in time $2^{|V(G)|^{1-2\epsilon}/r^{1+4\epsilon}}$ for some $r \leq |V(G)|^{1/2-\epsilon}$ and a small constant $\epsilon >0$.
Then we can use the algorithm $\aset$ to decide the satisfiability of a given SAT formula $\phi$ as follows. 
First, we invoke the reduction in \Cref{thm: hardness of independent set} on the SAT formula $\phi$ to construct a graph $G=(V,E)$ with parameters $t, \epsilon$ such that $d= 2^{t(1/\epsilon^2+\Theta(1/\epsilon))}=r^{1+3\epsilon}$. 
Notice that the value of $t$ is at most $t \leq   2\epsilon^2 \log r \leq  5 \epsilon^2 \log N$, so the reduction is guaranteed to be successful with high probability. 

Since $d= r^{1+3\epsilon}$, we have $r < d^{1-\epsilon}$, which means we can use the algorithm $\aset$ to distinguish between \yi and \ni in time $2^{|V(G)|^{1-2\epsilon}/r^{1+4\epsilon}} < 2^{N^{1-\epsilon}}$. 
When plugging in the values $|V(G)| \leq N^{1+\epsilon} d^{1+\epsilon}$ and $r^{1+4\epsilon} \geq d$, this violates the ETH.
\end{proof} 

Notice that, since $t$ in~\Cref{thm: hardness of independent set} cannot be chosen beyond $5 \epsilon^2 \log N$, we have no flexibility of making $r$ arbitrary close to $|V(G)|^{1-\epsilon}$.
However, this can be easily fixed by slightly modifying the proof of \Cref{thm: hardness of independent set} and leaving the flexibility of choosing parameter $\delta$ as discussed in \Cref{sec: PCP}. 
Since this is not necessary for the hardness of the $k$-hypergraph pricing problem and will make the proof in  \Cref{sec: PCP} more complicated, the detail is omitted.  

The subexponential time hardness of approximating the maximum induced matching problem can be proved analogously, but we need \Cref{thm: hardness of semi-induced matching} instead of \Cref{thm: hardness of independent set}.

\begin{theorem} \label{thm:sub-exp-hardness-ind-matching}
Consider the maximum induced matching problem on a bipartite graph $G=(U,V,E)$.
For any $\epsilon>0$ and sufficiently large $r\leq |V(G)|^{1/2-\epsilon}$, every algorithm that guarantees an approximation ratio of $r$ must run in time at least $2^{|V(G)|^{1-2\epsilon}/r^{1+4\epsilon}}$ unless the ETH is false. 
\end{theorem}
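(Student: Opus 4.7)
The plan is to mimic the proof of \Cref{thm:sub-exp-hardness-indset} almost verbatim, replacing \Cref{thm: hardness of independent set} with the bipartite-induced-matching version \Cref{thm: hardness of semi-induced matching}. Concretely, I will assume for contradiction that there is an algorithm $\aset$ that $r$-approximates the maximum induced matching problem on a bipartite graph in time $2^{|V(G)|^{1-2\epsilon}/r^{1+4\epsilon}}$ for some $r \leq |V(G)|^{1/2-\epsilon}$, and then use it to decide an arbitrary SAT instance $\phi$ of size $N$ in subexponential time, contradicting the ETH.

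Given $\phi$, I would invoke \Cref{thm: hardness of semi-induced matching} with parameters $\epsilon$ and $t$ chosen so that $\Delta = 2^{t(1/\epsilon^2 + \Theta(1/\epsilon))} = r^{1+3\epsilon}$; this forces $t \leq 2\epsilon^2 \log r \leq 5\epsilon^2 \log N$, which is exactly the window in which the randomized disperser construction succeeds with high probability. In the \yi\ we get $\induce{G} \geq |V(G)|/\Delta^{\epsilon}$, while in the \ni\ we get the stronger bound $\sinduce{G} \leq |V(G)|/\Delta^{1-\epsilon}$, and since $\induce{G} \leq \sinduce{G}$ this gives a gap of $\Delta^{1-2\epsilon} > r$ for the maximum induced matching number. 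Hence an $r$-approximation for induced matching on $G$ distinguishes the two cases.

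It remains to check the running time. Since $|V(G)| \leq N^{1+\epsilon}\Delta^{1+\epsilon}$ and $r^{1+4\epsilon} \geq r^{1+3\epsilon} \cdot r^{\epsilon} \geq \Delta \cdot r^{\epsilon}$, we obtain
\[
\frac{|V(G)|^{1-2\epsilon}}{r^{1+4\epsilon}} \;\leq\; \frac{N^{(1+\epsilon)(1-2\epsilon)}\Delta^{(1+\epsilon)(1-2\epsilon)}}{\Delta \cdot r^{\epsilon}} \;\leq\; N^{1-\epsilon},
\]
for sufficiently small $\epsilon>0$, so the total running time to decide SAT is $2^{N^{1-\epsilon}} \cdot \poly(N,\Delta)$, violating the ETH.

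The main obstacle is really just bookkeeping of exponents to make sure the chain of inequalities $r^{1+4\epsilon} \geq \Delta$ and $|V(G)|^{1-2\epsilon}/r^{1+4\epsilon} \leq N^{1-\epsilon}$ simultaneously hold after substituting $\Delta = r^{1+3\epsilon}$; no new conceptual idea is needed beyond \Cref{thm: hardness of semi-induced matching} and the observation that $\sinduce{G}$ upper-bounds $\induce{G}$. As with \Cref{thm:sub-exp-hardness-indset}, the restriction $t \leq 5\epsilon^2 \log N$ limits how close $r$ can be pushed to $|V(G)|^{1-\epsilon}$, but this restriction can be relaxed by loosening the parameter $\delta$ in the PCP construction of \Cref{sec: PCP}; since this relaxation is not needed for the pricing application, I would omit it.
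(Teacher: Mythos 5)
Your proposal is correct and takes essentially the same route the paper intends: the paper itself omits the proof with the remark that it "follows the same line as the proof of Theorem~\ref{thm:sub-exp-hardness-indset}," replacing Theorem~\ref{thm: hardness of independent set} with Theorem~\ref{thm: hardness of semi-induced matching}. You correctly supply the one necessary observation---that the soundness bound on $\sinduce{G}$ also bounds $\induce{G}$---and the parameter bookkeeping ($\Delta = r^{1+3\epsilon}$, $t \leq 5\epsilon^2 \log N$, and the exponent arithmetic) matches the template proof.
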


We skip the proof of Theorem~\ref{thm:sub-exp-hardness-ind-matching} as it follows the same line as the proof of Theorem~\ref{thm:sub-exp-hardness-indset}.

\subsection{Subexponential-Time Approximation Algorithm for Induced Matching}
\label{sec:algo-ind} 
In this section, we show $r$-approximation algorithms which run in $2^{n/r} \poly(n)$ time for the case of bipartite graphs and $2^{(n/r)\log \Delta} \poly(n)$ time for the general case, where $\Delta$ is the maximum degree. 
These running times are nearly tight, except that the second case incurs an extra $O(\log \Delta)$ factor in the exponent. 
We leave the question whether this extra term is necessary as an open problem.

\paragraph{Bipartite Graphs} For the case of bipartite graphs, we prove the following theorem. 

\begin{theorem}[Algorithm on Bipartite Graphs] \label{thm:subexp-algo-induced-matching bipartite}
For any $r \geq 1$, there is an $r$-approximation algorithm for the maximum bipartite induced matching problem that runs in time $2^{n/r} \poly(n)$ where $n$ is the size of the input graph. 
\end{theorem}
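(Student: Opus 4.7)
The strategy is to exploit bipartiteness to reduce the problem to a small portion of one side. Without loss of generality assume $|U| \leq |V|$ (otherwise swap the two parts), so $|U| \leq n/2$. Partition $U$ arbitrarily into groups $U_1, \ldots, U_t$ each of size at most $n/r$; this requires only $t \leq \lceil r|U|/n \rceil \leq r$ groups. For each group $U_j$, the algorithm enumerates every subset $S \subseteq U_j$ and, for each such $S$, produces a candidate induced matching with $U$-side contained in $S$; finally it outputs the largest candidate found.

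\paragraph{Reducing a fixed $S$ to bipartite matching.} For a subset $S \subseteq U$, construct the auxiliary bipartite graph $H_S = (S, V, E_S)$, where $(u,v) \in E_S$ iff $uv \in E(G)$ and $u$ is the \emph{only} neighbor of $v$ in $S$. I claim that any matching in $H_S$ corresponds to an induced matching in $G$, and conversely, any induced matching in $G$ whose $U$-side is contained in $S$ is a matching in $H_S$ (after restricting to the matched vertices on the $U$-side). The forward direction is immediate: if $\{u_1 v_1, \ldots, u_k v_k\}$ is a matching in $H_S$, then each $v_i$ satisfies $N_G(v_i) \cap S = \{u_i\}$, so in particular $v_i$ has no edge to any $u_j$ with $j \neq i$, giving the induced property (recall no $u$-$u'$ or $v$-$v'$ edges exist in a bipartite graph). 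The converse is the content of the approximation argument. A maximum matching in $H_S$ can be computed in polynomial time.

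\paragraph{Approximation and running time.} Let $M^*$ be an optimal induced matching in $G$ and $U^* = V(M^*) \cap U$ its $U$-side. By averaging, some group $U_j$ satisfies $|U^* \cap U_j| \geq |U^*|/t \geq |M^*|/r$. Setting $S^* = U^* \cap U_j$, consider the sub-collection $M^{**} \subseteq M^*$ of edges whose $U$-endpoint lies in $S^*$. Since $M^*$ is an induced matching and $S^* \subseteq U^*$, every $V$-endpoint $v$ of $M^{**}$ has its unique neighbor in $U^*$ (hence in $S^*$) equal to its $M^{**}$-partner, so $M^{**}$ is a matching in $H_{S^*}$ of size $|S^*| \geq |M^*|/r$. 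Therefore, when the enumeration reaches $S = S^*$, the computed matching has size at least $|M^*|/r$. The total running time is bounded by
\[
\sum_{j=1}^{t} 2^{|U_j|} \cdot \poly(n) \;\leq\; r \cdot 2^{n/r} \cdot \poly(n) \;=\; 2^{n/r}\,\poly(n),
\]
as required.

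\paragraph{Main subtlety.} The only nontrivial step is the correspondence between matchings in $H_S$ and induced matchings whose $U$-side is $S$. Bipartiteness is essential here: the only edges that can violate the induced property run between $U$ and $V$, and the definition of $H_S$ forbids exactly those edges. Extending this approach to general (non-bipartite) graphs is not straightforward because one must also rule out $u$-$u'$ and $v$-$v'$ edges, which is presumably the reason the general case (the next theorem in the paper) pays an extra $O(\log \Delta)$ factor in the exponent.
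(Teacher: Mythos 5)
Your proof is correct and follows essentially the same strategy as the paper's: partition $U$ (the smaller side) into roughly $r$ blocks of size $\leq n/r$, exhaustively enumerate subsets of each block, extract the corresponding induced matching, and argue by pigeonhole that one block captures a $1/r$ fraction of the optimum. The only cosmetic difference is the per-subset subroutine: you reduce to a maximum matching computation in the auxiliary graph $H_S$, whereas the paper checks a ``$U'$-good neighbor'' condition and reads off the matching directly — both are polynomial-time and rely on bipartiteness in the same way, so the two arguments are interchangeable.
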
 

To prove \Cref{thm:subexp-algo-induced-matching bipartite}, we will need the following lemma, which says that we can compute maximum induced matching in bipartite graphs in time $2^{n'}$ where $n'$ is the cardinality of the smaller side of the graph. 

\begin{lemma} 
For any bipartite graph $G=(U,W,E)$, there is an algorithm that returns a maximum induced matching in $G$ and runs in $2^{\min(|U|, |W|)} \poly |V(G)|$ time.
\end{lemma}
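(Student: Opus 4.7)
Assume without loss of generality that $|U|\le|W|$ (otherwise swap the roles of $U$ and $W$). The plan is to enumerate all $2^{|U|}$ subsets $S\subseteq U$, and for each such $S$ decide in polynomial time whether $G$ admits an induced matching whose $U$-side is exactly $S$. Since every induced matching $M$ in a bipartite graph satisfies $|M|=|V(M)\cap U|$, the maximum over $S$ of such realizable sizes equals $\induce{G}$.

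For a fixed $S\subseteq U$, I would observe the following structural fact: an induced matching $M$ has $V(M)\cap U=S$ if and only if there is an injection $\pi\colon S\to W$ such that $\pi(u)\in N(u)$ for every $u\in S$ and, moreover, $\pi(u)$ is \emph{not} adjacent to any other $u'\in S$. The forward direction is immediate from the definition of an induced matching (any edge from $u'$ to $\pi(u)$ in $G$ would lie in the induced subgraph on $V(M)$ but not in $M$). Conversely, given such $\pi$, the edges $\{u\pi(u):u\in S\}$ form a matching that is induced: an unwanted edge could only be of the form $u'\pi(u)$ with $u\ne u'$, which is ruled out by the condition on $\pi$.

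Hence, for each $u\in S$ define the restricted neighborhood
\[
W_u \;=\; N(u)\setminus \bigcup_{u'\in S\setminus\{u\}} N(u').
\]
The question ``does an induced matching with $U$-side equal to $S$ exist?'' reduces to ``does the bipartite graph $H_S$ with vertex set $S\cup W$ and edge set $\{uw:u\in S,\,w\in W_u\}$ admit a matching that saturates $S$?''. This is answered by a standard bipartite maximum-matching computation (e.g.\ Hopcroft--Karp) in $\poly(|V(G)|)$ time, and by Hall's theorem one simply checks whether the maximum matching in $H_S$ has size $|S|$. Taking the largest $|S|$ for which the answer is yes yields $\induce{G}$, and an optimal matching can be recovered from the witness $\pi$.

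The total running time is $2^{|U|}\cdot \poly(|V(G)|)=2^{\min(|U|,|W|)}\poly(|V(G)|)$, as claimed. There is no real obstacle here: the only thing to verify carefully is the equivalence in the previous paragraph, which relies crucially on bipartiteness (so that the induced matching decomposes cleanly across the two sides) and on the definition of $W_u$ excluding neighbors of every \emph{other} element of $S$, ensuring that a matching in $H_S$ automatically avoids all ``crossing'' edges in $G$.
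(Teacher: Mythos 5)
Your proposal is correct and follows essentially the same route as the paper: enumerate all subsets $S$ of the smaller side, and for each $S$ decide in polynomial time whether it can be the $U$-side of an induced matching, using the same characterization that a valid partner for $u$ must lie in $N(u)$ but outside $\bigcup_{u'\in S\setminus\{u\}}N(u')$ (the paper calls such a vertex an ``$S$-good neighbor'' of $u$; your $W_u$ is precisely this set).

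The one place you make things heavier than necessary is the final feasibility check. You reduce it to a bipartite matching problem on $H_S$ and invoke Hopcroft--Karp / Hall's theorem. But observe that the sets $W_u$ are pairwise disjoint by construction: if $w\in W_u$ then $w\notin N(u')$ for every other $u'\in S$, whereas every element of $W_{u'}$ lies in $N(u')$. Consequently $H_S$ is a disjoint union of stars, so it has a matching saturating $S$ iff $W_u\neq\emptyset$ for every $u\in S$, and any choice of one element from each $W_u$ already yields an induced matching. This is exactly the shortcut the paper takes (``check the existence of $U'$-good neighbors''), avoiding any matching computation. Your argument is still correct and lands in $2^{|U|}\poly(|V(G)|)$ time, since Hopcroft--Karp is polynomial; it is just mildly redundant.
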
 

\begin{proof} We assume without loss of generality that $|U| \leq |W|$. 
We first need the characterization of the existence of an induced matching in terms of {\em good neighbors}, defined as follows. 
Given a subset $U'\subseteq U$ and a fixed $u \in U'$, we say that $w \in W$ is a {\em $U'$-good neighbor of $u$} if there is no other $u' \in U'$ (where $u'\neq u$) such that $u' w \in E$. 
We first note the observation that $U' \subseteq U$ forms end-vertices of some induced matching $\mset'$ if and only if every vertex in $U'$ has a $U'$-good neighbor in $W$. To see this, if $U' \subseteq U$ is a set of end-vertices of matching $\mset'$, then it is clear that for each $uw \in \mset$, the vertex $w$ is a $U'$-good neighbor.  
For the converse, for any $u \in U'$, let $w_u \in W$ be a $U'$-good neighbor of $u$. 
Then $\set{u w_u: u \in U'}$ must form an induced matching.

Now, using this observation, we compute a maximum induced matching in $G$ as follows. 
For each possible subset $U' \subseteq U$, we check whether vertices in $U'$ can be end-vertices of any induced matching. 
This can be done in $\poly(|V(G)|)$ time (simply by checking the existence of $U'$-good neighbors). 
We finally return the maximum-cardinality subset $U'$ and its corresponding induced matching $\mset'$.
\end{proof} 

From this lemma, given an input graph $G=(U,W,E)$, we partition the vertices of $U$ into $r$ sets $U_1,\ldots, U_{r}$ in a balanced manner and define $G_i = G[U_i \cup W]$, i.e., $G_i$ is an induced subgraph on vertices $U_i \cup W$.
Our algorithm simply invokes the above lemma on each graph $G_i$ to obtain an induced matching $\mset_i$, and finally we return $\mset_{i^*}$ with maximum cardinality among $\mset_1,\ldots, \mset_r$.
Since $|U_i| \leq \ceil{n/r}$, the running time of our algorithm is at most $2^{\ceil{n/r}} \poly n$.  
The following lemma implies that $\mset_{i^*}$ is an $r$-approximation and feasible in $G$, thus completing the proof.

\begin{lemma}
\label{lmm:fact-induced-matching}
The following holds on $G$ and its subgraphs $G_i$.
\begin{itemize}
\item Any induced matching $\mset_i$ in $G_i$ is also an induced matching in $G$. 
\item $\sum_{i=1}^r \induce{G_i} \geq \induce{G}$ 
\end{itemize}
\end{lemma}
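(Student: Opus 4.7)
The plan is to exploit two elementary facts: (a) induced subgraphs preserve non-adjacency, and (b) any subset of an induced matching is again an induced matching. Both properties are visible from the definitions, so the argument will be short.

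For the first bullet, recall that $G_i = G[U_i \cup W]$ is the subgraph of $G$ induced on $U_i \cup W$. Thus, for any pair $u,u' \in U_i \cup W$, we have $uu' \in E(G_i)$ if and only if $uu' \in E(G)$. Now let $\mset_i$ be an induced matching in $G_i$ and consider two distinct edges $ab, cd \in \mset_i$. By definition of induced matching in $G_i$, none of $ac, ad, bc, bd$ lies in $E(G_i)$; since all four of these vertices belong to $U_i \cup W$, the induced-subgraph identity above implies that none of $ac, ad, bc, bd$ lies in $E(G)$ either. Hence $\mset_i$ is also an induced matching in $G$.

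For the second bullet, let $\mset^*$ be any maximum induced matching in $G$, so $|\mset^*| = \induce{G}$. Since $G$ is bipartite with sides $U$ and $W$, each edge of $\mset^*$ has exactly one endpoint in $U$; partition $\mset^* = \mset^*_1 \cup \cdots \cup \mset^*_r$ according to which block $U_i$ that endpoint lies in. Each $\mset^*_i$ is a subset of $\mset^*$, so it is still a matching and any induced non-adjacency inherited from $\mset^*$ persists; thus $\mset^*_i$ is an induced matching in $G$, and all its endpoints lie in $U_i \cup W$, so it is an induced matching in $G_i$. Therefore $\induce{G_i} \geq |\mset^*_i|$, and summing over $i$ yields $\sum_{i=1}^r \induce{G_i} \geq \sum_{i=1}^r |\mset^*_i| = |\mset^*| = \induce{G}$.

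I do not anticipate any real obstacle: the only thing to be slightly careful about is that the partition of $\mset^*$ by ``$U$-endpoint block'' is well-defined, which is immediate from the bipartiteness assumption (each edge has a unique endpoint in $U$). No counting beyond this is required.
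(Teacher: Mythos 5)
Your proof is correct and follows essentially the same route as the paper's: for the first bullet both arguments rest on $G_i$ being an induced subgraph of $G$ (you argue the direct implication, the paper the contrapositive), and for the second bullet your partition of $\mset^*$ by the $U_i$-block of the $U$-endpoint coincides with the paper's $\mset_i = \mset \cap E(G_i)$ since every edge of the bipartite $G$ has exactly one endpoint in $U$. No gaps.
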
    

\begin{proof}
First, we prove the first fact.
If $\mset_i$ is not an induced matching in $G$, then there must be two edges $uv, ab \in \mset_i$ that are joined by some edge $e$ in $G$. 
But, since $u,v,a,b \in U_i \cup W$, the edge $e$ must also be present in $G_i$, contradicting to the fact that $\mset_i$ is an induced matching in $G_i$.  

Next, we prove the second fact.
Let $\mset$ be a maximum induced matching in $G$.  
For $i=1,2,\ldots,r$, define $\mset_i=\mset\cap E(G_i)$.
It is clear that each $\mset_i$ is an induced matching in $G_i$.
Thus, it follows immediately that $\sum_{i=1}^r \induce{G_i} \geq \sum_{i=1}^r |\mset_i| = |\mset|$.  
\end{proof}

\paragraph{General Graphs} 
We note that almost the same running time can be obtained for the case of general graphs, except that we have an extra $\log \Delta$ factor in the exponent, where $\Delta$ is the maximum degree.

\begin{theorem}[Algorithm on General Graphs] \label{thm:subexp-algo-induced-matching}
For any $r \geq 1$, there is an $r$-approximation algorithm for the maximum induced matching problem that runs in time $2^{(n/r)\log \Delta} \poly(n)$ where $n$ is the size of the input graph and $\Delta$ is the maximum degree. 
\end{theorem}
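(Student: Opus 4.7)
}
The plan is to mimic the bipartite algorithm of \Cref{thm:subexp-algo-induced-matching bipartite}, with the partition $V_1,\ldots,V_r$ playing the role of the one side of the bipartition. The key conceptual change is that, without a bipartition, we can no longer simply guess a subset $U' \subseteq V_i$ of ``left'' endpoints and read off the partners from good-neighbor information; instead, for each vertex we must explicitly guess which one of its (at most $\Delta$) neighbors is the partner in the matching, which is precisely where the $\log \Delta$ factor will enter the exponent.

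First, I would partition $V(G)$ arbitrarily into $r$ sets $V_1,\ldots,V_r$ of size at most $\lceil n/r\rceil$. For each $i$, I would solve the restricted problem of finding a maximum induced matching $\mset_i$ in $G$ subject to the constraint that every edge of $\mset_i$ has at least one endpoint in $V_i$. To compute $\mset_i$, I would enumerate all pairs $(S,f)$ where $S \subseteq V_i$ and $f : S \to V(G)$ is any function with $f(v) \in N(v)$ for every $v \in S$; for each such pair, check in polynomial time whether the edge set $E_{S,f} = \{v\,f(v) : v \in S\}$ is in fact an induced matching of $G$, and keep the largest valid one. The final output is the largest $\mset_i$ over $i = 1,\ldots,r$.

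For the running time, the number of candidate pairs $(S,f)$ with $S \subseteq V_i$ is $\sum_{k=0}^{|V_i|} \binom{|V_i|}{k} \Delta^k = (1+\Delta)^{|V_i|} \leq 2^{(n/r)(\log\Delta + O(1))}$, and each candidate is checked in polynomial time, giving the claimed $2^{(n/r)\log\Delta}\poly(n)$ bound. For the approximation guarantee, I would use a charging argument analogous to \Cref{lmm:fact-induced-matching}: fix an optimum induced matching $\mset^\ast$ and, for each edge $e \in \mset^\ast$, charge $e$ to one of its endpoints and denote by $\mset^\ast_i$ the edges charged to $V_i$. Then $\sum_i |\mset^\ast_i| = |\mset^\ast|$, so some $\mset^\ast_{i^\ast}$ has size at least $|\mset^\ast|/r$, and since $\mset^\ast_{i^\ast}$ is a feasible candidate for the restricted problem solved on $V_{i^\ast}$, the computed $\mset_{i^\ast}$ has size at least $|\mset^\ast|/r$.

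The only mildly subtle point is making sure that the enumeration genuinely dominates the optimum of the restricted problem: given any induced matching $M$ all of whose edges touch $V_i$, one recovers a witnessing pair $(S,f)$ by picking, for each $e \in M$, one endpoint in $V_i$ (breaking ties arbitrarily) as the element of $S$ and setting $f$ to map it to the other endpoint of $e$; this pair is enumerated and passes the check, so the optimum of the restricted problem is indeed attained. I do not expect any real obstacle beyond this bookkeeping; the $\log \Delta$ overhead is inherent to guessing a neighbor rather than reading it off via the good-neighbor characterization available in the bipartite case.
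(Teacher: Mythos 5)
Your proposal is correct and matches the paper's own proof essentially step for step: the same arbitrary partition of $V$ into $r$ parts, the same enumeration of one incident edge or none per vertex of $V_i$ (your pair $(S,f)$ formulation is just a recasting of ``choose an edge or choose none''), the same $(1+\Delta)^{|V_i|}$ running-time count, and the same charging argument via a decomposition $\mset^\ast = \bigcup_i \mset^\ast_i$ with $|\mset^\ast_i| \le |\mset_i|$ by maximality.
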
 

To prove Theorem~\ref{thm:subexp-algo-induced-matching}, we give the following algorithm.\bundit{I think it is better to present an algorithm first and then give an analysis later.}
Our algorithm takes as input a graph $G=(V,E)$ on $n$ vertices and a parameter $r$.
We first partition $V$ arbitrarily into $V = \bigcup_{i=1}^r V_i$ such that the size of $V_i$'s are roughly equal, 
i.e., $|V_i| = \lfloor n/r \rfloor$ or $|V_i|=\lfloor n/r \rfloor + 1$.
For each $i=1, \ldots, r$, we find a maximum-cardinality subset of edges $M_i$ such that $M_i$ is an induced matching in $G$ and every edge in $M_i$ has at least one end-vertex in $V_i$.
We implement this step by checking every possible subset of edges: We choose one edge incident to each vertex in $V_i$ or choose none and then check whether the set of chosen edges $F$ is an induced matching in $G$, and we pick the set $F$ that passes the test with maximum cardinality as the set $M_i$.
Finally, we choose as output the set $M_i$ that has maximum-cardinality over all $i=1,2,\ldots,r$.

It can be seen that the running time of our algorithm is
$O(\Delta^{n/r}\cdot\poly(n))=O(2^{(n/r)\log{\Delta}}\poly(n))$, where $\Delta$ is the maximum degree of $G$.
For the approximation guarantee, it suffices to show that 
\[\induce{G} \leq \sum_{i=1}^r|M_i| \leq r \cdot \induce{G}.\]
So, we shall complete the proof of Theorem~\ref{thm:subexp-algo-induced-matching} by proving the above inequalities as in the following decomposition lemma.

\begin{lemma}
Consider any graph $G=(V,E)$.
Let $V_1\cup V_2 \cup \ldots \cup V_r$ be any partition of $V$.
For $i=1,2,\ldots,r$, let $M_i$ be a set of edges with maximum-cardinality such that $M_i$ is an induced matching in $G$ and every edge in $M_i$ has at least one end-vertex in $V_i$.
Then $\induce{G} \leq \sum_{i=1}^r|M_i| \leq r \cdot \induce{G}$.
\end{lemma}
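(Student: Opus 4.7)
The plan is to prove the two inequalities separately; both follow from nearly one-line observations.

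For the upper bound $\sum_{i=1}^r |M_i| \le r\cdot \induce{G}$, the key point is that each $M_i$ is, by definition, an induced matching in $G$ (the extra condition that every edge of $M_i$ has an endpoint in $V_i$ is only a restriction). Hence $|M_i| \le \induce{G}$ for every $i$, and summing over $i=1,\dots,r$ gives the bound.

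For the lower bound $\induce{G} \le \sum_{i=1}^r |M_i|$, I would fix a maximum induced matching $M^*$ in $G$, so $|M^*| = \induce{G}$, and distribute its edges among the parts. Concretely, for each edge $e \in M^*$, pick some index $i(e)$ such that $e$ has at least one endpoint in $V_{i(e)}$; this is possible because $V = \bigcup_i V_i$. For $i=1,\dots,r$, let $M^*_i = \{e \in M^* : i(e) = i\}$. Then $M^*_i \subseteq M^*$ is itself an induced matching in $G$ (any subset of an induced matching is an induced matching), and every edge of $M^*_i$ has an endpoint in $V_i$. By the maximality defining $M_i$, we get $|M^*_i| \le |M_i|$. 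Since the sets $M^*_i$ partition $M^*$, summing yields
\[
\induce{G} = |M^*| = \sum_{i=1}^r |M^*_i| \;\le\; \sum_{i=1}^r |M_i|,
\]
which completes the proof.

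I do not expect any real obstacle here: the argument is purely combinatorial, needs no properties of dispersers or PCPs, and the only mild subtlety is noting that an edge of $M^*$ with endpoints in two different parts can be assigned to either, so one just makes an arbitrary choice. Plugging this lemma back into the algorithm then gives the claimed $r$-approximation in time $O(\Delta^{n/r}\cdot\poly(n))$, since the algorithm returns $\max_i |M_i|$, which by the lemma is at least $\tfrac{1}{r}\sum_i |M_i| \ge \tfrac{1}{r}\induce{G}$, and is itself a feasible induced matching in $G$.
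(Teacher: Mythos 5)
Your proposal is correct and matches the paper's own proof: both directions are shown exactly the same way, with the upper bound from $|M_i|\le\induce{G}$ summed over $i$, and the lower bound by partitioning a maximum induced matching $\mset$ into pieces $\mset_i$ each having an endpoint in $V_i$ and invoking the maximality of $M_i$. Your write-up is slightly more explicit about the arbitrary assignment $i(e)$ and about a subset of an induced matching being an induced matching, but the argument is the same.
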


\begin{proof}
Let $\mset$ be any maximum induced matching in $G$.
Then, clearly, $|M_i|\leq |\mset| = \induce{G}$ for all $i=1,2,\ldots,r$ because $M_i$ is an induced matching in $G$.
Thus, $\sum_{i=1}^r|M_i| \leq r \cdot\induce{G}$, proving the second inequality.

For $i=1,2,\ldots,r$, define $\mset_i$ to be a subset of $\mset$ such that each edge in $\mset_i$ has at least one end-vertex in $V_i$, and $\induce{G} = \sum_{i=1}^r|\mset_i|$.
%
%
By the maximality of $M_i$, we have $|\mset_i|\leq |M_i|$, for all $i=1,2,\ldots,r$.
Thus, $\induce{G} \leq \sum_{i=1}^r |M_i|$, completing the proof.
\end{proof}

\section{Hardness of $k$-Hypergraph Pricing Problems}
\label{sec:hardness}\label{sec:hardness_pricing}



In this section, we prove the hardness of the $k$-hypergraph pricing problems, as stated formally in the following theorem. 
Throughout this section, we use $n$ and $m$ to denote the number of items and consumers respectively. 
We remark the difference between $n$ (the number of items in the pricing instance) and $N$ (the size of 3SAT formula).

\begin{theorem}\label{thm:main hardness}
Unless \NP = \ZPP, for any $\epsilon >0$, there is a universal constant $k_0$ (depending on $\epsilon$) such that the $k$-hypergraph pricing problem for any constant $k >k_0$ is $k^{1-\epsilon}$ hard to approximate. 
Assuming Hypothesis~\ref{hypo:ETH}, for any $\epsilon >0$, the $k$-hypergraph pricing problem is hard to approximate to within a factor of $\min(k^{1-\epsilon}, n^{1/2-\epsilon})$.
\end{theorem}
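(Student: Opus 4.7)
The plan is to obtain \Cref{thm:main hardness} by a reduction from the hardness of the $\Delta$-degree-bounded bipartite (semi-)induced matching problem guaranteed by \Cref{thm: hardness of semi-induced matching}, setting $k=\Delta$. Concretely, given a hard instance $G=(U,W,E)$ with maximum degree $\Delta$, I will declare the right side $W$ to be the item set $\iset$ and expand each $u\in U$ into a dyadic family of $O(\Delta)$ ``copy-consumers'' whose budgets are $2^0,2^1,\ldots,2^{\Delta}$, with the number of copies at level $i$ chosen to cancel the geometric price gap (so each level contributes roughly equal revenue if hit). Each consumer's item set is $S_c=N_G(u)\subseteq W$, and therefore $|S_c|\le\Delta=k$, fitting the $k$-hypergraph pricing framework. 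This is the exact reduction sketched in Section~\ref{sec:intro:induce_pricing}: the $2^{O(\Delta)}$ blow-up in $|\cset|$ is unavoidable and comes from the geometric budget scaling that makes ``which budget level is active'' equivalent to ``which neighbor $w$ of $u$ is selected''.

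For completeness, an induced matching $M$ of size $t$ in $G$ yields a pricing that earns $\Omega(\Delta)$ revenue per edge of $M$: for each $uw\in M$, set the price of $w$ to lie in a window matching one of $u$'s budget levels, and use the fact that $M$ is induced to guarantee that no other $u'$ incident to $w$ interferes. For soundness, fix an arbitrary price vector $p$ and order the items by $p$; the crucial observation is that, because of the geometric budget spacing, for each price class the revenue can be attributed to a set of edges which forms a $\sigma$-\emph{semi}-induced matching in $G$ under the order $\sigma$ induced by $p$. This is where the strengthening to \emph{semi}-induced matching in \Cref{thm: hardness of semi-induced matching} becomes essential, and it gives $\mathsf{profit}(p)\le O(\Delta)\cdot\sinduce{G}\le O(\Delta)\cdot |V(G)|/\Delta^{1-\epsilon}$ in the \ni case, matching $\induce{G}\ge|V(G)|/\Delta^{\epsilon}$ in the \yi case and producing a gap of $\Delta^{1-O(\epsilon)}$.

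With the reduction in hand, the two bounds in \Cref{thm:main hardness} follow by choosing parameters appropriately. For the $k^{1-\epsilon}$ statement, take $k=\Delta$ to be a sufficiently large constant (depending on $\epsilon$); then the reduction runs in polynomial time and, using the disperser construction in \Cref{lemma: disperser}, gives $k^{1-\epsilon}$-hardness under $\NP\ne\ZPP$ (the randomness in the disperser is exactly what forces $\ZPP$ rather than $\P$). For the $n^{1/2-\epsilon}$ statement, start from $N$-variable SAT, apply \Cref{thm: hardness of semi-induced matching} with $\Delta=N^{1-O(\epsilon)}$ so that $|\iset|=|V(G)|\approx N\Delta\approx N^{2-O(\epsilon)}$, and observe that the reduction produces a pricing instance of size $|\cset|=2^{O(\Delta)}\poly(N)$ with gap $\Delta\approx n^{1/2-O(\epsilon)}$. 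A polynomial-time $\min(k^{1-\epsilon},n^{1/2-\epsilon})$-approximation would then decide SAT in time $\poly(|\cset|)=2^{O(N^{1-\epsilon})}$, contradicting the ETH.

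The main obstacle will be the soundness accounting: given an arbitrary price vector, I need to charge the revenue to a semi-induced matching in $G$ without losing more than a constant factor per edge. The geometric spacing of budgets is precisely what controls this, since for any fixed item price only $O(1)$ budget levels at each neighbor can be simultaneously ``active'', but verifying this rigorously requires bucketing items by their price class, choosing a representative edge per consumer, and showing that the resulting collection of edges inherits the semi-induced property from the price ordering. A secondary obstacle is ensuring that the extension from \UDP to \SMP goes through; since the completeness edges can be chosen so that each consumer has a single affordable item in its set, the \UDP profit equals the \SMP profit in the completeness case, and a standard rescaling preserves soundness, yielding the hardness for both variants simultaneously.
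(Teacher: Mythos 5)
Your overall architecture is right: reduce from the $\Delta$-degree-bounded bipartite semi-induced matching hardness of \Cref{thm: hardness of semi-induced matching}, set $k=\Delta$, and tune $t$ to get the two regimes ($\NP\neq\ZPP$ for constant $k$, ETH for $k\approx n^{1/2}$). The decision to route through \emph{semi}-induced rather than induced matching in the soundness case is also the correct and necessary one.

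However, the reduction you describe does not match the paper's and, as stated, it fails soundness. You give \emph{every} $u\in U$ a full ladder of consumers at all budget levels $2^0,\dots,2^\Delta$. Consider a star: one right vertex $v$ and $d$ left vertices $u_1,\dots,u_d$, so $\sinduce{G}=\induce{G}=1$. With your construction every $u_j$ has a consumer group tight at whatever price $p(v)$ is chosen, so the total profit is $\Theta(d)$, and the profit-to-$\sinduce{G}$ ratio is $\Theta(d)$ rather than the required $\widetilde{O}(1)$. Your proposed fix---charge each price class to a semi-induced matching---cannot work here because the $d$ ``tight edges'' $(u_j,v)$ all share the endpoint $v$ and are not even a matching. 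The paper's Phase~1 (random coloring) is exactly the missing ingredient: each $u$ is assigned a \emph{single uniformly random} color $i\in[d]$ and receives only $d^{3i}$ consumers at the single budget level $d^{-3i}$; right vertices that are ``highly congested'' (more than $3\ln d/\ln\ln d$ same-color neighbors) are removed (\Cref{lem:sinduce after coloring}). This is what guarantees that, for any item $v$ and any price, only $O(\log d/\log\log d)$ of $v$'s neighbors can be tight, which is what lets the tight-consumer edges be cleaned into a semi-induced matching of size $\Omega(r\log\log d/\log d)$ in the soundness proof of \Cref{thm:semi-ind-to-pricing}. Note also that Section~\ref{sec:intro:induce_pricing}, which you cite as sketching ``the exact reduction,'' explicitly says that two vertices sharing a neighbor $v$ must be converted to \emph{different} numbers of consumers---a property your all-levels construction violates and the paper's random coloring (approximately) enforces.

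A secondary inaccuracy: you attribute the need for $\ZPP$ (rather than $\P$) solely to the randomness of the disperser. In fact the disperser can be built deterministically for constant $d$; the paper's reduction is randomized for at least three independent reasons (the gap amplification/sparsification in \Cref{sec: PCP}, the disperser when $d$ is superconstant, and the Phase~1 coloring in \Cref{thm:semi-ind-to-pricing}), any one of which already forces $\ZPP$. Your parameter choices in the ETH case are directionally correct but need the explicit tuning $t=(\epsilon^2-O(\epsilon))\log N$ so that $d=N^{1-O(\epsilon)}$ and $|\cset|\le 2^{N^{1-\epsilon}}$, which is how the paper actually closes the argument.
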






\paragraph{Proof Overview and Organization} 
For any $k$-hypergraph pricing instance $(\cset, \iset)$, we denote by $\opt(\cset,\iset)$ the optimal possible revenue that can be collected by any price function.  
The key in proving \Cref{thm:main hardness} is the connection between the hardness of the semi-induced matching and the $k$-hypergraph pricing problem, as stated in the following lemma, which will be proved in \Cref{sec:indmatching-to-hypgraph-pricing}.

\begin{lemma}[From Semi-induced Matching to Pricing; Proof in \Cref{sec:indmatching-to-hypgraph-pricing}] \label{thm:semi-ind-to-pricing}
There is a randomized reduction that, given
a bipartite graph $G=(U,V,E)$ with maximum degree $d$, outputs an instance
$(\cset, \iset)$ of the $k$-hypergraph pricing problem such that, with
high probability,
\[
(6\ln{d}/\ln\ln{d})\sinduce{G} \geq \opt(\cset, \iset) \geq \induce{G}
\]
The number of consumers is $|\cset| = |U| d^{O(d)}$ and the number of items is $|\iset| =
|V|$. Moreover, each consumer $c \in \cset$ satisfies $|S_c| =d$. The running time of this reduction is $\poly(|\cset|, |\iset|)$.
\end{lemma}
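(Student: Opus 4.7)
The plan is to give a price-discretization and consumer-enumeration construction in the spirit of the previous Briest--Krysta and Chalermsook--Chuzhoy--Kannan--Khanna reductions, with the crucial twist that the degree bound on $G$ controls the consumer multiplicity, so no graph-product structure is needed.

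\textbf{Construction.} First I would discretize the price space. Set $L = \Theta(\log d / \log\log d)$ geometric levels $P_\ell = \alpha^\ell$ with base $\alpha = (\log d)^{\Theta(1)}$; standard rounding arguments show that restricting the optimal pricing to this level set loses at most a factor of $O(\log d/\log\log d)$, which is exactly the factor appearing in the lemma. The items are $\iset = V$. For each $u \in U$ and each profile $f : N(u) \to \{0,1,\ldots,L-1\}$, create a family of consumers $c_{u,f}$ with item set $S_{c_{u,f}} = N(u)$, padded to size exactly $d$ with private dummy items if $|N(u)| < d$, with budget $B_{u,f}$ and multiplicity $\mu_{u,f}$ chosen so that (i) $c_{u,f}$ is activated precisely when the pricing on $N(u)$ is dominated component-wise by $f$, and (ii) the revenue collected from the $\mu_{u,f}$ copies is $\Theta(\max_{v\in N(u)} P_{f(v)})$ (for the UDP variant), or $\Theta(\sum_{v\in N(u)} P_{f(v)})$ (for the SMP variant). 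The total number of consumers is $|U|\cdot L^d = |U|\cdot d^{O(d)}$, matching the stated bound.

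\textbf{Completeness ($\opt(\cset,\iset) \geq \induce{G}$).} Given an induced matching $M = \{(u_i,v_i)\}_{i=1}^{s}$, I would set $p(v_i) = 1$ for every matched item, $p(v) = 0$ (or a dummy value $\ll 1$) for every unmatched $v\in V$, and let the prices of private dummy items be large so only real items are relevant. The induced-matching property guarantees that for distinct $i\neq j$, $v_j \notin N(u_i)$, so the minimum-price (for UDP) or total-price (for SMP) contribution from the items in $N(u_i)$ that is ``charged'' to the consumer $c_{u_i,f^\star}$ with the profile $f^\star$ matching the designed prices is exactly $1$. Summing over $i$ gives revenue $\geq s = \induce{G}$.

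\textbf{Soundness ($\opt(\cset,\iset) \leq (6\ln d/\ln\ln d)\sinduce{G}$).} Starting from an arbitrary (WLOG discretized) optimal pricing $p^\star$, I would first pigeonhole over price levels to find a single level $\ell^\star$ contributing at least a $1/L$ fraction of the total revenue. For each active consumer $c_{u,f}$ at level $\ell^\star$, identify the ``witness item'' $v(c) \in N(u)$ that realizes its payment and charge it the edge $(u, v(c)) \in E(G)$. The charged edges form a matching $M^\star$ of size at least $\opt/L \geq \opt\cdot \ln\ln d/(6\ln d)$. To argue $M^\star$ is \emph{semi-induced}, I would use the ordering $\sigma$ on $V(G)$ induced by the discretized prices (items with larger prices first), and the fact that all charged consumers lie at the same level $\ell^\star$: any edge $(u, v')$ in $G$ crossing two matched edges $(u_1,v_1), (u_2,v_2)\in M^\star$ would force one of the consumers $c_{u_i,f}$ to prefer $v'$ over $v_i$ (since $p^\star(v') \leq p^\star(v_i)$ under $\sigma$), contradicting the choice of witness. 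This yields a $\sigma$-semi-induced matching of the claimed size, so $\sinduce{G} \geq \opt\cdot \ln\ln d/(6\ln d)$.

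\textbf{Main obstacle.} The main technical challenge is getting the budgets and multiplicities $(B_{u,f},\mu_{u,f})$ right so that both directions are tight simultaneously: the completeness charging needs consumers activated at \emph{exact} profiles to avoid inflating revenue, while the soundness charging needs that every active consumer at a given level contributes a \emph{bounded} revenue to its witness edge. Getting the $\sigma$-semi-induced structure (rather than just a matching) also requires that the witness assignment be consistent with a single global price ordering, which is why the extra $\log d/\log\log d$ factor, rather than a mere $\log d$, is achievable; sharper multi-level bucketing based on the geometric base $\alpha = (\log d)^{\Theta(1)}$ is what delivers the precise $6\ln d/\ln\ln d$ constant.
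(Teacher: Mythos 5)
Your construction is a fundamentally different one from the paper's, and unfortunately it does not go through as stated. The paper's reduction does \emph{not} enumerate price profiles on $N(u)$: instead it randomly colors each $u\in U$ with a color $i\in\{1,\dots,d\}$, then for each $u$ of color $i$ creates $d^{3i}$ \emph{identical} consumers with item set $N(u)$ and budget $d^{-3i}$. The geometric scaling of budgets \emph{across color classes} plays the role you are trying to extract from price levels, and the random coloring buys the crucial combinatorial property that (with constant probability) no $v\in V$ has more than $3\ln d/\ln\ln d$ neighbors of the same color; this congestion bound is exactly what controls the loss when pruning a candidate edge set down to a genuine $\sigma$-semi-induced matching, and the geometric budget separation across colors rules out cross-color conflicts.

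Your version has several concrete gaps. First, a consumer whose activation depends on ``component-wise domination by a profile $f$'' on $N(u)$ is not something either \UDP or \SMP can express: budgets give a single scalar threshold (in \UDP, on the minimum price seen; in \SMP, on the sum), so the profile gadget is not a valid pricing instance and the multiplicity/budget pairs you flag as the ``main obstacle'' cannot in fact be chosen to realize it. Second, in the soundness direction you assert that the witness-charged edges form a matching $M^\star$ without justification; two different $u$'s at the same price level can easily witness the same item $v$, and without an analogue of the paper's per-color congestion bound there is no control on how many edges collide at a single right vertex --- you could lose almost everything. Third, the claimed $\sigma$-semi-induced property after pigeonholing to one level does not hold: within a single level, items are essentially unordered, so the ordering $\sigma$ provides no leverage, whereas the paper orders by color (and uses opposite orders for \UDP and \SMP) precisely so that the tightness of a consumer in class $j$ is incompatible with seeing a cheaper item priced for class $i<j$. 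To repair your approach you would at minimum need (a) a realizable encoding of activation (which the paper achieves by making all $d^{3i}$ consumers of a fixed $u$ identical), (b) a mechanism to bound collisions in the witness charging (the paper's random coloring plus balls-and-bins), and (c) a global ordering for which semi-inducedness can be certified across the whole matching, not just within one level.
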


We remark that using the upper bound for $\opt(\cset,\iset)$ in terms of $\sinduce{G}$ instead of $\induce{G}$ seems to be necessary. 
That is, getting a similar reduction with a bound $\opt(\cset, \iset) = \tilde O(\induce{G})$ may not be possible.  

Combining the above reduction in \Cref{thm:semi-ind-to-pricing} with the hardness of the induced and semi-induced matching problems in \Cref{thm: hardness of semi-induced matching} (\Cref{sec: hardness of induced matching}) leads to the following intermediate hardness result, which in turn leads to all the hardness results stated in \Cref{thm:main hardness}.

\begin{lemma}[Intermediate Hardness; Proof in \Cref{sec:proof of thm:main:hardness}]
\label{thm:main:hardness}
Let $\epsilon>0$ be any constant. There is a universal constant $d_0 = d_0(\epsilon)$ such that the following holds.
For any function $d(\cdot)$ such that $d_0 \leq d(N) \leq
N^{1-\epsilon}$, there is a randomized algorithm that transforms an
$N$-variable 3SAT formula $\phi$ to a $k$-hypergraph pricing instance $(\cset, \iset)$ such that:

\begin{itemize}
    \item For each consumer $c$, $|S_c| = d(N)$.
   \item The algorithm runs in time $\poly(|\cset|, |\iset|)$.

   \item $|\cset| \leq d^{O(d)} N^{1+\epsilon}$ and $|\iset| \leq N^{1+\epsilon} d^{1+\epsilon}$.

   \item There is a value $Z$ such that (\yi) if $\phi$ is satisfiable, then $\opt(\cset, \iset) \geq Z$; and (\ni) if $\phi$ is not satisfiable, then $\opt(\cset, \iset) \leq Z/d^{1-\epsilon}$.
\end{itemize}
\end{lemma}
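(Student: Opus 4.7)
The plan is to prove Lemma~\ref{thm:main:hardness} by composing two results already established in the paper: the subexponential-time hardness of semi-induced matching on degree-bounded bipartite graphs (Theorem~\ref{thm: hardness of semi-induced matching}) and the reduction from semi-induced matching to $k$-hypergraph pricing (Lemma~\ref{thm:semi-ind-to-pricing}). The composition is essentially a pipeline: 3SAT $\to$ bipartite semi-induced matching $\to$ $k$-hypergraph pricing. The main work is to verify that all parameters line up correctly and that the gap survives.

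The first step is to set $\epsilon' = \epsilon/C$ for a suitably large constant $C$ (to be determined by how the gap shrinks at each stage), and apply Theorem~\ref{thm: hardness of semi-induced matching} with target degree $\Delta = d(N)$ and parameter $\epsilon'$. This requires choosing the internal parameter $t$ such that $\Delta = 2^{t(1/\epsilon'^2 + O(1/\epsilon'))}$. Since we are assuming $d(N) \leq N^{1-\epsilon}$, we get $t = O(\epsilon'^2 \log d) \leq O(\epsilon'^2 \log N)$, so $t \leq 5\epsilon'^2 \log N$ and the randomized construction of dispersers succeeds with high probability. This produces a bipartite graph $G$ of maximum degree $d = d(N)$ with $|V(G)| \leq N^{1+\epsilon'} d^{1+\epsilon'}$, satisfying $\induce{G} \geq |V(G)|/d^{\epsilon'}$ in the \yi case and $\sinduce{G} \leq |V(G)|/d^{1-\epsilon'}$ in the \ni case.

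The second step is to apply Lemma~\ref{thm:semi-ind-to-pricing} to $G$, producing a pricing instance $(\cset, \iset)$ with $|\iset| = |V(G)| \leq N^{1+\epsilon'} d^{1+\epsilon'}$ (which is bounded by $N^{1+\epsilon} d^{1+\epsilon}$ as required), and $|\cset| = |U| \cdot d^{O(d)} \leq d^{O(d)} N^{1+\epsilon}$ after absorbing the polynomial $d^{1+\epsilon'}$ factor into $d^{O(d)}$ (valid provided $d \geq d_0$ for some constant). Each consumer has $|S_c| = d(N)$ by construction, and the running time is polynomial in the output size.

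The third step is to set the threshold $Z := \induce{G}$ from the \yi bound, so $Z \geq |V(G)|/d^{\epsilon'}$. Then Lemma~\ref{thm:semi-ind-to-pricing} directly gives $\opt(\cset,\iset) \geq \induce{G} = Z$ in the \yi case, and in the \ni case
\[
\opt(\cset,\iset) \leq \frac{6\ln d}{\ln\ln d}\,\sinduce{G} \leq \frac{6\ln d}{\ln\ln d}\cdot\frac{|V(G)|}{d^{1-\epsilon'}} \leq \frac{6\ln d}{\ln\ln d}\cdot\frac{Z \cdot d^{\epsilon'}}{d^{1-\epsilon'}} = \frac{(6\ln d/\ln\ln d)\,Z}{d^{1-2\epsilon'}}.
\]
For $d$ larger than some constant $d_0(\epsilon)$, the factor $6\ln d/\ln\ln d$ is absorbed into $d^{\epsilon'}$, yielding the required $Z/d^{1-3\epsilon'}$ bound; choosing $C = 3$ (so $\epsilon' = \epsilon/3$) gives exactly the claimed gap of $d^{1-\epsilon}$.

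The only mild obstacle is bookkeeping: one must verify that the gap $d^{1-O(\epsilon')}$ produced by the composition can be converted to $d^{1-\epsilon}$ by shrinking $\epsilon'$ accordingly, and that the absorption of the logarithmic factor $6\ln d/\ln\ln d$ into $d^{\epsilon'}$ is valid once $d_0$ is chosen large enough depending on $\epsilon$. All other verifications (size, running time, $|S_c|=d(N)$) are immediate from the two invoked results.
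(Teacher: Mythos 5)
Your proposal follows essentially the same route as the paper: it composes Theorem~\ref{thm: hardness of semi-induced matching} with Lemma~\ref{thm:semi-ind-to-pricing}, sets $\Delta = d(N)$, and checks that the sizes, the $t \le 5\epsilon'^2\log N$ condition, and the gap all line up after rescaling $\epsilon' = \epsilon/3$. (Your version is, if anything, more careful about the bookkeeping than the paper's terse two-paragraph argument.)

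One small but real slip: you set $Z := \induce{G}$ ``from the \yi bound,'' but this is not a quantity the reduction can output, and more importantly, the inequality $|V(G)| \le Z\, d^{\epsilon'}$ that you invoke in the \ni estimate holds only in the \yi case. The threshold should instead be the deterministic quantity $Z := |V(G)|/d^{\epsilon'}$; then $\opt \ge \induce{G} \ge Z$ in \yi, and in \ni you get $\opt \le (6\ln d/\ln\ln d)\cdot |V(G)|/d^{1-\epsilon'} = (6\ln d/\ln\ln d)\cdot Z/d^{1-2\epsilon'}$ directly, with no reference to $\induce{G}$. The rest of the absorption argument then goes through as you wrote it.
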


In the next section, we prove the reduction in \Cref{thm:semi-ind-to-pricing}. We will prove the above intermediate hardness result (\Cref{thm:main:hardness}) in \Cref{sec:proof of thm:main:hardness} and then the main hardness results of this section (\Cref{thm:main hardness}) in \Cref{sec:thm:main hardness}. 


\subsection{From Semi-Induced Matching to Pricing Problems (Proof of \Cref{thm:semi-ind-to-pricing})}
\label{sec:indmatching-to-hypgraph-pricing}

Here we prove \Cref{thm:semi-ind-to-pricing} by showing a reduction from the semi-induced matching problem on a $d$-degree bounded bipartite graph to the {\em $k$-hypergraph pricing} problem. This reduction is randomized and is guaranteed to be successful with a constant probability. 

\subsubsection{The Reduction}

Let $G=(U,V,E)$ be a bipartite graph with maximum degree $d$. Assume without loss of generality the following, which will be important in our analysis. 
\begin{assumption}\label{assume:UleV}
$|U|\leq |V|$. 
\end{assumption}
Notice that we always have $\sinduce{G} \geq \induce{G} \geq  |U|/d$.
For each vertex $u$ of $G$, we use $N_G(u)$ to denote the set of
neighbors of $u$ in $G$.
If the choice of a graph $G$ is clear from the context, then we will omit the
subscript $G$.
Our reduction consists of two phases.

\paragraph{Phase 1: Coloring} We color each vertex $u\in U$ of $G$ by uniformly and independently choosing a random color from $\{1,2,\ldots,d\}$. We denote by $U_i \subseteq U$, for each $i=1,2,\ldots,d$, the set of left vertices
that are assigned a color $i$. We say that a right vertex $v \in V$ is {\em highly congested} if there is some $i \in [d]$ such that $|N_G(v) \cap U_i| \geq  3\ln{d}/\ln\ln{d}$; i.e., $v$ has at least $ 3\ln{d}/\ln\ln{d}$ neighbors of the same color. 
Let $V_{high} \subseteq V$ be a subset of all vertices that are highly congested and $V' = V \setminus V_{high}$. Thus, $V'$ is the set of vertices in $V$ with highly congested vertices thrown away.\danupon{I removed what was before the first paragraph almost completely. See the commented text below.}
Let $G'$ be a subgraph of $G$ induced by $(U,V',E)$. The following property is what we need from this phase in the analysis in \Cref{sec:hardness analysis to pricing}.\danupon{I changed the statement below a bit. Please check.}

\begin{lemma} \label{lem:sinduce after coloring}
With probability at least $1/2$, 
\[\induce{G'} \geq (1-2/d)\induce{G} ~~~\mbox{and}~~~ \sinduce{G'} \geq (1-2/d)\sinduce{G}.\]
In particular, for $d\geq 4$, $\induce{G'} \geq \induce{G}/2$ and $\sinduce{G'} \geq \sinduce{G}/2$ with probability at least 1/2. 
\end{lemma}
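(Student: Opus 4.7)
The plan is to show that the set $V_{high}$ deletes only a tiny fraction of the edges of any fixed optimum (semi-)induced matching; since the leftover edges still form a (semi-)induced matching in $G'$, both inequalities follow. The central quantity to control is $\Pr[v \in V_{high}]$ for a single vertex $v \in V$.

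First I would bound $\Pr[v \in V_{high}]$ for each $v\in V$. Since $v$ has at most $d$ neighbors and the colors assigned to them are i.i.d.\ uniform in $[d]$, a standard balls-in-bins union bound --- using $\binom{d}{k}(1/d)^k \le (e/k)^k$ for one fixed color and then union-bounding over the $d$ colors --- shows that, with $k = 3\ln d / \ln\ln d$,
\begin{equation*}
\Pr[v \in V_{high}] \;\le\; d^{-3/2}
\end{equation*}
for every $d$ larger than some absolute constant. For smaller $d$ the threshold $k$ already exceeds $d$, so $V_{high}=\emptyset$ deterministically and the lemma is vacuous.

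Next I would fix a maximum induced matching $M^*$ of $G$ and a maximum $\sigma$-semi-induced matching $M^*_s$ of $G$ (for the ordering $\sigma$ attaining $\sinduce{G}$). Their right-endpoint sets consist of $\induce{G}$ (resp.\ $\sinduce{G}$) distinct vertices of $V$, so by linearity of expectation the expected number of right-endpoints of $M^*$ (resp.\ $M^*_s$) lying in $V_{high}$ is at most $\induce{G}/d^{3/2}$ (resp.\ $\sinduce{G}/d^{3/2}$). Markov's inequality and a union bound then yield, with probability at least $1 - 1/\sqrt d \ge 1/2$ for $d\ge 4$, that both
\begin{equation*}
|V_{high}\cap V(M^*)| \,\le\, (2/d)\,\induce{G} \quad\text{and}\quad |V_{high}\cap V(M^*_s)| \,\le\, (2/d)\,\sinduce{G}
\end{equation*}
hold simultaneously. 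On this event, deleting from $M^*$ (resp.\ $M^*_s$) every edge whose right-endpoint lies in $V_{high}$ leaves at least $(1-2/d)\induce{G}$ (resp.\ $(1-2/d)\sinduce{G}$) edges. Since any subset of an induced matching is induced, and since the restriction of $\sigma$ to $V(G')$ remains a valid ordering, these leftover edges form an induced (resp.\ $\sigma|_{V(G')}$-semi-induced) matching in $G'$, giving both desired inequalities.

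The main technical subtlety is the tail bound in the first step: the threshold $k = 3\ln d/\ln\ln d$ is precisely calibrated so that, after union-bounding over the $d$ color classes, the per-vertex failure probability is $o(1/d)$, which is exactly what Markov's inequality in the second step requires. Everything else is essentially routine.
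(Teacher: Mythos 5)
Your proof follows the same overall strategy as the paper's: bound $\Pr[v\in V_{high}]$ via balls-in-bins, then apply Markov's inequality to the number of right-endpoints of an optimal (semi-)induced matching falling in $V_{high}$, and observe the surviving edges still form a (semi-)induced matching in $G'$. The one genuine difference is that you sharpen the per-vertex bound to $\Pr[v\in V_{high}]\le d^{-3/2}$, whereas the paper uses only the coarser $\Pr[v\in V_{high}]\le 1/d$. This is not cosmetic: with the coarser bound, Markov gives each of the two events $\induce{G'}\ge(1-2/d)\induce{G}$ and $\sinduce{G'}\ge(1-2/d)\sinduce{G}$ with probability $\ge 1/2$ \emph{separately}, which does not imply that both hold \emph{simultaneously} with probability $\ge 1/2$ as the lemma asserts; the paper just says the second inequality ``uses exactly the same argument'' and never addresses the union bound. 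Your tighter bound makes each failure event have probability $\le 1/(2\sqrt d)$, so the union bound over the two fixed optimal matchings gives total failure probability $\le 1/\sqrt d\le 1/2$ for $d\ge 4$, cleanly yielding the conjunction. So you have actually patched a small gap in the paper's argument while staying within the same framework. Two minor points to keep in mind: the sharpened bound $(e/k)^k\cdot d\le d^{-3/2}$ with $k=3\ln d/\ln\ln d$ only kicks in for $d$ above some constant (you flag this, which is fine since the lemma is only interesting asymptotically), and for very small $d$ where $k>d$ we have $V_{high}=\emptyset$ so the statement is trivial, exactly as you note.
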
 
\begin{proof}
First, consider any vertex $v\in V$. We claim that vertex $v$ is highly congested with probability at most $1/d$. To see this, we map our coloring process of neighbors of $v$ to the {\em balls and bins} problem, where we think of each $u\in N_G(v)$ as a ball $b_u$ and a color $c$ as a bin $B_c$. Coloring a vertex $u\in N_G(v)$ with color $c$ corresponds to putting a ball $v$ to a bin $c$. By the well-known result on this problem (e.g., see \cite[Section 5.2]{MitzenmacherUpfalBook} and \cite{Gonnet81}), we know that with probability at least $1-1/d$, all bins have
at most $3 \ln{d}/\ln\ln{d}$ balls; i.e., $|N_G(v) \cap U_i| \leq  3\ln{d}/\ln\ln{d}$. Thus, $v$ is highly congested with probability at most $1/d$ as claimed.

Consider a maximum induced matching $M=(A, B, F)$ of $G$, where $A\subseteq U$, $B\subseteq V$ and $F\subseteq E$; i.e., $|A|=|B|=|F|= \induce{G}$. Let $M'=(A', B', F')$ be the subgraph of $M$ obtained by removing vertices in $V_{high}$, i.e., $A'=A$ and $B'=B\setminus V_{high}$. Note that edges in $M'$ give an induced matching of $G'$ of size $|B'|$, i.e.,
\[
\induce{G'}\geq |B'|. 
\]
Note that since every vertex $v\in V$ is in $V_{high}$ with probability at most $1/d$, we have $\mathbb{E}[|B\cap V_{high}|]\leq |B|/d$, and thus we can use Markov's inequality to get
\[ \mathbb{P}[|B\cap V_{high}|\geq 2|B|/d]\leq 1/2.
\]
Thus, 
\[ \mathbb{P}[|B'|\leq (1-2/d)|B|]\leq 1/2.
\]
It follows that $\induce{G'}\geq |B'| \geq (1-2/d)|B|= (1-2/d)\induce{G}$ with probability at least $1/2$. This proves the first inequality in the statement.
Proving the second inequality uses exactly the same argument except that we let $M$ be a maximum semi-induced matching and note that edges in $M'$ give a semi-induced matching of $G'$ of size $|B'|$. This completes the proof of \Cref{lem:sinduce after coloring}. 
\end{proof}

\paragraph{Phase 2: Finishing}
Now, we have a coloring of left vertices (in $U$) of $G$ with desired properties.
We will construct an instance of the $k$-hypergraph pricing problem in both \UDP and \SMP models as follows.
For each vertex $v \in V'$, we create an item $I(v)$.
For each vertex $u \in U_i$, we create $d^{3i}$ consumers;
we denote this set of consumers by $\cset(u)$.
We define the budget of each consumer $c\in\cset(u)$ where $u\in U_i$
to be $B_c=d^{-3i}$ and define $S_c = \set{I(v):v \in N_G(u)}$, so it is immediate that $|S_c| \leq d$. 
To recap the parameters of our construction, we have
\begin{itemize}
\item The set of items $\iset = \set{I(v) : v\in V'}$.
\item The set of consumers $\cset = \bigcup_{u\in U}\cset(u)$,
      where $|\cset(u)|=d^{3i}$ for $u \in U_i$.
\item A budget $B_u = \displaystyle\frac{1}{d^{3i}}$
      for each consumer $u\in U_i$.
\item A set of desired items $S_c = \set{I(v): v\in N_G(u)}$
      for each customer $c\in \cset$. 
      (Note that $|S_c| \leq d$.)\bundit{Just added: Aug 12, 2013}
\end{itemize}

This completes the description of our reduction.
\bundit{Just added: Aug 12, 2013}
Note that, in the $k$-hypergraph formulation, we have $\iset$ as a set of vertices, $\cset$ as a set of hyperedges and $k=d$ (since $|S_c|\leq d$ for all $c\in\cset$).


%

\subsubsection{Analysis}\label{sec:hardness analysis to pricing}

\paragraph{Completeness:} 
We will show that the profit we can collect is at least $\induce{G'} \geq \induce{G}/2$ (by \Cref{lem:sinduce after coloring}). 
Let $\mset$ be any induced matching in the graph $G'$.
For each item $I(v)$ with $uv \in \mset$ and $u \in U_i$, we set its
price to $p(I(v)) = 1/d^{3i}$.
For all other items, we set their prices to $\infty$ for \UDP and $0$ for \SMP.
Notice that, for each $u \in U_i$ that belongs to $\mset$, any consumer $c \in \cset(u)$ only sees one item of finite
price (that is, $1/d^{3i}$).
So, for \UDP the consumer $c$ must buy the item $I(v)$ and thus contributes
$1/d^{3i}$ to the total profit. Similarly, for \SMP, the consumer $c$ can afford to buy the whole set $S_c$ of total cost $1/d^{3i}$.
Since $|\cset(u)| = d^{3i}$, the profit contributed from each set of
consumers $\cset(u)$ is $1$.
This implies that the total profit we obtain from this price function is
$|\mset|$.

\paragraph{Soundness:} 
Now, suppose that an {\em optimal} price function $p$ yields a profit of
$r$ (for either \UDP or \SMP).
We will show that $\sinduce{G} \geq r \log \log d/ (12 \log d)$. 
The proof has two steps. In the first step, we identify a collection of ``tight consumers'' which roughly correspond to those consumers who pay sufficiently large fraction of their budgets. Then we construct a large semi-induced matching from these tight consumers. 
We say that a consumer $c\in\cset$ is {\em tight} if she spends at
least $1/4d$ fraction of her budget for her desired item. A vertex $u \in U$ is tight if its set of consumers $\cset(u)$ contains a tight consumer. Let $\cset'$ be the set of tight consumers.
\begin{claim}\label{thm:tight consumer}
The profit made only by tight consumers is at
least $r/2$. 
\end{claim}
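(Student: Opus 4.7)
My plan is to argue by contradiction via a simple global scaling of all prices. Assume toward a contradiction that the profit from non-tight consumers strictly exceeds $r/2$, i.e., $r_{\text{non-tight}} > r/2$. Define a new price function $p'$ by $p'(I(v)) = 2d \cdot p(I(v))$ for every item $I(v)$. The goal is to show that $\prof(p') > r$, contradicting the optimality of $p$.

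The key observation is that every non-tight consumer $c$ (with $\pay_c(p) < B_c/(4d)$) remains a buyer under $p'$ and her payment scales by exactly $2d$. I would verify this separately in the two models. For UDP, scaling preserves the relative order of item prices, so the cheapest item $v^\star$ in $S_c$ under $p$ is still cheapest under $p'$; since $p(v^\star) = \pay_c(p) < B_c/(4d)$, we get $p'(v^\star) < B_c/2 \leq B_c$, so $c$ still buys $v^\star$ and now pays $2d \cdot p(v^\star) = 2d \cdot \pay_c(p)$. For SMP, $\sum_{v\in S_c} p'(v) = 2d \sum_{v\in S_c} p(v) = 2d \cdot \pay_c(p) < B_c/2 \leq B_c$, so again $c$ affords the bundle and pays $2d \cdot \pay_c(p)$. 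Tight consumers may or may not remain buyers under $p'$, but we only need a lower bound, so we simply discard their contribution.

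Counting only the former non-tight consumers, I obtain
\[
\prof_{H,w}(p') \;\geq\; \sum_{c \text{ non-tight}} 2d \cdot \pay_c(p) \;=\; 2d \cdot r_{\text{non-tight}} \;>\; 2d \cdot (r/2) \;=\; dr \;\geq\; r,
\]
where the final inequality uses $d \geq 1$ (and in fact $d\geq 2$, which gives strictness even at the boundary). This strictly beats the optimum $r$, which is the desired contradiction. Hence $r_{\text{non-tight}} \leq r/2$, which is exactly the statement that the tight consumers alone contribute at least $r/2$.

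The only place requiring care is verifying the behavior under the scaling in both buying models simultaneously; I do not expect a genuine obstacle here, since the scaling is uniform across items and preserves both the argmin used by UDP and the bundle sum used by SMP. No properties of the graph $G$ or of the coloring phase are needed for this claim—it is a structural property of the pricing instance alone.
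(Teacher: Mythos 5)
Your proof is correct and takes a genuinely different route from the paper's. The paper bounds the non-tight revenue by a direct count on the construction: for each $u\in U_i$ the group $\cset(u)$ consists of $d^{3i}$ consumers with budget $d^{-3i}$, so its non-tight members together pay less than $1/(4d)$, giving at most $|U|/(4d)$ overall; it then invokes the completeness-side chain $r \geq \induce{G'} \geq \induce{G}/2 \geq |U|/(2d)$ (relying on \Cref{assume:UleV} and \Cref{lem:sinduce after coloring}) to conclude the non-tight revenue is at most $r/2$. Your argument uses none of this structure---only the optimality of $p$ and the definition of tightness---via a uniform scaling: if non-tight consumers paid more than $r/2$ under $p$, multiplying all prices by $2d$ would keep each of them within budget in both \UDP and \SMP and multiply her payment by $2d$, yielding revenue strictly exceeding $r$, which is impossible. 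The paper's route is a one-line arithmetic check but is coupled to the specific budget/multiplicity pattern of the reduction and to the completeness analysis; your route is self-contained within the soundness argument and instance-agnostic, and would transfer verbatim to any optimal pricing with this notion of tightness---a standard ``optimality exploits scale-invariance'' argument in envy-free pricing. One small imprecision in the write-up: a non-tight consumer may pay $0$ under $p$ because she cannot afford anything, so she is not literally a ``buyer''; this is harmless since she contributes $0$ to both sides of the inequality, but it is cleaner to restrict the scaling claim to non-tight consumers with positive payment.
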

\begin{proof}
Observe that {\em profitable non-tight} consumers contribute at most $|U|/4d$ to the profit. 
Since we prove in the last section that $r \geq \induce{G'} \geq \induce{G}/2 \geq |U|/2d$ (from~\Cref{assume:UleV} and~\Cref{lem:sinduce after coloring}), the revenue made from non-tight consumers is at most $r/2$.  
\end{proof}

Now, we construct from the set of tight consumers $\cset'$,
a $\sigma$-semi-induced matching in $G$ for some total order $\sigma$.
We define $\sigma$ in such a way that vertices in $U$ is ordered by their colors (increasingly for the case of \UDP and decreasingly for the case of \SMP).
%

\begin{definition}
Let $\sigma$ be a total order of vertices such that vertices in $U_i$ always
precede vertices in $U_j$ if $i <j$ for \UDP (and $i >j$ for \SMP). (Recall that $U_i$ is the set of vertices in $U$ of color $i$.)
\end{definition}

Let $U'=\{u\in U: \cset(u)\cap\cset'\neq\emptyset\}$; i.e., it is the set of left vertices whose $\cset(u)$ contains a tight consumer. 
Note that $|U'| \geq r/2$ by \Cref{thm:tight consumer}.
For {\sf UDP}, we define a set of edges $\mset$ to be such that
an edge $uv$ is in $\mset$ if $u \in U'$ and
a tight consumer in $\cset(u)$ buys an item $I(v)$. 
For {\sf SMP}, $\mset$ is defined to contain $uv$ such that $u \in U'$ and $I(v)$ is the most expensive item for a consumer in $\cset(u)$.
Note that 
\[|\mset|\geq |U'|\geq r/2\,.\]
This collection $\mset$ may not be a $\sigma$-semi-induced matching
and may not even be a matching.
So, we have to remove some edges from $\mset$ so that the resulting
set is a $\sigma$-semi-induced matching.
To be precise, we will next extract from $\mset$ a set of edges $\mset' \subseteq \mset$ that is
a $\sigma$-semi-induced matching with cardinality
$|\mset'| \geq r\log \log d/6 \log d$, implying that $\sinduce{G} \geq r\log \log d/6 \log d$. 



Our intention is to construct $\mset'$ by adding to it one edge from $\mset$ at a time, as long as $\mset'$ is a $\sigma$-semi-induced matching (if adding an edge makes $\mset'$ not a $\sigma$-semi-induced matching, then we will not add it). The order of edges we pick from $\mset$ depends {\em reversely} on $\sigma$, and we will also do this process separately for different colors of left vertices as follows. 
We partition $\mset$ into
$\mset_1\cup \mset_2\cup\ldots\cup \mset_d$,
where 
$$\mset_i=\{uv\in\mset:u\in U_i\};$$ 
i.e.,
$\mset_i$ contains edges $uv$ whose end-vertex $u$ is colored $i$.
Then we construct from each set $\mset_i$ a set of edges $\mset'_i$ as follows.
We process each edge $uv\in \mset_i$ in the {\em reverse} order of $\sigma$; i.e.,  an edge $uv$ is processed before an edge $u'v'$ if $\sigma(u)>\sigma(u')$.
%
%
%
For each edge $uv \in \mset_i$, we remove from
$\mset_i$ all edges $u'v'$ such that $u'$ is adjacent
to $v$.
Then we add $uv$ to the set $\mset'_i$ and proceed to the next edge
remaining in $\mset_i$.
Notice that, each time we add an edge $uv$ to $\mset'_i$, we remove
at most $3 \log d/ \log \log d$ edges from $\mset_i$ because its
end-vertex is not highly congested by the construction of
$\mset_i$.
So, $|\mset'_i| \geq |\mset_i|\log \log d/3 \log d$.
Moreover, it can be seen by the construction that $\mset'_i$ is
a $\sigma$-semi-induced matching.
Finally, define $\mset' = \bigcup_{i=1}^d \mset'_i$.
Then we have that
\[
|\mset'|
         \geq \frac{|\mset|\log \log d}{3 \log d}\,.
\]
We now claim that $\mset'$ is a $\sigma$-semi-induced matching.
Suppose not.
Then there is a pair of edges $uv, u'v' \in \mset'$ such that
$\sigma(u) < \sigma(u')$ and $uv' \in E(G)$. We need two cases to distinguish between the two models of \SMP and \UDP. 

\begin{itemize}
  \item For \UDP, by the construction, the two vertices $u$ and $u'$ must belong to
different color class $U_i$ and $U_j$, respectively, where $i< j$.
Since $uv' \in E(G)$, consumers in $\cset(u)$ are interested in item
$I(v')$, whose prices are $1/d^{3j}$ (which is strictly less than $1/2d^{i+1}$) because $u'$ is a tight index.
But, then $u$ would have never been tight, a contradiction. 

\item For \SMP, the two vertices $u$ and $u'$ belong to $U_i$ and $U_j$ respectively where $i >j$. Since $uv' \in E(G)$, consumers in $\cset(u)$ are interested in item
$I(v')$, whose prices are $1/2d^{3j+1} > 1/d^{3i}$. Then consumers in $\cset(u)$ would not have enough budget to buy their item sets, contradicting the fact that they are tight. 
\end{itemize}

Thus, we have 
\[ \sinduce{G'}\geq  |\mset'|
         \geq \frac{|\mset|\log \log d}{3 \log d}\geq \frac{r\log \log d}{6 \log d}\]
as desired.


\subsection{Intermediate Hardness (Proof of \Cref{thm:main:hardness})}\label{sec:proof of thm:main:hardness}

We prove \Cref{thm:main:hardness} using \Cref{thm: hardness of semi-induced matching} and \Cref{thm:semi-ind-to-pricing}. We restate \Cref{thm: hardness of semi-induced matching} here: 
%

\inducedmatching*

%



To see how to prove \Cref{thm:main:hardness}  by
combining \Cref{thm: hardness of semi-induced matching} with \Cref{thm:semi-ind-to-pricing}, we start from an $N$-bit 3SAT formula $\phi$ and invoke
\Cref{thm: hardness of semi-induced matching}
to obtain a $d$-degree bounded bipartite graph $G=(U,V,E)$.
Then we apply a reduction as in Theorem~\Cref{thm:semi-ind-to-pricing}
and obtain an instance $(\cset, \iset)$ of \UDP or \SMP with
$|\cset| = |V(G)| d^{O(d)} = N^{1+O(\epsilon)}d^{O(d)}$ and
$|\iset| = N^{1+O(\epsilon)}d^{1+O(\epsilon)}$.
It is immediate that the gap between \yi and \ni is $d^{1-2\epsilon}$
for all values of $d$.

Notice that our reduction gives (nearly) tight hardness results for all values of $d$.
The complexity assumptions that we make are different for
different values of $d$.
(Note that $d=2^{t(1/\epsilon^2 +O(1/\epsilon))}$, so our parameter is indeed $t$ and $\epsilon$.)
For example, if $d$ is constant, then our complexity assumption is
$\NP\neq\ZPP$, and if $d=\polylog{N}$, then our
assumption is $\NP\nsubseteq\ZPTIME(2^{\polylog{N}})$.

To see this, consider the size
(which also implies the running time) of our reduction.
Our instance for \UDP (resp., \SMP) has
the number of consumers $m=|\cset| = N^{1+O(\epsilon)}d^{O(d)} \leq N^{1+O(\epsilon)} 2^{d^{1+\epsilon}}$ and
the number of items $n=|\iset| = N^{1+O(\epsilon)}d^{1+O(\epsilon)}$.
Suppose we have an algorithm with a running time of $\poly(n,m)$.
Then we also have a randomized algorithm with a running time of
$\poly(N^{1+O(\epsilon)},2^{d^{1+O(\epsilon)}})$ that solves SAT exactly.
%



\subsection{Main Hardness Results (Proof of \Cref{thm:main hardness})}\label{sec:thm:main hardness}

From the above discussion, we can see that the complexity assumption that we have
to make is $\NP\nsubseteq\ZPTIME(\poly(N,2^{O(d)}))$.
Thus, if $t$ is constant, then $d$ is also a constant, i.e., $d= 2^{O(1/\epsilon^2)}$, and the corresponding complexity assumption is, indeed,
$\NP\neq\ZPP$. 
In this case, we get the hardness of the $k$-hypergraph pricing problems when $k$ is constant (note that $k=d$). 
To be more precise, we have proved that: For any $\epsilon >0$, there is a constant $k_0$ that depends on $\epsilon$ such that the $k$-hypergraph pricing problem is $k^{1-\epsilon}$ hard for any $k \geq k_0$.  

We note that our reduction also implies the hardness of $\Omega(\log^{1-\epsilon} m)$, as proved by Chalermsook et al.~\cite{ChalermsookCKK12}. 
In this case, if the value of $k$ is chosen to be $\polylog{N}$, then the complexity
assumption is $\NP\nsubseteq\ZPTIME(2^{\polylog{N}})$. 
In particular, we can plug in $k= \log^{1/\epsilon} N$, so we have $m = N^{\log^{1/\epsilon} N}$ and the hardness factor of $k^{1-\epsilon} = \log^{1/\epsilon -1} N = \log^{1-O(\epsilon)} m$.

Now, let us incorporate the ETH (\Cref{hypo:ETH}) with our hardness result.
So, we assume that there is no exponential-time (randomized) algorithm
that solves 3SAT.
We choose $t = (\epsilon^2- O(\epsilon)) \log N$, so we have $k = 2^{t(1/\epsilon^2 - O(1/\epsilon)} = 2^{(1-O(\epsilon)) \log N} = N^{1-O(\epsilon)}$. Moreover, $|\iset| = N^{2+ O(\epsilon)}$, and the size of the resulting pricing instance (as well as the running time) is dominated by $k^{O(k)} \leq 2^{N^{1-\epsilon}}$, which is fine (still subexponential time) because we assume ETH.  
Writing $k$ in terms of the number of items, we have that $k=N^{1 - O(\epsilon)}=n^{1/2 - O(\epsilon)}$.
Thus, our $k^{1-\epsilon}$-hardness result rules out
a polynomial-time algorithm that gives
$n^{1/2-\epsilon}$-approximation for \UDP (resp., \SMP),
assuming ETH.

\subsection{Subexponential-Time Approximation Hardness for the $k$-Hypergraph Pricing Problem}

In this section, we present the approximability/running time trade-off for the pricing problems. We note that this hardness is in different catalogs as for the maximum independent set and maximum induced matching problems (shown in \Cref{sec: hardness of induced matching}): our inapproximability result shows that any $n^{\delta}$-approximation algorithm for the $k$-hypergraph pricing problem (both \UDP and \SMP) must run in time at least $2^{(\log m)^{\frac{1-\delta-\epsilon}{\delta}}}$ for any constant $\delta, \epsilon >0$.
This almost matches the running time of  $O\left(2^{(\log{m})^{\frac{1-\delta}{\delta}}\log\log{m}}\poly(n,m)\right)$ presented in \Cref{sec:algo}.

%

\begin{theorem}
\label{thm: pricing-subexpo}
Consider the $k$-hypergraph pricing problem (with either \SMP or \UDP buying rule).
For any $\delta>0$ and a sufficiently small $\epsilon: \epsilon < \epsilon_0(\delta)$,
every $n^{\delta}$ approximation algorithm for \UDP (resp., \SMP) must run in time at least $2^{(\log m)^{\frac{1-\delta-\epsilon}{\delta}}}$ unless the ETH is false.  
\end{theorem}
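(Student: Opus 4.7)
The plan is to derive this subexponential lower bound by reusing the reduction chain already built for Theorem~\ref{thm:main hardness}, namely the intermediate hardness of Lemma~\ref{thm:main:hardness}, and then carefully translating the ETH lower bound from the SAT parameter $N$ into the new parameter $\log m$. The key insight is that in Lemma~\ref{thm:main:hardness} the degree parameter $d$ is a free parameter: the hardness gap is $d^{1-O(\epsilon)}$, the number of items is $n \le N^{1+\epsilon} d^{1+\epsilon}$, and the number of consumers is $m \le d^{O(d)} N^{1+\epsilon}$. To target an $n^{\delta}$-inapproximability result rather than a $d^{1-\epsilon}$-inapproximability one, I will choose $d$ so that $d^{1-O(\epsilon)} \approx n^{\delta}$.

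First I would solve for this $d$. Substituting $n \approx N d$ into the equation $d^{1-O(\epsilon)} = n^{\delta}$ yields $d = N^{\delta/(1-\delta-O(\epsilon))}$, and correspondingly $N = d^{(1-\delta-O(\epsilon))/\delta}$. With this choice, the reduction of Lemma~\ref{thm:main:hardness} transforms an $N$-variable 3SAT formula into a pricing instance on which an $n^{\delta}$-approximation would decide satisfiability. Next I would compute $\log m$. Since $m \le d^{O(d)} N^{1+\epsilon}$, and the $d^{O(d)}$ term dominates once $d$ is moderately large, we have $\log m = O(d \log d)$, so $d = \Omega(\log m / \log\log m) = (\log m)^{1 - o(1)}$. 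Plugging this into the expression for $N$ gives
\[
N \;=\; d^{(1-\delta-O(\epsilon))/\delta} \;=\; (\log m)^{(1-\delta-O(\epsilon))/\delta}.
\]

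Now I would close the argument by contradiction with ETH. Suppose an algorithm achieves an $n^{\delta}$-approximation for the $k$-hypergraph pricing problem in time $T(n,m) < 2^{(\log m)^{(1-\delta-\epsilon)/\delta}}$. Composed with the (polynomial-time) reduction of Lemma~\ref{thm:main:hardness}, this would yield a randomized algorithm deciding 3SAT in time $\poly(n,m) + T(n,m)$. By the above calculation this is $2^{o(N)}$, which violates ETH once $\epsilon$ is chosen sufficiently small relative to $\delta$ (the $O(\epsilon)$ slack in the exponent of $N$ is absorbed into the $\epsilon$-term of the statement). The only thing to double-check is that the reduction of Lemma~\ref{thm:main:hardness} is still applicable for the chosen $d$, i.e.\ that $d$ lies in the admissible range $d_0 \le d(N) \le N^{1-\epsilon}$; this follows for any $\delta < 1/2$ (and in any case we may take $\delta$ smaller than $1/2$ without loss since larger $\delta$ is an easier hardness statement).

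The main obstacle, and the place where I would spend the most care, is the parameter accounting: keeping the lower-order slack terms ($O(\epsilon)$ losses in the hardness gap, the $\log\log m$ factor relating $d$ to $\log m$, and the $N^{1+\epsilon}$ vs.\ $d^{O(d)}$ comparison in $\log m$) organized so that the final exponent lands at $(1-\delta-\epsilon)/\delta$ as stated, and in particular to verify that the polynomial factor $\poly(n,m)$ coming from the reduction is negligible compared to $2^{\Omega(N)}$. Once these bookkeeping details are handled, the theorem follows from the ETH essentially mechanically.
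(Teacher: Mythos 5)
Your proposal is correct and follows essentially the same strategy as the paper's own proof: apply Lemma~\ref{thm:main:hardness} with $d$ chosen so that the hardness gap $d^{1-O(\epsilon)}$ matches $n^\delta$ (equivalently $d \approx N^{\delta/(1-\delta)}$, which is what the paper's implicit choice $d = n^{\delta(1+10\epsilon)}$ works out to), then observe that $\log m = \Theta(d\log d)$ so the claimed running time $2^{(\log m)^{(1-\delta-\epsilon)/\delta}}$ translates to $2^{o(N)}$, contradicting ETH. The only stylistic difference is that you parametrize $d$ explicitly as a function of $N$ and invert to get $N$ in terms of $\log m$, whereas the paper writes $d$ as a function of $n$ and bounds $\log m$ in terms of $N$ directly; the bookkeeping caveats you flag (the $\log\log m$ slack, the $O(\epsilon)$ absorbed by choosing the Lemma's $\epsilon'$ much smaller than the theorem's $\epsilon$, and the admissible range $d \le N^{1-\epsilon}$ forcing $\delta < 1/2$) are exactly the points the paper also handles.
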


To see the implication of this theorem, we may try plugging in $\delta =1/3$, and this corollary says that if we want to get $n^{1/3}$ approximation for the pricing problem, it would require the running time at least $m^{\log^{1-\epsilon} m}$, assuming the ETH. 

\begin{proof}

Fix $\delta <1/2$. Let $\epsilon$ be as in \Cref{thm:main:hardness}. 
Assume (for contradiction) that we have an $n^{\delta}$ approximation algorithm $\aset$ for \UDP (resp., \SMP) that runs in time $2^{(\log m)^{\frac{1-\delta-100\epsilon}{\delta}}}$.
We apply a reduction in Lemma~\ref{thm:main:hardness} with $d= n^{\delta(1+10\epsilon)}$.
Then the algorithm $\aset$ can distinguish between the \yi and \ni, thus deciding the satisfiability of SAT. 
Now, we only need to analyze the running time of the algorithm and show that the algorithm runs in time $O(2^{N^{1-\epsilon}})$, which will contradict the ETH. 

Since we have $n \leq d^{1+\epsilon} N^{1+\epsilon}$, by plugging in the value of $d = n^{\delta(1+10\epsilon)}$, we get $n \leq n^{\delta(1 +20 \epsilon)} N^{1+\epsilon}$, implying that $n \leq N^{1+\delta+40 \epsilon}$.  
Now, we also plug in the value of $d$ into $m \leq 2^{d^{1+\epsilon}} N^{1+\epsilon}$, and we get 
\[
  m \leq 2^{n^{\delta(1+20\epsilon )}} \leq 2^{N^{\delta(1+\delta+40 \epsilon)}}
\]
Hence, we have $\log m \leq N^{\delta(1+\delta+40 \epsilon)}$, implying 
the running time of $\log^{\frac{1-\delta-100\epsilon}{\delta}} m \leq N^{1-10\epsilon}$, which is subexponential in the size of SAT instance, contradicting ETH.   
\end{proof}

\section{Approximation Scheme for $k$-Hypergraph Pricing}
\label{sec:algo}

In this section, we present an approximation scheme for the $k$-hypergraph pricing problem, which works for both \UDP and \SMP buying rules. 
Throughout, we denote by $n$ and $m$ the number of items and the number of consumers, respectively.
For any parameter $\delta$, our algorithm gives  an approximation ratio of $O(n^{\delta})$ and runs in time
$O(2^{(\log{m})^{\frac{1-\delta}{\delta}}\log\log{m}}\poly(n,m))$.

In the underlying mechanism, 
we employ as subroutines an $O(\log{m})$-approximation algorithm for \UDP (resp., \SMP) and an $O((\log{m})^{n})$-time constant-approximation algorithm for \UDP (resp., \SMP) as stated in the following two lemmas. 

\begin{lemma}[\cite{GuruswamiHKKKM05}]
\label{lmm:log-m-algo}
There is an $O(\log{m})$-approximation algorithm for \UDP
(resp., \SMP), where $m=|\cset|$ is the number of consumers.
\end{lemma}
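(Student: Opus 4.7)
The plan is to analyze a simple \emph{uniform pricing} algorithm: enumerate a polynomially-sized set $Q$ of candidate prices, set every item to price $q$ for each $q \in Q$, compute the resulting revenue, and return the best such assignment. Natural choices for $Q$ are $\set{B_c : c \in \cset}$ for \UDP and $\set{B_c/|S_c| : c \in \cset}$ for \SMP; in both cases $|Q| \leq m$, and since evaluating the revenue of a fixed uniform price takes $O(nm)$ time, the total running time is $\poly(n,m)$.

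For the analysis, fix an optimal assignment $p^*$ with revenue $\opt$, and for each paying consumer $c$ define $\pi_c = \pay_c(p^*)$ in the \UDP case and $\pi_c = \pay_c(p^*)/|S_c|$ in the \SMP case. Partition the paying consumers into geometric buckets $\B_i = \set{c : \pi_c \in [2^i, 2^{i+1})}$. After discarding consumers whose aggregate contribution is at most $\opt/\poly(m)$ (which may be handled by first rescaling so that budgets lie in a polynomially bounded range), only $O(\log m)$ buckets are nonempty, so by averaging some bucket $\B_{i^*}$ contributes total revenue at least $\opt/O(\log m)$.

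The crux is then to verify that the uniform price $q^* = 2^{i^*}$ recoups a constant fraction of the revenue from $\B_{i^*}$. For \UDP, each $c \in \B_{i^*}$ satisfies $B_c \geq \pay_c(p^*) \geq 2^{i^*} = q^*$, so she still buys her cheapest item at uniform price $q^*$ and contributes $q^* \geq \pay_c(p^*)/2$. For \SMP, each $c \in \B_{i^*}$ satisfies $B_c \geq \pay_c(p^*) \geq 2^{i^*} |S_c| = q^* |S_c|$, so she can afford the full bundle at uniform price $q^*$ and contributes $q^* |S_c| \geq \pay_c(p^*)/2$. Summing over $\B_{i^*}$ yields revenue $\geq \opt/O(\log m)$ at uniform price $q^*$.

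The main obstacle is a small technical step: reconciling the algorithm's discrete enumeration with the analysis's geometric bucketing, i.e., ensuring that some $q \in Q$ is usable as the ``$q^*$'' above. This is straightforward because, as a function of $q$, the revenue collected at a uniform price is a step function whose breakpoints lie exactly at the values $B_c$ for \UDP (resp.\ $B_c/|S_c|$ for \SMP): one may therefore replace $q^*$ by the largest element of $Q$ that does not exceed it without losing any of the revenue obtained from $\B_{i^*}$. Combined with the constant-factor bound above, this yields an $O(\log m)$-approximation.
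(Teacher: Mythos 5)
Note first that the paper does not prove this lemma; it is imported from \cite{GuruswamiHKKKM05} and used as a black box in \Cref{sec:algo}, so there is no internal proof to compare against. Your single-price approach is the same one behind the cited result; the standard write-up replaces your geometric buckets by a harmonic-sum bound (sort $B_1\ge\dots\ge B_m$ and use $\sum_j B_j \le H_m\max_j jB_j$), which is interchangeable with what you do.

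Two steps need repair. (i)~The count of ``only $O(\log m)$ nonempty buckets'' is not justified for \SMP by rescaling budgets alone: you bucket on $\pi_c = \pay_c(p^*)/|S_c|$, and after discarding consumers with $\pay_c(p^*) < B_{\max}/(2m)$ (which loses at most $\opt/2$, since $\opt\ge B_{\max}$) the remaining $\pi_c$ lie in $\bigl[B_{\max}/(2m\max_c|S_c|),\, B_{\max}/\min_c|S_c|\bigr]$, i.e.\ $O(\log m + \log(\max_c|S_c|/\min_c|S_c|))$ buckets. This collapses to $O(\log m)$ precisely because the paper's $k$-hypergraph formulation fixes $|S_c|=k$ for every consumer; you should invoke that explicitly (for \UDP, where $\pi_c=\pay_c(p^*)\in[B_{\max}/(2m),B_{\max}]$, your count is fine). (ii)~The reconciliation in your last paragraph is backwards: the uniform-price revenue $R(q)$ is not a step function of $q$. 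In \UDP it equals $q\cdot|\set{c : B_c\ge q}|$, which is piecewise linear, strictly increasing between consecutive breakpoints, and jumps down at each $B_c$ (similarly for \SMP with breakpoints $B_c/|S_c|$). Replacing $q^*$ by the largest $q\in Q$ \emph{below} $q^*$ keeps every $c\in\B_{i^*}$ paying but \emph{lowers} each payment, so the revenue from $\B_{i^*}$ is not preserved. The fix is easy: either note that $R$ is maximized at a breakpoint, hence $\max_{q\in Q}R(q)=\max_{q>0}R(q)\ge R(q^*)$; or choose $q' = \min_{c\in\B_{i^*}} B_c$ (resp.\ $q'=\min_{c\in\B_{i^*}} B_c/|S_c|$ for \SMP), which lies in $Q$, satisfies $q'\ge q^*$, and keeps every $c\in\B_{i^*}$ paying at least $q^*$ (resp.\ $q^*|S_c|$). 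With these two corrections the argument is complete and matches the cited reference.
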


\begin{lemma}
\label{lmm:expotime-algo}
There is a constant-factor approximation algorithm for \UDP
(resp., \SMP) that runs in time $O((\log{nm})^{n}\poly(n,m))$.
\end{lemma}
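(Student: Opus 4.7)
The plan is to enumerate a structured family of price vectors of size $|\mathcal{P}|^{n}$, where $|\mathcal{P}| = O(\log(nm))$, evaluate the profit of each in $\poly(n,m)$ time, and return the best. The core of the argument is to exhibit such a family that always contains a constant-factor approximation to $\opt(\cset,\iset)$.

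I would first rescale all budgets so that $\max_{c\in\cset} B_c = 1$, noting that selling to the top-budget consumer witnesses $\opt \geq 1$, and observe that without loss of generality $p^*(v) \leq 1$ for any optimal $p^*$ (higher prices are useless). Set $L = \lceil \log_2 (4nm) \rceil$ and $\mathcal{P} = \{0\} \cup \{2^{-i} : 0 \leq i \leq L\} \cup \{\infty\}$, where $\infty$ encodes ``item unavailable'' (used only in the \UDP case). The algorithm iterates over every function $p : \iset \to \mathcal{P}$, evaluates its profit in $O(nm)$ time directly from the definition, and outputs the maximum. The total running time is $|\mathcal{P}|^{n} \cdot \poly(n,m) = O((\log(nm))^{n} \poly(n,m))$, matching the claim.

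For correctness it suffices to exhibit one $p' \in \mathcal{P}^{\iset}$ of profit $\Omega(\opt)$. Given optimal $p^*$ with $p^*(v) \leq 1$, define $p'(v) = 2^{\lfloor \log_2 p^*(v) \rfloor}$ whenever $p^*(v) \geq 2^{-L}$, and $p'(v) = 0$ (for \SMP) or $p'(v) = \infty$ (for \UDP) otherwise. The \SMP case is routine: $p'(v) \leq p^*(v)$ preserves affordability of every bundle, $p'(v) \geq p^*(v)/2$ on surviving items costs at most a factor of two per payment, and the total revenue lost by items collapsed to $0$ is at most $\sum_c |S_c| \cdot 2^{-L} \leq mn \cdot 2^{-L} \leq 1/4 \leq \opt/4$; hence $\prof(p') \geq \opt/2 - \opt/4 \geq \opt/4$.

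The main obstacle is the \UDP case: collapsing items to $\infty$ can destroy precisely the item on which a consumer's original minimum was attained, potentially pushing her new minimum above her budget. I would first discard ``small'' consumers, namely those with $\pay_c^* < 1/(2m)$; their total contribution is at most $1/2 \leq \opt/2$. For each surviving consumer $c$, pick $v^* \in \arg\min_{v \in S_c} p^*(v)$; since $p^*(v^*) = \pay_c^* \geq 1/(2m) > 2^{-L}$, $v^*$ survives rounding with $p'(v^*) \in [p^*(v^*)/2,\, p^*(v^*)] \subseteq (0, B_c]$, so $v^*$ remains affordable under $p'$. Letting $v'$ be the new winner, $p'(v') \leq p'(v^*) \leq B_c < \infty$ forces $p^*(v') \geq 2^{-L}$, hence $p'(v') \geq p^*(v')/2 \geq p^*(v^*)/2 = \pay_c^*/2$ using that $v^*$ is the $p^*$-argmin over $S_c$. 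Summing over surviving consumers yields $\prof(p') \geq \opt/4$, completing the plan.
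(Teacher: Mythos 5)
Your proposal is correct, and for \SMP it is essentially the paper's argument: enumerate all price functions into a geometric grid of $O(\log(nm))$ levels plus $\{0\}$, then show that rounding the optimum down to the grid loses only a constant factor (small prices zeroed out cost at most a $1/\alpha$ fraction of $\opt$; rounding to the nearest power of $\alpha$ costs another factor $\alpha$). Where you genuinely diverge from the paper is the \UDP case. The paper dismisses it as ``analyzed analogously,'' but the direct analogue of its $p_1$-step --- sending prices below $W/(\alpha nm)$ to $0$ --- does \emph{not} work for \UDP: a consumer whose bundle contains one zeroed-out item now grabs that free item and her entire payment $\pay_c^*$ is lost, which is not bounded by $W/(\alpha nm)$ per consumer. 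You correctly identify this obstacle and repair it with two devices the paper's proof does not use: adding an $\infty$ (``unavailable'') level to the grid so that collapsed items vanish rather than become free, and pre-discarding consumers paying less than $1/(2m)$ so that each surviving consumer's $p^*$-argmin is guaranteed to survive rounding, after which the ``new argmin is also not collapsed and $p^*(v')\ge p^*(v^*)$'' argument gives the factor-$2$ bound per consumer. (The alternative repair closer to the paper's template would be to round small prices \emph{up} to the smallest nonzero grid level rather than down to $0$, which also works and stays within the paper's price set; your $\infty$-based version is equivalent in effect and arguably cleaner to argue.) So: same algorithm, same \SMP analysis, but your \UDP analysis fills a real gap in the paper's exposition rather than merely paraphrasing it.
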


For the sake of presentation flow, we defer the proof of Lemma~\ref{lmm:expotime-algo} to Section~\ref{sec:algo:constant-factor-large-running-time}.
Now, we present our approximation scheme and its analysis.

\subsection{Approximation Scheme}
\label{sec:algo:approx-scheme}
We exploit a trade-off between the approximation ratio and the running time.
In particular, the $O(\log{m})$-approximation algorithm from \Cref{lmm:log-m-algo}, denoted by $\aset_1$, always runs in polynomial time but yields a bad approximation ratio in terms of $n$ when $n^\delta\ll \log{m}$.
In contrast, the $O((\log{nm})^{n}\poly(n,m))$-time $O(1)$-approximation algorithm from \Cref{lmm:expotime-algo}, denoted by $\aset_2$, has a slow running time but always gives a good approximation ratio.
So, we take advantage of the trade-off between the running time and approximation ratio by selecting one of these two algorithms according to the values of $n^\delta$ and $\log{m}$.
To be precise, our approximation scheme takes as input a set of consumers $\cset$, a set of items $\iset$ and a parameter $\delta: 0 < \delta < 1$.
If the number of items is large, i.e., $n^\delta > \log{m}$, then we apply the $O(\log{m})$-approximation algorithm $\aset_1$.
Otherwise, we partition the set of $m$ items into $n^\delta$ (almost) equal subsets, namely,
$\iset_1,\ldots,\iset_{n^{\delta}}$, and we apply the algorithm $\aset_2$ to each subinstance $(\cset,\iset_i)$, for $i=1,\ldots,n^\delta$.
We then sell to consumers the set $\iset_{i^*}$ that yields a maximum revenue over all $i=1,\ldots,n^\delta$.
%
%
Here the key idea is that one of the sets $\iset_i$ gives a revenue of at least $\opt/n^\delta$ in the optimal pricing, where $\opt$ is the optimal revenue.
So, by choosing the set that maximizes a revenue, we would get a revenue of at least $O(\opt/n^\delta)$ (because $\aset_2$ is an $O(1)$-approximation algorithm).
Since we have two different buying rules, \UDP and \SMP, there is some detail that we need to adjust.
When we assign the prices to all the items, we need to ensure that the consumers will (and can afford to) buy the set of items we choose.
So, we price items in the set $\iset_{i^*}$ by a price function returned from the algorithm $\aset_2$, and we apply two different rules for filling the prices of items in $\iset\setminus\iset_{i^*}$ for the cases of \UDP and \SMP.
In \UDP, we price items in $\iset\setminus\iset_{i^*}$ by $\infty$ to guarantee that no consumers will buy items outside $\iset_{i^*}$.
In $\SMP$, we price items in $\iset\setminus\iset_{i^*}$ by $0$ to guarantee that consumers can afford to buy the whole set of items that they desire (although we get no profit from items outside $\iset_{i^*}$).
The running time of the algorithm $\aset_2$ in general is large, but since $n^\delta < \log{m}$ and each subinstance contains at most $n^{1-\delta}$ items, we are able to guarantee the desired running time.
Our approximation scheme is summarized in Algorithm~\ref{algo:main-algo}.



\begin{algorithm}
\label{algo:main-algo}
\caption{{\sf Pricing}($\cset$,$\iset$,$\delta$)}
\begin{algorithmic}[1]
\If{$n^{\delta}> \log{m}$}
  \State Apply an $O(\log{m})$ approximation algorithm for \UDP
  (resp., \SMP) from \Cref{lmm:log-m-algo}.
  \State \Return The price function $p$ obtained by the
  $O(\log{m})$-approximation  algorithm.
\Else
  \State Partition $\iset$ into $n^\delta$ equal sets, namely
    $\iset_1,\iset_2,\ldots,\iset_{n^\delta}$.
    So, each $\iset_j$ has size $|\iset_j|\leq n^{1-\delta}$.
  \For{$j=1$ to $n^\delta$}
    \State Apply~\Cref{lmm:expotime-algo}
      on the instance $\Pi_j = (\cset, \iset_j)$, i.e., restricting  the set of items to $\iset_j$.
  \EndFor
  \State Choose an instance $\Pi_{j^*}$ that maximizes the revenue
    over all $j=1,2,\ldots,n^{\delta}$.
  \State Let $p$ be the price function obtained by solving an
    instance $\Pi_{j^*}$.
  \State For \UDP (resp. \SMP), set the prices of all the items in
    $\iset \setminus \iset_{j^*}$ to $\infty$ (resp. $0$).
  \State \Return the price function $p$.
\EndIf

\end{algorithmic}
\end{algorithm}

\subsection{Cost Analysis}
\label{sec:algo:cost-analysis}
First, we analyze the approximation guarantee of our algorithm.
If $n^{\delta} > \log{m}$, then our algorithm immediately gives
$O(n^{\delta})$-approximation.
So, we assume that $n^{\delta}\geq O(\log{m})$.
We will use the following two lemmas. 

\begin{lemma}\label{lemma: extension of prices} 
For any instance $(\cset, \iset)$ of \UDP (resp. \SMP), let $\iset'$ be any subset of $\iset$. Let $p'$ be a price function that collects a revenue of $r$ from $(\cset, \iset')$. Then, the price function $p: \iset \rightarrow \R$ obtained by setting $p(i) = \infty$ (resp. $p(i) = 0$) for $i \in \iset \setminus \iset'$ and $p(i) = p'(i)$ for $i \in \iset'$ gives revenue at least $r$ for the instance $(\cset, \iset)$.
\end{lemma}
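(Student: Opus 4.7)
The plan is to argue case-by-case on the buying rule, showing that each consumer's contribution to the revenue in the extended instance $(\cset, \iset)$ under the price function $p$ is at least her contribution to the revenue in the restricted instance $(\cset, \iset')$ under $p'$. Summing over all consumers then gives the conclusion.

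For the \UDP case, I would fix an arbitrary consumer $c$ and observe that in the extended instance, the only items $i \in S_c$ that she can possibly afford are those with $p(i) = p'(i) \leq w(c)$, and these necessarily lie in $S_c \cap \iset'$ since every item in $S_c \setminus \iset'$ has been priced at $\infty$ (which, by assumption on the budgets, exceeds $w(c)$). Hence the cheapest affordable item seen by $c$ under $p$ coincides with the cheapest affordable item she sees among $S_c \cap \iset'$ under $p'$, i.e., her behaviour and payment in the extended instance is identical to her payment in the restricted instance. Summing over consumers yields a total revenue of at least $r$.

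For the \SMP case, I would again fix a consumer $c$ and compute the total cost of $S_c$ under the extended price function:
\[
\sum_{i \in S_c} p(i) \;=\; \sum_{i \in S_c \cap \iset'} p'(i) \;+\; \sum_{i \in S_c \setminus \iset'} 0 \;=\; \sum_{i \in S_c \cap \iset'} p'(i).
\]
Thus $c$ can afford $S_c$ under $p$ if and only if $\sum_{i \in S_c \cap \iset'} p'(i) \leq w(c)$, and in that case she pays exactly $\sum_{i \in S_c \cap \iset'} p'(i)$ in the extended instance. Whichever reasonable convention one adopts for how $c$ behaves in the restricted instance $(\cset,\iset')$ (either she must buy her entire bundle $S_c$ from the available items $\iset'$, or she simply buys $S_c \cap \iset'$ when affordable), her payment under $p'$ in the restricted instance is at most $\sum_{i \in S_c \cap \iset'} p'(i)$ whenever it is nonzero, and it is nonzero only if she can afford her desired items in the restricted instance. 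In every such case the affordability carries over to the extended instance, and her payment there is at least as large.

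Since the argument is almost immediate from the definitions, I do not expect any serious obstacle; the only mild subtlety is to state precisely what the phrase ``collects a revenue of $r$ from $(\cset,\iset')$'' means for a consumer $c$ whose desired set $S_c$ is not contained in $\iset'$, and to verify that the inequality goes in the right direction under both natural conventions. Once that is pinned down, the proof is a one-line calculation per consumer, and summation completes the argument.
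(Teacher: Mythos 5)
Your proof is correct and follows essentially the same consumer-by-consumer argument as the paper: under \UDP{} the infinite prices on $\iset\setminus\iset'$ leave each consumer's cheapest affordable item unchanged, and under \SMP{} the zero prices preserve the total cost of $S_c$. The paper's proof is more terse (it does not pause to address the convention for consumers with $S_c \not\subseteq \iset'$), but the underlying reasoning is the same.
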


\begin{proof}
We first prove the lemma for \UDP.
Consider the price function $p'$ that collects a revenue of $r$.
Notice that, under the price $p$, each customer $c \in \cset$ who has positive payment in $p'$ also pays for the same amount in $p$ 
(since items in $S_c \cap (\iset \setminus \iset')$ have infinite prices).

For \SMP, for each customer $c \in \cset$ who pays positive price in $p'$, we have by construction that $\sum_{i \in S_c} p(i) = \sum_{i\in S_c \cap \iset'} p'(i)$.
So, the customer $c$ can still afford the set and pays the same amount as in the subinstance $(\cset, \iset')$.
\end{proof}

The above lemma allows us to focus on analyzing the revenue obtained from the subinstance
$(\cset, \iset_{j^*})$.
Since we apply a constant-factor approximation algorithm to the instance $(\cset, \iset_{j^*})$, it suffices to show that $\opt(\cset, \iset_{j^*}) \geq \opt(\cset, \iset)/n^{\delta}$,
which follows from the following lemma.

\begin{lemma}\label{lemma: decomposition of instance}
Let $q$ be any positive integer. For any set of consumers $\cset$ and any partition of $\iset$ into $\iset = \bigcup_{j=1}^q \iset_j$, the following holds for \UDP (resp., \SMP)
\[
\opt(\cset, \iset) \leq \sum_{j=1}^q \opt(\cset, \iset_j)
\]
\end{lemma}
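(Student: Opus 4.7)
The plan is to take an optimal price function $p^*$ for $(\cset,\iset)$ and, for each part $\iset_j$ of the partition, exhibit a price function on the subinstance $(\cset,\iset_j)$ whose revenue accounts for a specific portion of $\opt(\cset,\iset)$, so that the contributions over $j=1,\ldots,q$ sum to at least $\opt(\cset,\iset)$. I will treat \UDP and \SMP separately since the accounting differs.

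For \UDP, let $p^*$ be an optimal price function for $(\cset,\iset)$. For every consumer $c$ that pays under $p^*$, let $i_c\in S_c$ be the item she buys, paying $p^*(i_c)$. A first observation (needed later) is that $p^*(i_c)=\min_{i\in S_c}p^*(i)$: if any $i\in S_c$ had price strictly less than $p^*(i_c)\leq B_c$, then $i$ would be affordable and cheaper, contradicting the choice of $i_c$. Because the $\iset_j$'s partition $\iset$, there is a unique index $j(c)$ with $i_c\in\iset_{j(c)}$. Now consider the subinstance $(\cset,\iset_{j(c)})$ with the restricted price $p^*\vert_{\iset_{j(c)}}$ (equivalently, set $p(i)=\infty$ for $i\notin\iset_{j(c)}$, as in Lemma~\ref{lemma: extension of prices}). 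Under this price, consumer $c$ sees the item set $S_c\cap\iset_{j(c)}\ni i_c$; since $i_c$ was the globally cheapest item in $S_c$, the cheapest affordable item in the smaller set $S_c\cap\iset_{j(c)}$ has price at least $p^*(i_c)$ (and at most $p^*(i_c)$ since $i_c$ itself is available), so $c$ contributes exactly $p^*(i_c)$. Therefore
\[
\opt(\cset,\iset_j)\ \geq\ \sum_{c:\, j(c)=j}p^*(i_c),
\]
and summing over $j$ gives $\sum_{j=1}^q\opt(\cset,\iset_j)\geq \sum_c p^*(i_c)=\opt(\cset,\iset)$.

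For \SMP, let $p^*$ again be optimal for $(\cset,\iset)$ and let $\cset^\star\subseteq\cset$ be the set of consumers that pay. For each $c\in\cset^\star$, $\sum_{i\in S_c}p^*(i)\leq B_c$, hence for every $j$, $\sum_{i\in S_c\cap\iset_j}p^*(i)\leq B_c$. Consider the subinstance $(\cset,\iset_j)$ with price $p^*\vert_{\iset_j}$ (equivalently, extend by $0$ outside $\iset_j$ as in Lemma~\ref{lemma: extension of prices}). Under this price, the total cost faced by $c$ equals $\sum_{i\in S_c\cap\iset_j}p^*(i)\leq B_c$, so $c$ still buys her set and pays exactly this amount. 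Thus
\[
\opt(\cset,\iset_j)\ \geq\ \sum_{c\in\cset^\star}\sum_{i\in S_c\cap\iset_j}p^*(i).
\]
Summing over $j$ and using that $\{\iset_j\}$ partitions $\iset$,
\[
\sum_{j=1}^q\opt(\cset,\iset_j)\ \geq\ \sum_{c\in\cset^\star}\sum_{i\in S_c}p^*(i)\ =\ \opt(\cset,\iset).
\]

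The main obstacle, if any, is conceptual rather than technical: one must pick the right interpretation of a ``subinstance'' $(\cset,\iset_j)$ (which Lemma~\ref{lemma: extension of prices} already fixes) and then charge each consumer's contribution to the subinstance containing her purchased item (for \UDP) or split her contribution across all subinstances touching $S_c$ (for \SMP). Once the charging is set up, the inequalities follow from the monotonicity properties of the two buying rules under restriction of available items.
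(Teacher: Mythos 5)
Your proof is correct and takes essentially the same approach as the paper's: charge each consumer's payment to the subinstance(s) containing the item(s) she actually buys under an optimal price $p^*$, then observe that restricting $p^*$ to $\iset_j$ recovers at least that charged revenue. The paper states this once for both buying rules via the phrase ``revenue obtained by $p^*$ from items in $\iset_j$''; you spell out the model-specific bookkeeping (for \UDP, $i_c$ stays the cheapest affordable item in the restricted set; for \SMP, the restricted bundle cost can only decrease), but the argument is the same.
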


\begin{proof}
Consider an optimal price function $p^*$ for $(\cset, \iset)$.
Fix some optimal assignment of items to customers with respect to $p^*$.
Now for each $j = 1,\ldots, q$, let $r_j$ be the revenue obtained by function $p^*$ from items in $\iset_j$, so we can write $\sum_{j=1}^q r_j = \opt(\cset, \iset)$. 
Notice that, in each sub-instance $(\cset, \iset_j)$, we can restrict the price function $p^*$ onto the set $\iset_j$ and obtain the same revenue. 
This means that $\opt(\cset, \iset_j) \geq r_j$, implying that 
\[\sum_{j=1}^q \opt(\cset, \iset_j) \geq \sum_{j=1}^q r_j =\opt(\cset, \iset)\]  
as desired. 
\end{proof}

\subsection{Running Time Analysis}
\label{sec:algo:running-time}

If $n^{\delta}>\log{m}$, then our algorithm runs in polynomial-time 
(since we apply a polynomial-time $O(\log{m})$-approximation algorithm).
So, we assume that $n^{\delta}\leq \log{m}$.
In this case, we run an algorithm from \Cref{lmm:expotime-algo} on $n^\delta$ sub-instances having $n^{1-\delta}$ items each. It follows that the running time of this algorithm is
\begin{align*}
 O\left(n^{\delta} (\log{nm})^{n^{1-\delta}}\poly(n^{1-\delta},m)\right)
 &=
 O\left(2^{(\log{m})^{\frac{1-\delta}{\delta}}\log\log{nm}}\poly(n,m)\right). 
\end{align*}

The equality follows since $\log{m}\geq n^{\delta}$ implies that $n^{1-\delta}\leq (\log m)^{(1-\delta)/\delta} $. Thus, for any constant $\delta>0$, our algorithm runs in
quasi-polynomial time.

\subsection{Polynomial-Time $O(\sqrt{n\log{n}})$-Approximation Algorithm}
\label{sec:algo:poly-time-approx}
Now, we will set $\delta$ so that our approximation scheme runs in polynomial-time.
To be precise, we set $\delta$ so that $n^\delta=\sqrt{n\log{n}}$.
It follows that our algorithm yields an
approximation guarantee of $O(\sqrt{n\log{n}})$.
%
The running time of our algorithm is (note that $n^{1-\delta} = \sqrt{\frac{n}{\log{n}}}$)
\begin{align*}
O\left(n^{\delta}\cdot (\log{nm})^{n^{1-\delta}}\poly(n,m)\right)
&=
 O\left(2^{\frac{\sqrt{n}}{\sqrt{\log{n}}}\log\log{nm}}\poly(n,m)\right)\\
&=
 O\left(2^{\frac{\sqrt{n\log{n}}}{\log{n}}\log\log{nm}}\poly(n,m)\right)
\end{align*}

If $\sqrt{n\log{n}} \leq \log{nm}/\log\log{nm}$, then we are done because
the running time of the algorithm will be $O(2^{\log{nm}}\poly(n,m))=\poly(n,m)$.
Thus, we assume that $\sqrt{n\log{n}} > \log{nm}/\log\log{nm}$.
So, we have
\[
\log{n} > \log\left(\frac{\log{nm}}{\log\log{nm}}\right)
        = \log\log{nm} - \log\log\log{nm}
        \geq  \frac{1}{2}\log\log{nm}
\]
This means that $\log{n}/\log\log{nm}\geq 1/2$.
Thus, the running time of our algorithm is
\[
 O\left(2^{\frac{\sqrt{n\log{n}}}{\log{n}}\log\log{nm}}\poly(n,m)\right)
 \leq O\left(2^{2\sqrt{n\log{n}}}\poly(n,m)\right)
 \leq O(2^{\log{m}}\poly(n,m))
 = \poly(n,m)
\]

The last inequality follows since
$n^{\delta} = \sqrt{n\log{n}} \leq \log{m}$.
Thus, in polynomial-time, our approximation scheme yields an approximation ratio of $O(\sqrt{n\log{n}})$ for both \UDP and \SMP.



\subsection{A Constant-Factor Approximation Algorithm with a Running Time of\\ $O((\log{nm})^{n}\poly(n,m))$ (Proof of Lemma~\ref{lmm:expotime-algo})}
\label{sec:algo:constant-factor-large-running-time}

In this section, we present an $O(1)$-approximation algorithm for \UDP and \SMP that runs in $O((\log{nm})^{n}\poly(n,m))$ time.
%
%
Our algorithm reads as input an instance $(\cset,\iset)$ of \SMP (resp., \UDP) and a parameter $\alpha>1$. 
Let $W$ be the largest budget of the consumers in $\iset$, and 
define a set
\[
\pset = \set{W,\frac{W}{\alpha^1},\frac{W}{\alpha^2},\ldots,\frac{W}{\alpha^{\lceil \log_\alpha(\alpha nm) \rceil}},0}
\]
Our algorithm tries all the possible price functions that take values from the set $\pset$ and returns as output a price function $p$ that maximizes the revenue
(over all the sets of price functions $p:\iset\rightarrow\pset$).
It is easy to see that the running of the algorithm is $O(\lceil\log_{\alpha}{nm}+3\rceil^n\poly(n,m))$.
Thus, we can set $\alpha=2+\epsilon$ for some $\epsilon>0$ so that 
the running time is $O((\log{nm})^n\poly(n,m))$.\bundit{ 
Al thought it does not matter much, we have to be precise when saying something about the base of the exponential since $(\log m + 3)^n = \log^n{m} \cdot 3^n$. 
}
We claim that our algorithm gives an approximation ratio of $\alpha^2/(\alpha-1)$ (proved in \Cref{sec:algo:constant-factor-large-running-time:cost-analysis}), thus yielding a constant-factor approximation.

\subsubsection{Cost Analysis} 
\label{sec:algo:constant-factor-large-running-time:cost-analysis}

We will focus on the case of \SMP.
The case of \UDP can be analyzed analogously.
Fix any optimal price function $p^*$, which yields a revenue of $\opt$.
We will construct from $p^*$ a price function $p'$ that takes values from the set $\pset$ by rounding down each $p^*(i)$ to its closest value in $\pset$. 
Since $p'(i)\in\pset$ for all $i\in\iset$, we can use $p'$ to lower bound the revenue that we could obtain from our algorithm.
%
%
For the ease of analysis, we will do this in two steps.
First, we define a price function $p_1$ by setting 
\[
p_1(i) = \left\{\begin{array}{ll}
           0         & \mbox{ if $p^*(i) < \frac{W}{\alpha nm}$ } \\
           p^*(i)     & \mbox{ otherwise}
           \end{array}\right.
\]
In this step, we lose a revenue of at most $\opt/\alpha$.
This is because we have $m$ consumers, and each consumer wants at most $n$ items.
So, the revenue loss is at most $nm\cdot W/(\alpha nm) \leq \opt/\alpha$ (since $\opt \geq W$).
That is, $p_1$ yields a revenue of at least $(1-1/\alpha)\opt$. 
Next, we define a price function $p_2$ from $p_1$ by setting
\[
p_2(i) = \left\{\begin{array}{ll}
           0         & \mbox{ if $p_1(i) = 0$ (i.e., $p^*(i) < \frac{W}{\alpha nm}$) } \\
           \frac{W}{\alpha^j} 
           \mbox{, for some $j:\frac{W}{\alpha^{j+1}} < p^*(i) \leq \frac{W}{\alpha^j}$}
                     & \mbox{otherwise}
           \end{array}\right.
\]
So, $p_2(i)=p_1(i)$ for all $i$ such that $p^*(i) < W/(\alpha nm)$.
Observe that $p_2(i) \geq p_1(i)/\alpha$ for all $i\in\iset$. 
Hence, the revenue obtained from $p_2$ is within a factor of $1/\alpha$ of the revenue obtained from $p_1$.
Thus, $p_2$ yields a revenue of at least 
$(1/\alpha - 1/\alpha^2)\opt$.
Thus, the approximation ratio of our algorithm is $O(\alpha^2/(\alpha-1))$, for all $\alpha>1$. 
In particular, by setting $\alpha=2+\epsilon$ for $\epsilon>0$, we have a $(4+\epsilon^2)$-approximation algorithm.

\subsubsection{Remark: An Exact-Algorithm for \UDP} 
\label{sec:algo:constant-factor-large-running-time:exact}

In this section, we observe that for the case of \UDP, an optimal solution can be obtained in time $O(n!\poly(n,m))$ by using a {\em price ladder constraint}. 
To be precise, the price ladder constraint says that, given a permutation $\sigma$ of items, the price of an item $\sigma(i)$ must be at most the price of an item $\sigma(i+1)$, i.e., $p({\sigma(i)})\leq p({\sigma(i+1)})$, for all $i=1,\ldots,n-1$.
Briest and Krysta~\cite{BriestK11} showed that \UDP with the price ladder constraint (i.e., the permutation is additionally given as an input) is polynomial-time solvable.
Thus, we can solve any instance of \UDP optimally in time $O((n!)\poly(n,m))$ by trying all possible permutations of the items.

\section{Open Problems}
There are many problems left open. The most fundamental one in algorithmic pricing (from the perspective of approximation algorithms community) is perhaps the {\em graph pricing problem} which is the $k$-hypergraph pricing problem where $k=2$. 
Currently, only a simple $4$-approximation algorithm and a hardness of $2-\epsilon$ assuming the Unique Game Conjecture, are known~\cite{KhandekarKMS09}. 
It is interesting to see if the techniques in this paper can be extended to an improved hardness (which will likely to require even tighter connections). 
Other interesting problems that seem to be unachievable using the current techniques are the {\em Stackelberg network pricing}, {\em Stackelberg spanning tree pricing}, and {\em tollbooth pricing} problems. 

Another interesting question is whether our techniques can be used to make a progress in the parameterized complexity domain. In particular, it was conjectured in~\cite{Marx13,ChitnisTK13} that 
the maximum independent set problem parameterized by the size of the solution does not admit a fixed parameter tractable approximation ratio $\rho$ for any function $\rho$.  
It might be also interesting to improve our $2^{n^{1-\epsilon}/ r^{1+\epsilon}}$ time lower bound for $r$-approximating the maximum independent set and induced matching problems, e.g., to $2^{(n\cdot \polylog r)/(r\cdot \polylog n)}$. 
Other (perhaps less important) open problems also remain: (1) Is the ETH necessary in proving the lower bound of this problem? For example, can we get a better approximation guarantee for the $k$-hypergraph pricing problem if there is a subexponential-time algorithm for solving SAT (see, e.g.,~\cite{CyganDLMNOPSW12} for similar questions in the exact algorithm domain)? (2) Is it possible to obtain an $r$-approximation algorithm in $2^{n/r}$ time for the maximum induced matching problem in general graphs? 


%

\bibliographystyle{abbrv}
\bibliography{pricing}

\newpage
\appendix
\section*{Appendix}

\section{FGLSS and Dispersers Replacement}\label{sec:FGLSS detail}

Beside the size of the graph as discussed, the FGLSS reduction has another property, which was observed by Trevisan~\cite{Trevisan01}: the FGLSS graph $G$ is formed by a union of $N'$ bipartite cliques where $N'$ is the number of variables of $\phi_2$, i.e., $G=\bigcup_{i=1}^{N'}G_i$.
The dispersers replacement indeed replaces each bipartite clique $G_i=(A_i,B_i,E)$ with a $d$-regular bipartite graph (with a certain property).
This techniques involves the following parameters of $\phi_2$.
\begin{itemize}
\item[(1)] Linearity: Each constraint of $\phi_2$ is linear.
\item[(2)] Min-Degree $\delta$: The minimum number of clauses that each variable participates in.
\item[(3)] Clause-Size $q$: The number of literal in each clause.
\end{itemize}

Let $H=\bigcup_{i=1}^{N'}H_i$, where $H_i=(A_i,B_i,F_i)$ is a disperser, denote the graph obtained after the disperser replacement.
Then the followings are parameters transformation:

{\footnotesize
\begin{tabular}{lclcl}
{\bf Properties of $\phi_2$} & &
{\bf Properties of $G=\bigcup_{i=1}^{N'} G_i$} & &
{\bf Properties of $H=\bigcup_{i=1}^{N'} H_i$}\\
All constraints are linear.
& $\Rightarrow$ &
$\forall i$, $G_i=(A_i,B_i,E_i)$ has $|A_i|=|B_i|$.
& $\Rightarrow$ &
$\forall i$, $H_i=(A_i,B_i,F_i)$ has $|A_i|=|B_i|$.\\
Min CSP Degree $\delta$.
& $\Rightarrow$ &
$\min_{i=1}^{N'} |V(G_i)| \geq \delta$.
& $\Rightarrow$ &
$\min_{i=1}^{N'} |V(H_i)| \geq \delta$.\\
Clause-Size $q$.
& $\Rightarrow$ &
$\Delta(G) \leq q \cdot \max_{i=1}^{N'} \Delta(G_i)$.
& $\Rightarrow$ &
$\Delta(H) \leq q \cdot d$.\\
\end{tabular}
}

We can set $d$ so that $\Delta(H)\approx k$ and get $\Delta$-hardness. The linearity of $\phi_2$ is required because we need each bipartite clique $G_i$ to be balanced. Also, because the construction of dispersers is randomized, we need $\delta\geq N^{\epsilon}$, for some constant $\epsilon:0 < \epsilon < 1$, to guarantee that each disperser can be constructed with high probability for all hardness parameters $k$, which thus means that we can apply the union bound to show the success probability of the whole construction.
(When $k=O(1)$ or $k=\poly(N)$, the property is not needed.)
To guarantee that $\delta\geq N^{\epsilon}$ for all $k$, we modify the CSP instance $\phi_1$ by making $N^{\epsilon}$ copies of each clauses in Step (4).






\end{document}